\documentclass[reqno]{amsart}

\usepackage[T1]{fontenc}
\usepackage[utf8]{inputenc}
\usepackage{times}
\usepackage{amsmath,amssymb,mathtools,mathrsfs}
\usepackage[top=1.5in,bottom=1.43in,left=1.25in,right=1.25in]{geometry}
\usepackage[colorlinks=true,citecolor=blue,linkcolor=blue,urlcolor=blue]{hyperref}

\allowdisplaybreaks
\newtheorem{theorem}{Theorem}[section]
\newtheorem{lemma}[theorem]{Lemma}
\newtheorem{proposition}[theorem]{Proposition}
\newtheorem{corollary}[theorem]{Corollary}
\theoremstyle{definition}

\theoremstyle{remark}
\newtheorem{remark}[theorem]{Remark}

\newcommand{\X}{\mathscr X_n}
\newcommand{\C}{\mathcal C}
\newcommand{\F}{\mathbb F}
\newcommand{\one}{\mathbf 1}
\newcommand{\wt}{\operatorname{wt}}

\newcommand{\E}{\mathbb E}
\newcommand{\CT}{\operatorname{CT}}
\newcommand{\Dtwo}{D^{L_2}}
\newcommand{\Adual}{A^\bot}
\newcommand{\wh}{\widehat}

\title[Perfect codes as discrepancy minimizers]
{Perfect codes as exact minimizers of quadratic discrepancy\\
in \(q\)-ary Hamming spaces}

\author{Aryeh Lev Zabokritskiy (Yohananov)}
\address{Department of Computer Science,
MIGAL -- Galilee Research Institute,
Kiryat Shmona, Israel}
\address{Department of Computer Science,
Tel-Hai University of Kiryat Shmona and the Galilee,
Kiryat Shmona, Israel}
\email{yuhanalev@telhai.ac.il}

\date{August 2, 2026}

\hypersetup{
  pdftitle={Perfect codes as exact minimizers of quadratic discrepancy in q-ary Hamming spaces},
  pdfauthor={Aryeh Lev Zabokritskiy (Yohananov)},
  pdfsubject={Discrepancy and perfect codes in Hamming spaces},
  pdfkeywords={Stolarsky invariance principle, Hamming space, discrepancy,
    Krawtchouk polynomial, perfect code, code smoothing}
}

\subjclass[2020]{Primary 94B65; Secondary 11K38, 94B25, 05E30}
\keywords{Stolarsky invariance principle, Hamming space, discrepancy,
Krawtchouk polynomial, perfect code, code smoothing}

\begin{document}
\raggedbottom

\begin{abstract}
Stolarsky's invariance principle converts quadratic discrepancy into an
energy-minimization problem.  Barg developed its form for binary Hamming
space and proved that binary perfect codes minimize total quadratic ball
discrepancy among binary codes of the same length and cardinality.  We prove
an exact finite-alphabet counterpart: whenever a fixed Hamming space and
cardinality admit a perfect code, the perfect codes are precisely the
discrepancy minimizers.  The competitors are arbitrary subsets of the
prescribed cardinality, with no linearity or minimum-distance assumption.

More sharply, we give explicit parameter-only lower benchmarks before
existence is known.  For every arithmetically admissible one-error parameter
set, and for every nontrivial two-error parameter set satisfying the
sphere-packing and integral Lloyd-root conditions, equality in the
corresponding benchmark is equivalent to perfect tiling.  The bounds are
proved by explicit Fourier--Krawtchouk and polynomial certificates rather
than by solving a numerical linear program.  Thus, in every covered
parameter regime, perfect-code existence is equivalent to attainment of an
explicit variational target.
\end{abstract}

\maketitle

\section{Introduction}

Stolarsky's invariance principle converts a problem of uniform distribution
into an energy-minimization problem.  In its classical spherical form, it
identifies the mean-square irregularity of spherical-cap counts with a defect
in average distance
\cite{Stolarsky1973,BrauchartDick2013,BilykDaiMatzke2018}.  Barg extended
this principle to finite metric spaces and developed its concrete form in the
binary Hamming space \cite{Barg2021}.  The purpose of this paper is to
determine the role of perfect codes in the corresponding extremal problem
for arbitrary finite alphabets.

Let \(G\) be a finite abelian group of order \(q\), let \(n\geq1\), put
\(\X=G^n\), and let \(\C\subseteq\X\) have cardinality \(N\).  For a Hamming
ball \(B(x,t)\), write
\[
 V_t\triangleq |B(x,t)|
 =\sum_{j=0}^{t}\binom nj(q-1)^j,
\]
and define
\begin{equation}
\label{eq:intro-discrepancy}
 \Dtwo(\C)
 =\sum_{t=0}^n\sum_{x\in\X}
 \left(
 \frac{|\C\cap B(x,t)|}{N}-\frac{V_t}{q^n}
 \right)^2.
\end{equation}
For fixed \(q,n,N\), let
\begin{equation}
\label{eq:intro-extremal}
 \Dtwo(q,n,N)
 \triangleq
 \min_{\substack{\C\subseteq G^n\\|\C|=N}}\Dtwo(\C).
\end{equation}
The competitors in \eqref{eq:intro-extremal} are all subsets of the prescribed
cardinality.  No linearity, additivity, minimum distance, or error-correction
condition is imposed.  Although a group structure is chosen for Fourier
analysis, the value of \(\Dtwo(q,n,N)\) depends only on the alphabet size:
any coordinatewise relabeling of one \(q\)-symbol alphabet by another is a
Hamming isometry and preserves the discrepancy functional.

A code \(\mathcal P\subseteq G^n\) is perfect with packing radius \(e\) if
its radius-\(e\) balls partition the space.  Equivalently,
\[
 |\mathcal P|V_e=q^n
 \qquad\text{and}\qquad
 B(z,e)\cap B(z',e)=\varnothing\quad(z\neq z').
\]
Perfectness is therefore exact uniformity at one distinguished radius,
whereas \(\Dtwo\) measures all radii simultaneously.  In a fixed class
specified by \(q,n,N\), the packing radius of a perfect member is unique,
because \(NV_e=q^n\) and \(V_e\) is strictly increasing in \(e\).

\paragraph{Previous discrepancy results.}
Barg derived distance, ball-intersection, and Fourier--Krawtchouk formulas
for the binary discrepancy functional and studied its minimization in the
Delsarte linear-programming framework \cite{Barg2021}.  His analytic bounds
treat the binary repetition and Hamming families, while the binary Golay
parameters are among the minimizers he identified by computer
\cite[Sec.~5]{Barg2021}.  Together with the binary parameter classification,
this yields his conclusion that every binary perfect code is discrepancy
minimizing \cite[Thm.~5.4]{Barg2021}.  Asymptotic bounds for ball
discrepancies in Hamming spaces were subsequently studied by Barg and
Skriganov \cite{BargSkriganov2021}.  Thus the binary minimizing statements
are not new here.  Barg explicitly listed working out the invariance
principle in the nonbinary Hamming space as a natural extension
\cite[Sec.~7.3]{Barg2021}.  His theorem does not supply the corresponding
perfect-code result for \(q>2\), including the ternary Golay case or
hypothetical perfect codes over a non-prime-power alphabet.

\paragraph{The new point.}
The contribution is not merely the formal replacement of the binary
alphabet by a \(q\)-symbol alphabet.  First, the one- and two-error results
produce explicit lower bounds depending only on the parameters, even when no
perfect code is known to exist.  Second, equality for an actual competitor
is rigid: it recovers the Fourier support imposed by the Lloyd polynomial
and hence the exact perfect-tiling identity.  Third, the ternary regimes are
not consequences of complete monotonicity; they require the new local
Lloyd-root sign and the exact ternary Golay certificate proved below.  These
points distinguish the parameter theorems from previously known conditional
energy-minimizing statements for already realized codes.

\paragraph{Known and unresolved perfect-code regimes.}
For alphabets of prime-power size, the possible nontrivial perfect-code
parameters are classical
\cite{Tietavainen1973,ZinovievLeontiev1973}:
\[
\begin{array}{lll}
\text{\(q\)-ary Hamming parameters}
 & n=(q^m-1)/(q-1), & e=1,\\[1mm]
\text{ternary Golay parameters}
 & (q,n,N)=(3,11,3^6), & e=2,\\[1mm]
\text{binary Golay parameters}
 & (q,n,N)=(2,23,2^{12}), & e=3.
\end{array}
\]
Together with odd-length binary repetition codes and one-word codes, these
exhaust the possible prime-power parameter families.  This is a classification
of possible parameters, not a classification up to equivalence; in
particular, perfect codes with Hamming parameters may be nonlinear.

The situation for alphabets whose size is not a prime power is different.
No nontrivial perfect code over such an alphabet is known.  Nonexistence is
known for packing radius at least three, while the one- and two-error cases
are not fully classified
\cite{Reuvers1977,Best1983,Hong1984,CazorlaGarcia2024}.  In the two-error
case, the ternary Golay code is the only nontrivial known non-repetition
example, and no example is known for \(q>3\).  For prime-power \(q>3\),
nonexistence follows from the classical classification; for non-prime-power
\(q\), the general existence problem remains open despite increasingly strong
arithmetic exclusions \cite{CazorlaGarcia2024,Bennett2026}.  Most recently,
Bennett ruled out, among other families, every non-prime-power alphabet size
\(q=2^{\alpha}p^{\beta}\) with \(p\) an odd prime,
\(\alpha,\beta\geq1\), and \(\alpha\leq20\), as well as every
non-prime-power \(q\) whose greatest prime factor is at most \(13\).

These facts suggest two related questions.  First, whenever a fixed class
contains a perfect code, are its perfect members exactly the discrepancy
minimizers?  Second, can one compute, before existence is known, a
parameter-only lower benchmark whose attainment is equivalent to the
existence of a perfect code?

\paragraph{Main results.}
The first two theorems are parameter theorems.  For single-error correction,
put
\[
 V=1+n(q-1),\qquad N=\frac{q^n}{V},\qquad k=\frac Vq.
\]
Whenever \(N\) and \(k\) are integers, every \(N\)-point code satisfies
\[
 \Dtwo(\C)\geq \frac{V-1}{q^n}W_{n,q}(k),
\]
where \(W_{n,q}\) is an explicit spectral weight, and equality holds exactly
for perfect radius-one tilings.  At prime-power Hamming parameters the
benchmark is attained by the standard Hamming code, and all minimizers are
perfect, including nonlinear ones.

For two-error correction, assume that the sphere-packing cardinality is
integral and that the two Lloyd roots are distinct integers.  We define an
explicit number \(\delta_{n,q}^{(2)}\) from those roots and prove
\[
 \Dtwo(\C)\geq\delta_{n,q}^{(2)},
\]
again with equality exactly for perfect tilings.  For \(q\geq4\), the proof
uses complete monotonicity, two-node quadrature, and a formal Delsarte
comparison that does not assume realizability.  The cases \(q=2,3\) reduce
arithmetically to the length-five binary repetition and ternary Golay
parameters.

Finally, an exact quartic certificate treats the binary Golay parameters and
an elementary distance argument handles every odd-length binary repetition
code.  Together with the known parameter classification, these results imply
that every perfect code over every finite alphabet is an exact discrepancy
minimizer among all subsets of the same cardinality.  Equality characterizes
perfect codes, and no linearity assumption is made.

\paragraph{Relation to linear programming and universal optimality.}
The supporting lines and polynomial minorants used below are explicit
Delsarte dual certificates.  Their purpose is not simply to replace a
numerical linear program by a closed formula, but to prove a sharp
parameter-only bound and to control every equality case.

For \(q\geq4\), existing universal-optimality results already imply
conditional energy minimality for several realized perfect codes
\cite{AshikhminBarg1999,CohnZhao2014}, and the two-root construction uses
classical Levenshtein quadrature
\cite{Levenshtein1999,BoyvalenkovEtAl2017}.  The additional step here is
realizability-free: the formal Lloyd transform and its associated linear
objective value are defined before a code exists, the comparison remains
valid even when the inverse transform is not a feasible distance
distribution, and equality for a genuine code recovers the exact tiling
identity.  For \(q=3\), complete
monotonicity fails, and the required Lloyd-root sign is proved by an exact
coefficient recurrence.  The binary cases are included to provide a unified
equality mechanism; their previously known minimizing conclusions are not
claimed as new.

The discrepancy also has an information-theoretic interpretation.  For
uniform noise on a Hamming ball, each radius contributes an order-two
smoothing error, and \(\Dtwo\) is their weighted sum.  This connects the
finite extremal problem with code smoothing \cite{PathegamaBarg2023}, while
remaining distinct from single-noise asymptotic or cryptographic statements.

\paragraph{Organization.}
Section~\ref{sec:prelim} fixes notation and states the two variational
benchmarks followed by the global perfect-code theorem.
Section~\ref{sec:qary-reduction} derives the invariance and spectral formulas
and analyzes the spectral weight.  Section~\ref{sec:hamming} proves the
single-error benchmark.  Section~\ref{sec:golay} proves the two-error
benchmark, treats the Golay and repetition cases, and completes the global
theorem.  Section~\ref{sec:interpretation} gives the smoothing
interpretation, and the appendices contain the exact quadrature and
certificate calculations.

\section{Preliminaries}
\label{sec:prelim}

After fixing the Hamming and Fourier notation, we state the one- and
two-error variational benchmarks followed by the global perfect-code theorem.
Sections~\ref{sec:qary-reduction}--\ref{sec:golay} then develop the spectral
and distance-energy certificates used in their proofs.

\subsection{Hamming and Fourier notation}

Let \(G\) be a finite abelian group of order \(q\geq2\), let \(n\geq1\), and put
\(\X=G^n\), so that \(|\X|=q^n\).  Any \(q\)-symbol alphabet may be identified
with such a group by relabeling its symbols; this preserves Hamming distance
and discrepancy.  The group structure is used only for Fourier analysis.  The
Hamming metric
\[
 d(x,y)\triangleq|\{i:x_i\neq y_i\}|,\qquad
 \wt(x)\triangleq d(x,0)=|\{i:x_i\neq0\}|,
\]
depends only on the alphabet.  Write
\[
 B(x,t)\triangleq\{y\in\X:d(x,y)\leq t\},\qquad
 V_t\triangleq|B(x,t)|=\sum_{j=0}^t\binom nj(q-1)^j.
\]
A code is an arbitrary nonempty subset \(\C\subseteq\X\), not necessarily a
subgroup, and \(N=|\C|\).  In particular, when the alphabet is a field the code
need not be linear.
For \(1\leq e\leq n\), a code is perfect \(e\)-error-correcting if the balls
\(B(z,e)\), \(z\in\C\), partition \(G^n\).  The discrepancy variable \(t\)
runs over every ball size; the optimization does not fix \(e\).  The latter
only records how many errors a perfect member of the fixed class corrects.

For a subset \(S\subseteq\X\), its indicator is the function
\(\one_S:\X\to\{0,1\}\) defined by
\[
 \one_S(x)\triangleq
 \begin{cases}
  1,&x\in S,\\
  0,&x\notin S.
 \end{cases}
\]
We use the same notation \(\one_{\{P\}}\) for the indicator of a condition
\(P\).

The distance distribution of \(\C\) is
\[
 A_i\triangleq\frac1N|\{(z,z')\in\C^2:d(z,z')=i\}|,\qquad
 0\leq i\leq n.
\]
Thus \(A_0=1\) and \(\sum_iA_i=N\).  It records the pairwise information
visible to every radial distance kernel.  This is the normalization used in
\cite{Barg2021}.

Let
\[
 \widehat G=\{\gamma:G\longrightarrow\mathbb C:
   \gamma(a+b)=\gamma(a)\gamma(b),\ |\gamma(a)|=1\}
\]
be the character group of \(G\).  A frequency
\(\xi=(\xi_1,\ldots,\xi_n)\in\widehat G^{\,n}\) determines the character
\[
 \chi_\xi(x)\triangleq\prod_{i=1}^n\xi_i(x_i),\qquad x\in\X.
\]
Thus \(x\) is a point of the Hamming space, whereas \(\xi\) labels a Fourier
frequency.  Let \(\gamma_0\in\widehat G\) denote the trivial character,
defined by \(\gamma_0(a)=1\) for every \(a\in G\).  We call a coordinate
\(\xi_i\) nontrivial if \(\xi_i\neq\gamma_0\), and set
\[
 \wt(\xi)\triangleq|\{i:\xi_i\neq\gamma_0\}|.
\]

Let \(f:\X\to\mathbb C\) be an arbitrary complex-valued function.
Its Fourier transform \(\wh f\) is the function on
\(\widehat G^{\,n}\) defined by
\[
 \wh f(\xi)\triangleq
 \sum_{x\in\X}f(x)\overline{\chi_\xi(x)}.
\]
For functions \(f,g:\X\to\mathbb C\), define their convolution by
\[
 (f*g)(x)\triangleq\sum_{y\in\X}f(y)g(x-y),
\]
for which
\[
 \wh{f*g}=\wh f\,\wh g,\qquad
 \sum_x|f(x)|^2=\frac1{q^n}\sum_\xi|\wh f(\xi)|^2.
\]
These Fourier conventions and the last Plancherel identity are standard for
finite abelian groups; see \cite[Part~I]{Terras1999}.

The \(q\)-ary Krawtchouk polynomials are
\begin{equation}
\label{eq:kraw}
 K_j^{(n,q)}(w)
 \triangleq\sum_{\ell=0}^j(-1)^\ell(q-1)^{j-\ell}
   \binom w\ell\binom{n-w}{j-\ell}.
\end{equation}
They are characterized by
\begin{equation}
\label{eq:kraw-gen}
 \sum_{j=0}^nK_j^{(n,q)}(w)z^j
   =(1+(q-1)z)^{n-w}(1-z)^w.
\end{equation}
The Fourier transform of the indicator of the weight-\(j\) sphere, evaluated
at a character of weight \(w\), is \(K_j^{(n,q)}(w)\).  These are standard
facts about the \(q\)-ary Hamming scheme; see
\cite{Delsarte1973,MacWilliamsSloane1977}.  For \(q=2\), they are the
Krawtchouk conventions used in \cite[Sec.~4]{Barg2021}.

For a Laurent polynomial \(P(z)\), the notation \([z^j]P(z)\) denotes the
coefficient of \(z^j\).  Its constant term is denoted by
\[
 \CT_z P(z)\triangleq[z^0]P(z).
\]

Define the dual distance distribution of \(\C\) by
\begin{equation}
\label{eq:dual}
 \Adual_w(\C)\triangleq\frac1{N^2}
   \sum_{\substack{\xi\in\widehat G^{\,n}\\\wt(\xi)=w}}
         |\wh{\one_\C}(\xi)|^2,\qquad 0\leq w\leq n.
\end{equation}
This quantity is defined for nonlinear codes and immediately satisfies
\(\Adual_w\geq0\).  The main minimization problem will become a weighted sum of
these nonnegative masses.  Character orthogonality also gives the usual
MacWilliams--Delsarte form
\begin{equation}
\label{eq:macwilliams}
 \Adual_w=\frac1N\sum_{i=0}^nA_iK_w^{(n,q)}(i).
\end{equation}
Equations \eqref{eq:dual}--\eqref{eq:macwilliams} are the \(q\)-ary extension
of the binary dual distribution in
\cite[Sec.~3.2]{Barg2021}.  For a nonlinear code, the superscript denotes the
MacWilliams--Delsarte transform and does not assert the existence of a dual
code.

For \(0\leq w,t\leq n\), define the Fourier coefficient of a radius-\(t\)
ball at character weight \(w\) by
\begin{equation}
\label{eq:c-def}
 c_w(t)\triangleq\sum_{j=0}^tK_j^{(n,q)}(w).
\end{equation}
For \(1\leq w\leq n\), define the discrepancy spectral weight by
\begin{equation}
\label{eq:W-def}
 W_{n,q}(w)\triangleq\sum_{t=0}^{n-1}c_w(t)^2.
\end{equation}
Section~\ref{sec:qary-reduction} will show why this weight governs
discrepancy and will derive the form used to analyze its shape.

\subsection{Main results}

Recall that \(\Dtwo(\C)\) is the total quadratic ball discrepancy defined in
\eqref{eq:intro-discrepancy}.  The first two theorems give variational
benchmarks that are defined before a perfect code is known to exist.

For \(q\geq2\) and \(n\geq1\), put
\[
 V\triangleq1+n(q-1),\qquad
 N\triangleq\frac{q^n}{V},\qquad
 k\triangleq\frac Vq,
\]
so that \(V=|B(0,1)|\), \(N\) is the cardinality forced by a perfect
single-error-correcting tiling, and \(k\) is the corresponding Lloyd root.

\begin{theorem}[One-error variational benchmark]
\label{thm:all-one-perfect}
Assume that \(N\) and \(k\) defined above are integers.  Then every subset
\(\C\subseteq G^n\) of cardinality \(N\) satisfies
\begin{equation}
\label{eq:main-bound}
 \Dtwo(\C)\geq\frac{V-1}{q^n}W_{n,q}(k),
\end{equation}
with equality if and only if \(\C\) is a perfect single-error-correcting
code.  Thus the benchmark in \eqref{eq:main-bound} is attainable if and
only if such a perfect code exists.
\end{theorem}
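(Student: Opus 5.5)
The plan is to turn the discrepancy into a linear functional of the dual distance distribution, bound that functional below by an explicit supporting line (a degree-one Delsarte certificate), and read off the equality cases from the radius-one tiling identity.

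\emph{Spectral formula.} Write \(|\C\cap B(x,t)|=(\one_\C*\one_{B(0,t)})(x)\). Subtracting \(NV_t/q^n\) removes exactly the \(\xi=0\) Fourier term. The Fourier transform of \(\one_{B(0,t)}\) at a character of weight \(w\) is \(c_w(t)\). Applying Plancherel and \eqref{eq:dual}, I expect
\[
 \Dtwo(\C)=\frac1{q^n}\sum_{w=1}^n\Adual_w\,W_{n,q}(w).
\]
The \(t=n\) term drops out because \(c_w(n)=0\) for \(w\geq1\). The problem then becomes a linear program in the nonnegative masses \(\Adual_w\), subject to two constraints.

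\emph{The two constraints.} First, Plancherel applied to \(\one_\C\) gives \(\sum_w\Adual_w=q^n/N=V\). Since \(\Adual_0=1\), this means \(\sum_{w\geq1}\Adual_w=V-1\). Second, inverting \eqref{eq:macwilliams} gives \(A_i=(N/q^n)\sum_w\Adual_wK_i^{(n,q)}(w)\). Take \(i=1\) and use \(K_1(w)=n(q-1)-qw\), together with \(qk=V\), so that \(n(q-1)-qw=V-1-q(w-k)\). This yields
\[
 \sum_{w\geq1}\Adual_w\,q(w-k)=-V A_1\leq0 .
\]

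\emph{Certificate.} I look for a slope \(b\leq0\) such that the line \(\ell(w)=W_{n,q}(k)+b\,q(w-k)\) satisfies \(W_{n,q}(w)\geq\ell(w)\) for all integers \(1\leq w\leq n\), with equality only at \(w=k\). Granting this,
\[
 q^n\Dtwo(\C)\geq\sum_{w\geq1}\Adual_w\ell(w)=(V-1)W_{n,q}(k)-bVA_1\geq(V-1)W_{n,q}(k),
\]
which is \eqref{eq:main-bound}.

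Constructing this line is the main obstacle. It requires controlling the shape of \(W_{n,q}\) on \(\{1,\dots,n\}\) near \(w=k\): for instance, that it is decreasing with a convexity-type property, so that a nonincreasing supporting line touches it only at \(k\). My first attempt is to use the generating function \eqref{eq:kraw-gen} to write \(c_w(t)\) as a constant term. From that I would derive a closed form, or a sign-controlled difference formula, for \(W_{n,q}(w+1)-W_{n,q}(w)\) and for the second differences. The natural slope choice is the left or right difference quotient at \(k\), whichever gives the strict inequality. This is the step that the analysis of the spectral weight in Section~\ref{sec:qary-reduction} is meant to supply.

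\emph{Equality.} Equality in the bound forces \(\Adual_w=0\) for every \(w\geq1\) with \(w\neq k\), by strictness of the minorant. (It also forces \(A_1=0\) when \(b<0\), but this is not needed.) Now the Fourier coefficient of \(\one_\C*\one_{B(0,1)}\) at a character of weight \(w\) is \(\wh{\one_\C}(\xi)c_w(1)\), and \(c_w(1)=1+K_1(w)=q(k-w)\). So this coefficient vanishes at every nonzero frequency exactly when \(\Adual\) is supported on \(\{0,k\}\). In that case \(\one_\C*\one_{B(0,1)}\) is constant, equal to \(NV/q^n=1\), which says precisely that the radius-one balls around codewords tile \(G^n\). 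Conversely, a perfect code gives this constant convolution, hence dual support \(\{0,k\}\) (Lloyd's condition) and \(A_1=0\). Both inequalities above then become equalities.
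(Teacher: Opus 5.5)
\textbf{There is a genuine gap: you never construct the supporting line, and that construction is the whole analytic content of the theorem.} The rest of your set-up is right and matches the paper's proof: the spectral formula, the mass identity $\sum_{w\geq1}\Adual_w=V-1$, the first-moment identity $\sum_{w\geq1}q(w-k)\Adual_w=-VA_1$, a nonincreasing supporting line at the Lloyd root $k$, and Fourier inversion of $\one_\C*\one_{B(0,1)}$ for equality.

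A line of slope $b\leq0$ that supports $W_{n,q}$ at $k$ with contact only at $k$ requires $W_{n,q}(k)<W_{n,q}(k-1)$; evaluate at $w=k-1$. Given strict convexity, this inequality is also sufficient. Convexity is within reach of your constant-term plan. Parseval on the unit circle gives $W_{n,q}(w)=\frac1{2\pi}\int_0^{2\pi}\alpha_q^{\,n-w}\beta^{\,w-1}\,d\theta$, and the second difference has integrand $\alpha_q^{\,n-w-2}\beta^{\,w-1}(\beta-\alpha_q)^2\geq0$.

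The local sign is not within reach of your plan. Your heuristic that $W_{n,q}$ is decreasing holds only for $q\geq4$, where $\beta-\alpha_q=q(4\sin^2(\theta/2)-q)\leq0$.
\begin{itemize}
\item For $q=2$, $W_{n,2}(w)=W_{n,2}(n+1-w)$, so $k=(n+1)/2$ is the minimum. The sign has to come from this symmetry plus strict convexity.
\item For $q=3$ the weight is not monotone; already $W_{4,3}(1),\ldots,W_{4,3}(4)$ equal $245,26,14,20$. The root $k=2r+1$ of $n=3r+1$ sits essentially at the turning point. A saddle-point analysis of the first-difference integral shows its leading term changes sign within distance $O(1)$ of $k-1$. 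So the sign at the Lloyd root is a lower-order effect, and neither monotonicity nor a leading-order asymptotic decides it uniformly in $r$.
\end{itemize}
The paper proves $W_{3r+1,3}(2r)-W_{3r+1,3}(2r+1)=6\gamma_r$ with $\gamma_r=[z^{3r}](z-1)^{4r-3}(2z^2+5z+2)^r$. It then needs an exact creative-telescoping recurrence to show $\gamma_r>0$ for all $r$. Without an argument of this kind, your proof is conditional on its hardest lemma.

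\textbf{Your slope choice and your equality argument do not match.} A line through two consecutive values of $W_{n,q}$ touches it at both points, so neither difference quotient at $k$ gives contact only at $k$. The right quotient is also unusable: for $q=2$, symmetry gives $W_{n,2}(k+1)=W_{n,2}(k-1)>W_{n,2}(k)$, so that slope is positive.

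With the left quotient, $bq=W_{n,q}(k)-W_{n,q}(k-1)<0$, equality only confines the spectrum to $\{k-1,k\}$. Then $A_1=0$ is needed after all: with the mass and first-moment identities it forces $\Adual_{k-1}=0$, which is how the paper concludes.

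Alternatively, once the local sign is known, you can keep your "contact only at $k$" version. Take $bq$ strictly between $W_{n,q}(k)-W_{n,q}(k-1)$ and $\min\{0,W_{n,q}(k+1)-W_{n,q}(k)\}$; strict convexity then gives contact only at $k$. This needs $1<k<n$, which follows from $n=qr+1$, $k=1+r(q-1)$.

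\textbf{Small slip.} $K_1(w)=n(q-1)-qw=-1-q(w-k)$, not $V-1-q(w-k)$. The identity $\sum_{w\geq1}q(w-k)\Adual_w=-VA_1$ that you derive from it is nonetheless correct.
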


For the two-error benchmark, let \(q\geq2\) and \(n\geq3\), and put
\[
 V^{(2)}\triangleq
 1+n(q-1)+\binom n2(q-1)^2,
 \qquad
 N^{(2)}\triangleq\frac{q^n}{V^{(2)}}.
\]
Let
\[
 L_2(w)\triangleq c_w(2)=K_2^{(n-1,q)}(w-1)
\]
be the two-error Lloyd polynomial.  Suppose that \(N^{(2)}\) is an integer
and that \(L_2\) has two distinct integral roots
\(r<s\) in \(\{1,\ldots,n\}\).  Define
\[
 \bar w\triangleq\frac{(q-1)n}{q},
\]
\begin{equation}
\label{eq:formal-two-masses}
 b_r\triangleq
 \frac{(V^{(2)}-1)s-V^{(2)}\bar w}{s-r},
 \qquad
 b_s\triangleq
 \frac{V^{(2)}\bar w-(V^{(2)}-1)r}{s-r},
\end{equation}
and
\begin{equation}
\label{eq:radius-two-benchmark}
 \delta_{n,q}^{(2)}
 \triangleq
 \frac{b_rW_{n,q}(r)+b_sW_{n,q}(s)}{q^n}.
\end{equation}

\begin{theorem}[Two-error variational benchmark]
\label{thm:two-perfect-all-q}
Assume the preceding sphere-packing and integral-root conditions.  Every
\(N^{(2)}\)-point code \(\C\subseteq G^n\) satisfies
\begin{equation}
\label{eq:radius-two-bound}
 \Dtwo(\C)\geq\delta_{n,q}^{(2)}.
\end{equation}
Equality holds if and only if \(\C\) is a perfect two-error-correcting code.
Consequently, such a perfect code exists at these parameters if and only if
\[
 \Dtwo(q,n,N^{(2)})=\delta_{n,q}^{(2)}.
\]
\end{theorem}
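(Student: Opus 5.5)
The plan is to reduce \eqref{eq:radius-two-bound} to a Delsarte-type linear inequality on the dual distance distribution \eqref{eq:dual}, and to close it with an explicit polynomial minorant touching \(W_{n,q}\) exactly at the Lloyd roots \(r,s\). I will use the spectral identity from Section~\ref{sec:qary-reduction}, which I take in the normalization consistent with Theorem~\ref{thm:all-one-perfect}:
\[
 \Dtwo(\C)=\frac1{q^n}\sum_{w=1}^n\Adual_w(\C)\,W_{n,q}(w).
\]
It expresses discrepancy as a nonnegative combination of the masses \(\Adual_w\geq0\). The available linear information comes from \eqref{eq:macwilliams}: for \(j=0,1,2\), inverting MacWilliams gives
\[
 \sum_{w=0}^n\Adual_wK_j^{(n,q)}(w)=\frac{q^n}{N}A_j .
\]
Hence, with \(N=N^{(2)}\), the sums \(\sum_{w\ge1}\Adual_wK_j(w)\) are \(V^{(2)}-1\), \(q^n A_1/N-K_1(0)\) and \(q^nA_2/N-K_2(0)\), and the last two are affine in the nonnegative quantities \(A_1,A_2\).

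Next I construct the certificate. I will look for a polynomial \(f=f_0+f_1K_1+f_2K_2\) of degree at most two with the following properties:
\begin{enumerate}
\item \(f(w)\le W_{n,q}(w)\) for all integers \(1\le w\le n\);
\item equality holds at \(w=r,s\);
\item \(f_1,f_2\geq0\).
\end{enumerate}
Then
\[
 q^n\Dtwo(\C)\ \ge\ \sum_{w\ge1}\Adual_wf(w)\ \ge\ f_0(V^{(2)}-1)-f_1K_1(0)-f_2K_2(0).
\]
The right-hand side is exactly the value that the formal two-point measure with masses \(b_r,b_s\) at \(r,s\) would give. I check this by verifying that \(b_r+b_s=V^{(2)}-1\) and that the remaining moment identities follow from \(L_2(r)=L_2(s)=0\), which is the Levenshtein two-node quadrature. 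That value is \(q^n\delta^{(2)}_{n,q}\) by \eqref{eq:radius-two-benchmark}. The formal masses need not come from a realizable code, and the argument never uses that they do; only the inequality chain is used.

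The main obstacle is the minorant inequality \(W_{n,q}\ge f\) off the nodes. For \(q\ge4\) I plan to use the closed form of \(W_{n,q}\) derived in Section~\ref{sec:qary-reduction}. I expect complete monotonicity of the relevant Krawtchouk-weighted expression, or of \(W\) in a suitable variable, so that \(W-f\) vanishing at \(r\) and \(s\) has a controlled sign pattern via a Rolle or divided-difference argument. For \(q=2,3\) I first use arithmetic: integrality of \(N^{(2)}\) together with integral distinct roots of \(L_2\) should force the length-five binary repetition parameters or the ternary Golay parameters \((3,11,3^6)\). These finitely many cases I then verify by exact computation of \(W\) at every \(w\).

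For the equality case, suppose equality holds in \eqref{eq:radius-two-bound}. Then both inequalities in the chain are tight. I expect strict positivity of \(f_1\) and \(f_2\) in the certificate, which would force \(A_1=A_2=0\). Then \(\C\) has minimum distance at least \(5\), so its radius-two balls are disjoint. Since \(N^{(2)}V^{(2)}=q^n\), these balls tile \(G^n\), and \(\C\) is perfect. Conversely, a perfect code has \(A_1=A_2=0\), and by Lloyd's theorem its dual distribution is supported on the roots of \(L_2\), that is on \(\{r,s\}\), where \(f=W\). So both inequalities are equalities and the bound is attained. Combining the bound with the equality characterization gives the final existence criterion \(\Dtwo(q,n,N^{(2)})=\delta^{(2)}_{n,q}\).
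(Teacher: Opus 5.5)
\textbf{The quadratic certificate does not exist for $q\geq4$.} Your reduction to a Delsarte certificate on $\Adual$ is correct, and so is the identification of its value with $q^n\delta^{(2)}_{n,q}$ through $A^*_1=A^*_2=0$. For $q=2,3$, your arithmetic reduction followed by finite exact certificates is workable. The paper uses the repetition argument for $q=2$ and the quadratic through $6,7,9$ for $q=3$.

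The central case $q\geq4$ fails: for every admissible parameter set there, no quadratic minorant of $W_{n,q}$ touches both Lloyd roots. By Theorem~\ref{thm:integral-representation},
\[
W_{n,q}(x)=\frac1{2\pi}\int_0^{2\pi}\alpha_q^{\,n-1}t^{x-1}\,d\theta,\qquad t=\beta/\alpha_q\in[0,4/(q-2)^2]\subseteq[0,1],
\]
so every third divided difference of $W_{n,q}$ on distinct nodes is negative. Every quadratic through $(r,W_{n,q}(r))$ and $(s,W_{n,q}(s))$ has the form $f_\tau=I+\tau(x-r)(x-s)$, where $I$ is the chord, and
\[
 W_{n,q}(x)-f_\tau(x)=(x-r)(x-s)\bigl(W_{n,q}[r,s,x]-\tau\bigr).
\]
Since $1<r<s<n$ and $s-r\geq2$, the point $x=r+1$ forces $\tau\geq W_{n,q}[r,s,r+1]$. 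The point $x=n$ forces $\tau\leq W_{n,q}[r,s,n]$. But
\[
W_{n,q}[r,s,n]-W_{n,q}[r,s,r+1]=(n-r-1)\,W_{n,q}[r,r+1,s,n]<0 .
\]
So the divided-difference sign pattern you hoped would control $W_{n,q}-f$ actually rules out every quadratic, whatever the signs of $f_1,f_2$. This is the two-root analogue of Remark~\ref{rem:binary-golay-cubic}.

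\textbf{What is missing.} You need to raise the degree to three and interpolate at the pair covering $r-1,r,s,s+1$. Then $W_{n,q}-h$ equals a positive fourth divided difference times $(x-r+1)(x-r)(x-s)(x-s-1)$, which is $\geq0$ at integers. The real work is showing that the Krawtchouk coefficients satisfy $\eta_1,\eta_2,\eta_3\geq0$. The paper gets this from a Newton expansion in
\[
1,\quad s+1-x,\quad (s+1-x)(s-x),\quad (s+1-x)(s-x)(r-x)=\tfrac2{q^2}(s+1-x)L_2(x),
\]
together with positive definiteness of each factor, which uses $s>\bar w$. This is Lemma~\ref{lem:formal-two-comparison}. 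For $f=\lambda(n)-\lambda$ one has $K^{\mathsf T}f=W_{n,q}$ on $\{1,\dots,n\}$, so the lemma is exactly such a cubic spectral certificate, and it additionally uses $A_3\geq0$.

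\textbf{The equality step also has an error.} $A_1=A_2=0$ only gives minimum distance at least $3$, not $5$, so disjointness of the radius-two balls does not follow. You need $A_3=A_4=0$ as well. The paper obtains the full identity $A=A^*$: it expands $\lambda(n)-\lambda(w)$ in the functions $\binom{n-w}{j}$ with strictly positive coefficients, applies the comparison to each basis function, and inverts the Pascal matrix. Alternatively, use the Fourier-support mechanism from the Golay certificates:
\begin{enumerate}
\item equality pushes $\Adual$ onto the contact set of the minorant;
\item the moment identities eliminate any extra contact points, as happens at $w=7$ for ternary Golay;
\item then every nontrivial Fourier coefficient of $\one_\C*\one_{B(0,2)}$ vanishes, and $N^{(2)}V^{(2)}=q^n$ together with Fourier inversion gives the tiling.
\end{enumerate}
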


Both hypotheses in the two-error theorem are necessary for a nontrivial
perfect code.  The first is the sphere-packing identity.  Lloyd's theorem
states that the Lloyd polynomial of a perfect \(e\)-error-correcting code has
\(e\) distinct integral roots in \(\{1,\ldots,n\}\); for \(e=2\), this is
precisely the root hypothesis above
\cite{Lloyd1957,Lenstra1972,Delsarte1973}.  The quantities \(b_r,b_s\) are the
unique formal nonconstant dual masses determined by the total mass and first
moment at the two Lloyd roots.

\begin{theorem}[Perfect codes are exact discrepancy minimizers]
\label{thm:all-perfect}
Let \(G\) be any finite abelian group, and let
\(\mathcal P\subseteq G^n\) be a perfect code.  Then every code
\(\C\subseteq G^n\) with \(|\C|=|\mathcal P|\) satisfies
\[
 \Dtwo(\C)\geq\Dtwo(\mathcal P).
\]
Equality holds if and only if \(\C\) is perfect.
\end{theorem}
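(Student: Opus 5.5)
We prove Theorem~\ref{thm:all-perfect} by splitting according to the packing radius \(e\) of \(\mathcal P\). The cases \(e=1\) and \(e=2\) follow from the two variational benchmarks, and the remaining cases are fixed by the parameter classification. First, the class \((q,n,N)\) with \(N=|\mathcal P|\) determines \(e\) uniquely, since \(NV_e=q^n\) and \(V_e\) is strictly increasing. So ``\(\C\) is perfect'' unambiguously means ``\(\C\) is a perfect \(e\)-error-correcting code'' for this same \(e\). We must also dispose of the degenerate perfect codes: the whole space (\(e=0\), \(N=q^n\)) and a one-word code (\(e=n\), \(N=1\)). In both cases every subset of the prescribed cardinality is itself perfect. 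For \(N=q^n\) the only competitor is \(\X\). For \(N=1\) every singleton is perfect and, by translation invariance of \(\Dtwo\), has the same discrepancy. Hence the statement is trivial there.

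For \(e=1\), the existence of \(\mathcal P\) forces \(N=q^n/V\) to be an integer. By Lloyd's theorem, \(k=V/q\) is an integer root of the Lloyd polynomial \(L_1(w)=c_w(1)=V-1-qw+\dots\), in fact \(L_1(w)=(q-1)n-qw+(q-1)\cdot\) hmm, more precisely \(L_1(w)=1+K_1(w)=V-qw\). Thus \(k=V/q\in\{1,\dots,n\}\) is forced to be integral. Theorem~\ref{thm:all-one-perfect} then gives \(\Dtwo(\C)\ge \frac{V-1}{q^n}W_{n,q}(k)\) for every competitor, with equality exactly for perfect codes. Since \(\mathcal P\) is perfect, it attains the bound, so \(\Dtwo(\mathcal P)\) equals the benchmark and the claim follows. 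For \(e=2\) and \(n\ge3\), the sphere-packing identity and Lloyd's theorem supply exactly the hypotheses of Theorem~\ref{thm:two-perfect-all-q}: \(N^{(2)}\) is integral and \(L_2\) has two distinct integral roots in \(\{1,\dots,n\}\). The same attainment argument then applies with \(\delta^{(2)}_{n,q}\). The small lengths with \(e=2\) (\(n\le2\)) only give the one-word code, which is already handled.

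For \(e\ge3\) we invoke the classification. Over non-prime-power alphabets there is nonexistence for \(e\ge3\). Over prime-power alphabets the only nontrivial possibilities are the binary Golay parameters \((2,23,2^{12})\) with \(e=3\) and the odd-length binary repetition codes with \(e=(n-1)/2\). The length-five repetition code has \(e=2\), so it is already covered above. These two families are exactly the ones the paper treats separately in Section~\ref{sec:golay}: a quartic Delsarte certificate for the Golay parameters, and an elementary distance-energy argument for repetition codes. Each gives a lower bound attained precisely by perfect codes. We cite those results in the same attainment pattern.

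The main obstacle is not in this assembly but in ensuring that the case split is exhaustive and that each invoked result covers equality for arbitrary, possibly nonlinear, competitors. Concretely, we must verify two things. First, the combination of Lloyd's theorem and the known nonexistence results leaves no parameter set outside the cases above. Second, for the binary Golay and repetition cases, the referenced certificates characterize equality by perfectness rather than merely asserting minimality.
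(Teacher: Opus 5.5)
Your proposal is correct and follows the paper's own proof. It splits by packing radius; for $e=1,2$ it applies Theorems~\ref{thm:all-one-perfect} and~\ref{thm:two-perfect-all-q}, with hypotheses supplied by Lemma~\ref{lem:one-perfect-parameters} (or Lloyd's theorem) and attainment by $\mathcal P$; for $e\ge 3$ it combines the parameter classification with Proposition~\ref{prop:repetition} and Theorem~\ref{thm:binary-golay}. The only blemish is the garbled intermediate expression for $L_1$ before you reach the correct $L_1(w)=V-qw$, which you should clean up.
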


No additivity or linearity is assumed of either code.  Any perfect code of
the same cardinality as \(\mathcal P\) necessarily has the same packing
radius, because \(|\mathcal P|V_e=q^n\) and the ball volumes \(V_e\)
are strictly increasing.  Theorems~\ref{thm:all-one-perfect} and
\ref{thm:two-perfect-all-q} prove the global theorem in radii one and two.
The binary Golay and repetition certificates, together with the known
parameter classification and the nonexistence results for non-prime-power
alphabets in radius at least three, complete the remaining cases.

\begin{corollary}[The \(q\)-ary Hamming family]
\label{cor:qary-hamming-family}
Let \(q\) be a prime power and \(m\geq2\), and put
\[
 n=\frac{q^m-1}{q-1},\qquad N=q^{n-m}.
\]
Then every \(N\)-point code \(\C\subseteq\mathbb F_q^n\) satisfies
\[
 \Dtwo(\C)\geq
 \frac{q^m-1}{q^n}W_{n,q}\bigl(q^{m-1}\bigr).
\]
Equality holds if and only if \(\C\) is a perfect
single-error-correcting code.  Consequently, the minimizers are precisely all
perfect codes with the \(q\)-ary Hamming parameters; linearity is neither
assumed nor forced.
\end{corollary}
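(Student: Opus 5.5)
The corollary follows from Theorem~\ref{thm:all-one-perfect} by verifying its arithmetic hypotheses at the Hamming parameters and then exhibiting one perfect code.

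\emph{Computing the parameters.} With $n=(q^m-1)/(q-1)$ we get
\[
V=1+n(q-1)=q^m,\qquad \frac{q^n}{V}=q^{n-m},\qquad k=\frac{V}{q}=q^{m-1}.
\]
So the single-error quantities $N$ and $k$ of Theorem~\ref{thm:all-one-perfect} are the integers $q^{n-m}$ and $q^{m-1}$. Moreover $V-1=q^m-1$, so the right-hand side of \eqref{eq:main-bound} is exactly $\frac{q^m-1}{q^n}W_{n,q}(q^{m-1})$. Note that $m\ge 2$ gives $n\ge q+1\ge 3$, so these parameters are nontrivial; the hypotheses of Theorem~\ref{thm:all-one-perfect} hold in any case once $N$ and $k$ are integers.

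\emph{The bound and its equality case.} Fix a group $G$ of order $q$. Since $q$ is a prime power, we may take the additive group of $\mathbb F_q$, and the Fourier analysis uses only this additive structure. Theorem~\ref{thm:all-one-perfect} then gives the displayed inequality for every $N$-point subset $\C\subseteq\mathbb F_q^n$. It also gives that equality holds if and only if $\C$ is a perfect single-error-correcting code. No linearity enters, because the theorem quantifies over arbitrary subsets.

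\emph{Attainment.} To show the benchmark equals $\Dtwo(q,n,N)$, we need one perfect code. Take the linear $q$-ary Hamming code: its parity-check matrix has as columns one nonzero representative of each of the $(q^m-1)/(q-1)$ one-dimensional subspaces of $\mathbb F_q^m$.
\begin{itemize}
\item Any two columns are linearly independent, so the minimum distance is at least $3$, and the radius-one balls are disjoint.
\item The code has dimension $n-m$, so $|\C|\,V=q^{n-m}q^m=q^n$, and the disjoint balls cover the space.
\end{itemize}
Hence the minimum is attained, and the set of minimizers is exactly the set of perfect codes with these parameters.

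\emph{Linearity is not forced.} The claim that ``linearity is neither assumed nor forced'' requires the existence of nonlinear perfect codes with Hamming parameters. These are classical, for instance Vasil'ev's construction for $q=2$ and its $q$-ary analogues (Sch\"onheim, Lindstr\"om). I would cite them rather than reprove them. This citation is the main point needing care, since everything else is direct substitution into Theorem~\ref{thm:all-one-perfect}.
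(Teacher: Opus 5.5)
Your argument is correct and is the paper's proof: substitute $V=q^m$, $k=q^{m-1}$, $N=q^{n-m}$ into Theorem~\ref{thm:all-one-perfect}, and use the standard $q$-ary Hamming code to show that the benchmark is attained. One caveat on your last paragraph. The paper treats ``linearity is neither assumed nor forced'' only as a gloss on the equality characterization over arbitrary subsets, so no extra result is required. If you do want an explicit nonlinear minimizer, you do not need Vasil'ev, Sch\"onheim or Lindstr\"om, whose constructions miss small cases such as binary $m=2,3$. Any translate $x+\mathcal H$ of the Hamming code by a non-codeword $x$ is already a perfect code that is not a subspace.
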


\begin{proof}
Here \(1+n(q-1)=q^m\), so the parameters in
Theorem~\ref{thm:all-one-perfect} are
\[
 V=q^m,\qquad k=q^{m-1},\qquad N=q^{n-m}.
\]
The standard \(q\)-ary Hamming code shows that equality is attainable.
\end{proof}

When \(q=2\), the integrality assumptions force \(n=2^m-1\) for some
\(m\geq1\), and Theorem~\ref{thm:all-one-perfect} reads
\begin{equation}
\label{eq:binary-main-bound}
 \Dtwo(\C)\geq
 \frac{n}{2^n}\binom{n-1}{(n-1)/2},
\end{equation}
with equality if and only if \(\C\) is a binary perfect
single-error-correcting code.  This is the binary Hamming-code bound of
\cite[Thm.~3.7, Eq.~(44)]{Barg2021}.  Thus \(q=2\) recovers the earlier
binary formula, while Theorem~\ref{thm:all-one-perfect} treats every
alphabet and the same unrestricted class of competitors.  The exact
specialization appears in Appendix~\ref{app:binary-formulas};
Subsection~\ref{sec:small-alphabets} gives the direct certificate used here.

The Fourier formula expresses \(\Dtwo(\C)\) as a nonnegative weighted sum of
the dual distance distribution.  Moment constraints and strict convexity
reduce the one-error theorem to the single inequality
\begin{equation}
\label{eq:intro-bottleneck}
 W_{n,q}(k)<W_{n,q}(k-1),
\end{equation}
whose proof alone splits by alphabet: monotonicity for \(q\geq4\), a positive
coefficient recurrence for \(q=3\), and reflection symmetry for \(q=2\).

\section{Invariance and spectral reduction}
\label{sec:qary-reduction}

This section derives the two energy representations used below.  The Fourier
form drives the single-error and Golay certificates; the distance form drives
the large-alphabet two-error theorem and the repetition codes.

\subsection{Metric invariance}
\label{sec:invariance}

We first place total ball discrepancy into the two energy forms that organize
the paper.  The underlying identity was proved for arbitrary finite metric
spaces, and its distance kernel was then computed in the binary Hamming space
\cite[Thm.~2.1 and Lemma~3.1]{Barg2021}.  Here we specialize the identity to
\(\X=G^n\) and compute the distance kernel for every \(q\).  The distance form
will make the comparison with the binary specialization explicit; the
intersection form will lead, later in this section, to the spectral
representation used in the extremal proofs.

The identity for the total discrepancy in
\eqref{eq:intro-discrepancy} uses two kernels.  The first measures the aggregate change in
distance from \(x\) to \(y\); the second adds the intersections of
equal-radius balls.  Namely, for \(x,y\in\X\), put
\begin{align}
 \lambda(x,y)
   &\triangleq\frac12\sum_{u\in\X}|d(x,u)-d(y,u)|,\label{eq:lambda-def}\\
 \mu_t(x,y)
   &\triangleq|B(x,t)\cap B(y,t)|,\qquad
 \mu(x,y)\triangleq\sum_{t=0}^n\mu_t(x,y).\label{eq:mu-def}
\end{align}
Both kernels depend only on \(w=d(x,y)\).  Write
\(\lambda(w)\), \(\mu_t(w)\), and \(\mu(w)\) for their radial values, with
the dependence on \(n\) and \(q\) suppressed.  The kernel \(\lambda\) is the
distance-energy side of the identity; \(\mu\) is the intersection kernel
whose Fourier expansion will produce the spectral weight.

\begin{theorem}[Invariance identity in the \(q\)-ary Hamming space]
\label{thm:invariance}
Let \(\C\subseteq G^n\) have size \(N\).  Then
\begin{equation}
\label{eq:invariance}
 \Dtwo(\C)
 =\langle\lambda\rangle_{\X}-\langle\lambda\rangle_{\C}
 =\langle\mu\rangle_{\C}-\langle\mu\rangle_{\X},
\end{equation}
where the subscripts denote averages over ordered pairs.  In particular,
\begin{equation}
\label{eq:invariance-distance}
 \Dtwo(\C)=\Lambda_{n,q}
       -\frac1N\sum_{w=1}^n A_w\lambda(w),
\end{equation}
where
\begin{equation}
\label{eq:Lambda}
 \Lambda_{n,q}
   \triangleq\frac1{q^n}\sum_{w=0}^n\binom nw(q-1)^w\lambda(w).
\end{equation}
\end{theorem}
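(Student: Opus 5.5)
We prove the invariance identity by expanding the square in the definition of \(\Dtwo(\C)\) and rewriting everything in terms of pair averages. Write \(\langle f\rangle_{\C}=N^{-2}\sum_{z,z'\in\C}f(z,z')\) and \(\langle f\rangle_{\X}=q^{-2n}\sum_{x,y\in\X}f(x,y)\).

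\textbf{Intersection form.} For a fixed radius \(t\) we expand
\[
\sum_x\Bigl(\frac{|\C\cap B(x,t)|}{N}-\frac{V_t}{q^n}\Bigr)^2 .
\]
Three facts handle the terms:
\begin{itemize}
\item \(|\C\cap B(x,t)|=\sum_{z\in\C}\one_{\{d(x,z)\le t\}}\), and \(d(x,z)\le t\) iff \(x\in B(z,t)\).
\item Therefore \(\sum_x|\C\cap B(x,t)|^2=\sum_{z,z'\in\C}\mu_t(z,z')\).
\item Likewise \(\sum_x|\C\cap B(x,t)|=NV_t\).
\end{itemize}
The radius-\(t\) term becomes
\[
\langle\mu_t\rangle_{\C}-2V_t^2/q^n+V_t^2/q^n=\langle\mu_t\rangle_{\C}-V_t^2/q^n .
\]
Applying the same computation with \(\C=\X\), whose discrepancy vanishes, gives \(\langle\mu_t\rangle_{\X}=V_t^2/q^n\). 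Summing over \(t\) yields \(\Dtwo(\C)=\langle\mu\rangle_{\C}-\langle\mu\rangle_{\X}\).

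\textbf{Distance form.} The key step is a pointwise identity relating \(\mu\) and \(\lambda\). Since \(\sum_{t=0}^n\one_{\{a\le t\}}\one_{\{b\le t\}}=n+1-\max(a,b)\) for \(a,b\in\{0,\dots,n\}\), we get
\[
\mu(x,y)=\sum_u\bigl(n+1-\max(d(x,u),d(y,u))\bigr).
\]
Now use \(\max(a,b)=\tfrac12(a+b)+\tfrac12|a-b|\). The quantity \(\sum_ud(x,u)=:S\) is independent of \(x\) by translation invariance. Hence
\[
\mu(x,y)=(n+1)q^n-S-\lambda(x,y).
\]
So \(\mu\) and \(-\lambda\) differ by a constant. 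Since both pair averages are normalized probability averages, the constant cancels in \(\langle\mu\rangle_{\C}-\langle\mu\rangle_{\X}\). This gives \(\Dtwo(\C)=\langle\lambda\rangle_{\X}-\langle\lambda\rangle_{\C}\).

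\textbf{Radial formulas.} It remains to pass to the radial expressions. Both kernels are invariant under coordinate permutations and under coordinatewise translations and symbol relabelings; these are isometries acting transitively on pairs at a given distance. Hence both kernels depend only on \(w=d(x,y)\).

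For \(\langle\lambda\rangle_{\X}\), fix \(x\). There are \(\binom nw(q-1)^w\) points \(y\) at distance \(w\), which gives \(\Lambda_{n,q}\) as in \eqref{eq:Lambda}.

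For \(\langle\lambda\rangle_{\C}\), group ordered pairs by distance: \(N^{-2}\sum_{z,z'}\lambda(d(z,z'))=N^{-1}\sum_wA_w\lambda(w)\), using the normalization \(A_w=N^{-1}|\{\cdots\}|\). The \(w=0\) term drops because \(\lambda(0)=0\). This is \eqref{eq:invariance-distance}.

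The only genuinely non-mechanical step is the \(\max\) decomposition linking \(\mu\) to \(\lambda\), together with checking that \(\sum_ud(x,u)\) is constant. Everything else is bookkeeping with the normalizations.
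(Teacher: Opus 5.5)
Your proof is correct. It differs from the paper's in that you prove the averaged-kernel identities, whereas the paper imports them.

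The paper does not derive the two forms in \eqref{eq:invariance}. It quotes them from Barg's Stolarsky identity for finite metric spaces \cite[Thm.~2.1]{Barg2021}. It then carries out only the radial bookkeeping: grouping codeword pairs by distance, and counting the \(\binom nw(q-1)^w\) points at distance \(w\) from a fixed point. That step matches your last step exactly.

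You instead prove the identity from scratch, in two moves:
\begin{enumerate}
\item Expanding the square gives the intersection form. This uses only that \(V_t\) does not depend on the center.
\item The count \(\sum_{t=0}^n\one_{\{a\le t\}}\one_{\{b\le t\}}=n+1-\max(a,b)\), together with \(\max(a,b)=\tfrac12(a+b)+\tfrac12|a-b|\), gives the pointwise relation
\[
\mu(x,y)=(n+1)q^n-n(q-1)q^{n-1}-\lambda(x,y).
\]
This uses only that \(\sum_u d(x,u)\) does not depend on \(x\).
\end{enumerate}

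The citation is shorter and places the result in Barg's general framework. Your route is self-contained, makes explicit the two homogeneity facts of \(G^n\) that the specialization relies on, and produces an explicit constant linking \(\mu\) and \(-\lambda\).

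That pointwise relation also justifies the paper's later coefficient identity \eqref{eq:W-lambda-coefficient}, which is stated there without proof. Since \(\mu\) and \(-\lambda\) differ by a constant, their nonconstant Krawtchouk coefficients are negatives of each other. Expanding \(\mu=\sum_t \one_{B(0,t)}*\one_{B(0,t)}\) via the ball transform shows those coefficients of \(\mu\) are exactly \(q^{-n}W_{n,q}(r)\).
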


\begin{proof}
The two averaged-kernel forms are the finite-metric identity and its
ball-intersection form from
\cite[Sec.~2, especially Thm.~2.1]{Barg2021}.  Since
\(\lambda(x,y)=\lambda(d(x,y))\), grouping ordered pairs of codewords by
their distance gives \eqref{eq:invariance-distance}.  Finally, a fixed point
has \(\binom nw(q-1)^w\) points at distance \(w\), which gives
\eqref{eq:Lambda}.
\end{proof}

\subsection{The distance kernel}

The increment law below is the form needed later.  A full multinomial formula
and its binary specialization are recorded in
Appendix~\ref{app:binary-formulas}.

For \(j\geq0\), define
\begin{equation}
\label{eq:Tj}
 T_j(q)\triangleq\CT_z(z+z^{-1}+q-2)^j
 =\sum_{a=0}^{\lfloor j/2\rfloor}
   \frac{j!}{a!^2(j-2a)!}(q-2)^{j-2a}.
\end{equation}

\begin{proposition}[Distance-kernel increments]
\label{prop:lambda}
For \(0\leq w\leq n-1\),
\begin{equation}
\label{eq:lambda-increment}
 \lambda(w+1)-\lambda(w)
    =q^{n-w-1}T_w(q).
\end{equation}
Consequently, for \(0\leq w\leq n\),
\begin{equation}
\label{eq:lambda-sum}
 \lambda(w)
   =q^{n-1}\sum_{j=0}^{w-1}q^{-j}T_j(q).
\end{equation}
In particular, \(\lambda\) is strictly increasing when \(q>2\).  The binary
plateaus are described in Appendix~\ref{app:binary-formulas}.
\end{proposition}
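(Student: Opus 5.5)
The plan is to compute \(\lambda(w)\) directly from the definition \eqref{eq:lambda-def} by a coordinatewise factorization, express it through a Laurent-polynomial generating function, and then obtain the increment law from a one-step recurrence for that generating function.

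\emph{Step 1: reduction to a sum over the differing coordinates.} By translation and coordinate permutation, take \(x=0\) and let \(y\) be nonzero exactly on the first \(w\) coordinates. Write
\[
 d(x,u)-d(y,u)=\sum_i\bigl(\one_{\{u_i\neq x_i\}}-\one_{\{u_i\neq y_i\}}\bigr).
\]
On the \(n-w\) coordinates where \(x_i=y_i\), the summand vanishes. Those coordinates therefore only contribute a factor \(q^{n-w}\) to the count of \(u\).

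On each differing coordinate, the summand is \(-1\) if \(u_i=x_i\), \(+1\) if \(u_i=y_i\), and \(0\) for the remaining \(q-2\) symbols. Hence the number of \(u\in G^w\) on which the total difference equals \(k\) is
\[
 a_{w,k}\triangleq[z^k](z+z^{-1}+q-2)^w.
\]
This gives
\[
 \lambda(w)=\tfrac12\,q^{n-w}M_w,\qquad M_w\triangleq\sum_k|k|\,a_{w,k},
\]
and \(T_w(q)=a_{w,0}\) by \eqref{eq:Tj}.

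\emph{Step 2: the recurrence for \(M_w\).} Multiplying by one more factor gives
\[
 a_{w+1,k}=a_{w,k-1}+a_{w,k+1}+(q-2)a_{w,k}.
\]
Reindex each shifted sum:
\[
 M_{w+1}=\sum_k\bigl(|k+1|+|k-1|+(q-2)|k|\bigr)a_{w,k}.
\]
For \(k\neq0\) one has \(|k+1|+|k-1|=2|k|\). For \(k=0\) the same expression equals \(2\). Therefore
\[
 M_{w+1}=qM_w+2a_{w,0}=qM_w+2T_w(q).
\]
This is the only step with any content: the absolute value is linear away from \(0\), and the kink at \(0\) produces exactly the constant term.

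\emph{Step 3: conclusion.} Substituting into Step 1 gives
\[
 \lambda(w+1)-\lambda(w)=\tfrac12 q^{n-w-1}\bigl(M_{w+1}-qM_w\bigr)=q^{n-w-1}T_w(q),
\]
which is \eqref{eq:lambda-increment}. Since \(\lambda(0)=0\), telescoping the increments yields \eqref{eq:lambda-sum}.

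For \(q>2\), every term in the sum \eqref{eq:Tj} is nonnegative, and the \(a=0\) term is \((q-2)^j>0\). So each increment is strictly positive and \(\lambda\) is strictly increasing. For \(q=2\), \(T_j(2)\) vanishes for odd \(j\), which produces the binary plateaus mentioned in the statement.
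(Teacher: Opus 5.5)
Your proof is correct and is essentially the paper's argument in counting form. The paper writes \(\lambda(w)=q^n\E|S_w|/2\) for the lazy walk with steps \(\pm1\) (probability \(1/q\) each) and \(0\) (probability \((q-2)/q\)), and uses that \(\E(|s+X_{w+1}|-|s|)\) equals \(2/q\) at \(s=0\) and vanishes otherwise. Under \(a_{w,k}=q^w\Pr(S_w=k)\) and \(M_w=q^w\E|S_w|\), this is exactly your recurrence \(M_{w+1}=qM_w+2a_{w,0}\).
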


\begin{proof}
Let \(X_1,X_2,\ldots\) be independent with probabilities \(1/q,1/q\), and
\((q-2)/q\) at \(1,-1\), and \(0\), respectively, and put
\(S_w=X_1+\cdots+X_w\).  Comparing two words at distance \(w\) coordinate by
coordinate gives \(\lambda(w)=q^n\E|S_w|/2\).  For every integer \(s\),
\[
 \E\bigl(|s+X_{w+1}|-|s|\bigr)
 =\begin{cases}2/q,&s=0,\\0,&s\neq0.\end{cases}
\]
Consequently,
\[
 \lambda(w+1)-\lambda(w)
 =q^{n-1}\Pr(S_w=0).
\]
Since \(\Pr(S_w=0)=q^{-w}T_w(q)\), this proves
\eqref{eq:lambda-increment}; summing the increments proves
\eqref{eq:lambda-sum}.  Since \(T_w(q)>0\) for \(q>2\), the strict
increase follows.
\end{proof}

This completes the distance-energy branch.  We now turn to the
intersection/Fourier branch that drives the optimization.  The binary
specialization is collected in Appendix~\ref{app:binary-formulas}.

\subsection{The spectral energy}
\label{sec:fourier}

The Fourier representation drives the extremal results.  The transform of
each ball collapses, after summing over all radii, into
one scalar weight at each character weight.  The geometric optimization
problem will therefore become a moment problem for the nonnegative dual
distribution.

Recall from \eqref{eq:c-def} that \(c_w(t)\) is the Fourier coefficient of a
radius-\(t\) ball at character weight \(w\).  For \(q=2\), it is the
coefficient denoted \(c_k(t)\) in \cite[Sec.~4]{Barg2021}.  The generating
function \eqref{eq:kraw-gen} gives its \(q\)-ary dimension drop.

\begin{lemma}[Fourier transform of a Hamming ball]
\label{lem:ball-transform}
For \(1\leq w\leq n\) and \(0\leq t\leq n-1\),
\begin{equation}
\label{eq:lloyd-dimension-drop}
 c_w(t)=K_t^{(n-1,q)}(w-1).
\end{equation}
Moreover, \(c_w(n)=0\).
\end{lemma}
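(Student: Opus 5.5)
The plan is to work entirely with generating functions, using \eqref{eq:kraw-gen}. By definition \eqref{eq:c-def}, \(c_w(t)\) is the coefficient of \(z^t\) in the partial-sum series
\[
 \frac{1}{1-z}\sum_{j\ge0}K_j^{(n,q)}(w)z^j
 =\frac{(1+(q-1)z)^{n-w}(1-z)^w}{1-z}.
\]
Here we use the fact that multiplying a power series by \((1-z)^{-1}\) replaces its coefficients by their partial sums.

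For \(w\ge1\), the factor \(1-z\) cancels exactly. This leaves the polynomial
\[
 (1+(q-1)z)^{(n-1)-(w-1)}(1-z)^{w-1},
\]
which by \eqref{eq:kraw-gen}, applied with \(n-1\) in place of \(n\) and \(w-1\) in place of \(w\), is
\[
 \sum_{t=0}^{n-1}K_t^{(n-1,q)}(w-1)z^t.
\]
This step is legitimate because \(0\le w-1\le n-1\). Comparing coefficients of \(z^t\) for \(0\le t\le n-1\) gives \eqref{eq:lloyd-dimension-drop}.

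For the second claim, the same identity shows that the generating function of the partial sums is a polynomial of degree at most \(n-1\). Its \(z^n\) coefficient is therefore \(0\), and since that coefficient is \(c_w(n)\), we get \(c_w(n)=0\).

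One can also check this directly. \(c_w(n)=\sum_{j=0}^nK_j^{(n,q)}(w)\) is the value of \eqref{eq:kraw-gen} at \(z=1\), namely \(q^{n-w}\cdot 0^w=0\) for \(w\ge1\). Equivalently, it is the Fourier transform of the indicator of the whole space, evaluated at a nontrivial character.

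The only point needing care is the manipulation of formal power series. The right-hand side of \eqref{eq:kraw-gen} is a polynomial of degree \(n\), so all sums are finite. The identity \(c_w(t)=[z^t]\,P(z)/(1-z)\) for a polynomial \(P\) holds coefficientwise for every \(t\ge0\), including \(t\ge n\). There is no real obstacle here; the key observation is simply that \(w\ge1\) lets the factor \(1-z\) divide out, which is exactly the \(q\)-ary ``dimension drop'' from \(n\) to \(n-1\).
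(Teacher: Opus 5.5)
Your proposal is correct and follows essentially the same route as the paper: multiply \eqref{eq:kraw-gen} by \((1-z)^{-1}\), cancel one factor of \(1-z\) using \(w\geq1\), and read off the coefficients as \(K_t^{(n-1,q)}(w-1)\). For \(c_w(n)=0\), the paper uses the character-sum argument you give as your alternative, whereas your primary degree-\((n-1)\) argument reaches the same conclusion directly from the same identity.
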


\begin{proof}
Multiplying \eqref{eq:kraw-gen} by \((1-z)^{-1}\), we find
\[
 \sum_{t\geq0}c_w(t)z^t
 =(1+(q-1)z)^{n-w}(1-z)^{w-1}.
\]
The coefficient of \(z^t\) is the right-hand side of
\eqref{eq:lloyd-dimension-drop}.  The last assertion is the vanishing of a
nontrivial character summed over the entire space.
\end{proof}

Combining \eqref{eq:W-def} with Lemma~\ref{lem:ball-transform} gives
\begin{equation}
\label{eq:W-kraw-squares}
 W_{n,q}(w)=\sum_{t=0}^{n-1}c_w(t)^2
 =\sum_{t=0}^{n-1}\bigl(K_t^{(n-1,q)}(w-1)\bigr)^2.
\end{equation}

The symbol \(W_{n,q}\) isolates the nonnegative coefficient whose shape will
be analyzed.  Its relation to the kernel coefficients in
\cite{Barg2021} is immediate.  If
\[
 \lambda(w)=\sum_{r=0}^n\wh\lambda_r^{(n,q)}
 K_r^{(n,q)}(w)
\]
is the corresponding Krawtchouk expansion, then
\begin{equation}
\label{eq:W-lambda-coefficient}
 \wh\lambda_0^{(n,q)}=\Lambda_{n,q},\qquad
 \wh\lambda_r^{(n,q)}=-q^{-n}W_{n,q}(r)\quad(1\leq r\leq n).
\end{equation}
Thus \(W_{n,q}(r)=-q^n\wh\lambda_r^{(n,q)}\) is the nonnegative weight
obtained from the negative of the corresponding nonconstant kernel
coefficient.

\begin{theorem}[Spectral discrepancy formula]
\label{thm:spectral}
For every nonempty code \(\C\subseteq G^n\),
\begin{equation}
\label{eq:spectral-discrepancy}
 \Dtwo(\C)=\frac1{q^n}\sum_{w=1}^n\Adual_w(\C)W_{n,q}(w).
\end{equation}
\end{theorem}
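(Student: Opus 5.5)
We prove the spectral formula by expanding each radius term of \eqref{eq:intro-discrepancy} in Fourier space. Fix \(t\) and put \(f_t(x)\triangleq |\C\cap B(x,t)|/N - V_t/q^n\). Since \(B(x,t)=x+B(0,t)\) and the ball is symmetric under negation, \(|\C\cap B(x,t)|=(\one_\C*\one_{B(0,t)})(x)\). Thus \(f_t=N^{-1}\one_\C*\one_{B(0,t)}-V_tq^{-n}\one_{\X}\).

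Take the Fourier transform with the convolution rule. At \(\xi=0\) we have \(\wh{\one_\C}(0)=N\) and \(\wh{\one_{B(0,t)}}(0)=V_t\). Hence \(\wh f_t(0)=V_t-V_t=0\), because \(\wh{\one_\X}(0)=q^n\). For \(\xi\neq0\), \(\wh{\one_\X}(\xi)=0\). The transform of the ball is radial and equals \(\sum_{j\le t}K_j^{(n,q)}(\wt\xi)=c_{\wt(\xi)}(t)\), using the stated fact that the sphere transforms to Krawtchouk values. Therefore \(\wh f_t(\xi)=N^{-1}\wh{\one_\C}(\xi)\,c_{\wt(\xi)}(t)\) for \(\xi\neq0\). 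The coefficient \(c_w(t)\) is real, since it is an integer combination.

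Now apply Plancherel:
\[
\sum_x f_t(x)^2=\frac1{q^n}\sum_{\xi\neq0}\frac{|\wh{\one_\C}(\xi)|^2}{N^2}c_{\wt\xi}(t)^2 .
\]
Group the frequencies by \(w=\wt(\xi)\); the inner sum then becomes \(\Adual_w(\C)\) by \eqref{eq:dual}, giving \(q^{-n}\sum_{w\ge1}\Adual_w c_w(t)^2\). Sum over \(t=0,\dots,n\). The term \(t=n\) vanishes by Lemma~\ref{lem:ball-transform}, since \(c_w(n)=0\) for \(w\ge1\); equivalently, \(f_n\equiv0\) directly. So only \(t\le n-1\) contributes, and interchanging the finite sums produces \(W_{n,q}(w)\) as defined in \eqref{eq:W-def}. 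This yields \eqref{eq:spectral-discrepancy}.

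The only step requiring care is the bookkeeping of Fourier conventions. We need to check that \(|\C\cap B(x,t)|\) is a convolution with the correct sign: ball symmetry makes \(x-y\in B(0,t)\) equivalent to \(y\in B(x,t)\). We also need that the transform of the weight-\(j\) sphere at \(\xi\) depends only on \(\wt(\xi)\) and equals \(K_j^{(n,q)}(\wt\xi)\) with the conjugate-character convention. This holds because the sphere is closed under negation, so conjugating the character does not change the sum. There is no real obstacle beyond this; an alternative check is to compare with the intersection form \(\langle\mu\rangle_\C-\langle\mu\rangle_\X\) of Theorem~\ref{thm:invariance}, whose Fourier expansion gives the same weights.
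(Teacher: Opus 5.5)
Your proposal is correct and follows the paper's proof: both write the radius-$t$ deviation as $\frac1N\one_\C*\one_{B(0,t)}-\frac{V_t}{q^n}\one_{\X}$, observe that its transform vanishes at the trivial character and equals $\frac1N\wh{\one_\C}(\xi)c_{\wt(\xi)}(t)$ elsewhere, then apply Parseval, group frequencies by weight, and sum over $t$, dropping $t=n$ via $c_w(n)=0$. Your extra bookkeeping on the convolution sign and the conjugate-character convention is accurate but not needed beyond what the paper's short argument already uses.
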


\begin{proof}
For fixed \(t\), the expression inside the square in
\eqref{eq:intro-discrepancy} is
\[
 g_t\triangleq
 \frac1N\one_\C*\one_{B(0,t)}-\frac{V_t}{q^n}\one_{\X}.
\]
Its Fourier transform vanishes at the trivial character and, at every
nontrivial frequency \(\xi\), equals
\[
 \wh g_t(\xi)=\frac1N\wh{\one_\C}(\xi)c_{\wt(\xi)}(t).
\]
Parseval, Lemma~\ref{lem:ball-transform}, and summation over \(t\) give
\eqref{eq:spectral-discrepancy}.
\end{proof}

\subsection{The shape of the spectral weight}

The problem is now reduced to minimizing the expectation of one scalar
sequence against a nonnegative distribution with constrained moments.  The
next theorem and its corollary supply the analytic properties that make this
possible.

In the binary case, Barg \cite{Barg2021} derived a closed form for
\eqref{eq:W-def}.  Identities for sums of squared Krawtchouk polynomials
have a substantial literature; see, for instance,
\cite{Dette1994}.  We use instead a direct constant-term representation that
retains the dependence on \(q\).

For \(0\leq\theta\leq2\pi\), define
\[
 \alpha_q(\theta)\triangleq|1+(q-1)e^{i\theta}|^2,\qquad
 \beta(\theta)\triangleq|1-e^{i\theta}|^2.
\]

\begin{theorem}[Integral representation]
\label{thm:integral-representation}
For \(1\leq w\leq n\),
\begin{equation}
\label{eq:W-integral}
 W_{n,q}(w)=\frac1{2\pi}\int_0^{2\pi}
 \alpha_q(\theta)^{\,n-w}\beta(\theta)^{\,w-1}\,d\theta.
\end{equation}
\end{theorem}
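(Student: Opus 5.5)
We will prove the integral formula \eqref{eq:W-integral} by Parseval's identity on the unit circle, applied to the generating function from Lemma~\ref{lem:ball-transform}. Fix \(1\leq w\leq n\) and put
\[
 F_w(z)\triangleq(1+(q-1)z)^{n-w}(1-z)^{w-1}
 =\sum_{t=0}^{n-1}c_w(t)z^t .
\]
The proof of Lemma~\ref{lem:ball-transform} gives this identity, since \(\sum_{t\geq0}c_w(t)z^t=(1+(q-1)z)^{n-w}(1-z)^{w-1}\). The right-hand side is a polynomial of degree \((n-w)+(w-1)=n-1\), so the series terminates at \(t=n-1\). This matches the range \(0\leq t\leq n-1\) in the definition \eqref{eq:W-def} of \(W_{n,q}(w)\), consistent with \(c_w(n)=0\).

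Next we observe that \(F_w\) has real (integer) coefficients. Hence, on \(z=e^{i\theta}\),
\[
 |F_w(e^{i\theta})|^2=F_w(e^{i\theta})\overline{F_w(e^{i\theta})}
 =\sum_{t,t'}c_w(t)c_w(t')e^{i(t-t')\theta}.
\]
Integrating over \([0,2\pi]\) and using \(\frac1{2\pi}\int_0^{2\pi}e^{im\theta}\,d\theta=\one_{\{m=0\}}\) leaves only the diagonal terms:
\[
 \frac1{2\pi}\int_0^{2\pi}|F_w(e^{i\theta})|^2\,d\theta=\sum_{t=0}^{n-1}c_w(t)^2=W_{n,q}(w).
\]
Equivalently, this is the constant term \(\CT_z F_w(z)F_w(z^{-1})\), which fits the constant-term viewpoint announced before the theorem.

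Finally, multiplicativity of the modulus gives
\[
 |F_w(e^{i\theta})|^2=|1+(q-1)e^{i\theta}|^{2(n-w)}\,|1-e^{i\theta}|^{2(w-1)}
 =\alpha_q(\theta)^{n-w}\beta(\theta)^{w-1},
\]
which yields \eqref{eq:W-integral}. The argument has no real obstacle. The only points needing care are confirming that the degree of \(F_w\) is exactly \(n-1\), so that no term with \(t=n\) is omitted, and treating the edge case \(w=1\), where \(\beta^{0}=1\) and the formula still holds.
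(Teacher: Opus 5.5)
Your proof is correct and follows the paper's argument. The paper likewise writes \(W_{n,q}(w)\) as the sum of squared coefficients of \((1+(q-1)z)^{n-w}(1-z)^{w-1}\), phrased as the values \(K_t^{(m,q)}(x)\) with \(m=n-1\), \(x=w-1\), and applies Parseval on the unit circle; your checks on the degree \(n-1\) and the case \(w=1\) simply make explicit what the paper leaves implicit.
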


\begin{proof}
Put \(m=n-1\) and \(x=w-1\).  By \eqref{eq:kraw-gen}, the values
\(K_t^{(m,q)}(x)\) are the coefficients of
\[
 P_x(z)\triangleq(1+(q-1)z)^{m-x}(1-z)^x.
\]
Parseval on the unit circle gives
\[
 \sum_{t=0}^m|K_t^{(m,q)}(x)|^2
 =\frac1{2\pi}\int_0^{2\pi}|P_x(e^{i\theta})|^2\,d\theta,
\]
which is \eqref{eq:W-integral}.
\end{proof}

\begin{corollary}[Spectral shape]
\label{cor:convexity}
The sequence \(W_{n,q}(1),\ldots,W_{n,q}(n)\) is discretely convex and is
strictly convex whenever \(n\geq3\).  More precisely, for
\(1\leq w\leq n-2\),
\begin{equation}
\label{eq:W-convex}
\begin{aligned}
 &W_{n,q}(w+2)-2W_{n,q}(w+1)+W_{n,q}(w)\\
 &\qquad=\frac1{2\pi}\int_0^{2\pi}
 \alpha_q^{\,n-w-2}\beta^{\,w-1}
  (\beta-\alpha_q)^2\,d\theta>0.
\end{aligned}
\end{equation}
For \(q\geq4\), the sequence is strictly decreasing.
\end{corollary}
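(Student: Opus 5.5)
The plan is to work directly from the integral representation of Theorem~\ref{thm:integral-representation}. Write \(\alpha=\alpha_q(\theta)\) and \(\beta=\beta(\theta)\), both nonnegative functions of \(\theta\). Then
\[
 W_{n,q}(w)=\frac1{2\pi}\int_0^{2\pi}\alpha^{\,n-w}\beta^{\,w-1}\,d\theta .
\]
For \(1\le w\le n-2\) all three exponents \(n-w-2\), \(n-w-1\), \(n-w\) are nonnegative. Factoring out \(\alpha^{n-w-2}\beta^{w-1}\), the integrand of the second difference becomes
\[
 \alpha^{n-w-2}\beta^{w-1}\bigl(\beta^2-2\alpha\beta+\alpha^2\bigr)
 =\alpha^{n-w-2}\beta^{w-1}(\beta-\alpha)^2 .
\]
Integrating gives the identity \eqref{eq:W-convex}. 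The integrand is nonnegative, which yields discrete convexity; for \(n\le2\) there is no second difference and the claim is vacuous.

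Strict positivity needs the integrand to be positive on a set of positive measure. Compute
\[
 \alpha=q^2-2q+2+2(q-1)\cos\theta,\qquad \beta=2-2\cos\theta .
\]
The function \(\beta\) vanishes only at \(\theta=0\). The function \(\alpha\) vanishes only when \(q=2\) and \(\theta=\pi\), since \(|1+(q-1)e^{i\theta}|\ge q-2\) with equality only at \(\theta=\pi\). The difference \(\beta-\alpha\) is a nonconstant affine function of \(\cos\theta\) (coefficient \(-2q\neq0\)), so it vanishes at only finitely many \(\theta\). The integrand is therefore continuous and positive off a finite set, and the integral is strictly positive whenever \(n\ge3\).

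For monotonicity when \(q\ge4\), take the first difference:
\[
 W_{n,q}(w)-W_{n,q}(w+1)=\frac1{2\pi}\int_0^{2\pi}\alpha^{n-w-1}\beta^{w-1}(\alpha-\beta)\,d\theta .
\]
It suffices to show \(\alpha>\beta\) pointwise except on a null set, since the remaining factor \(\alpha^{n-w-1}\beta^{w-1}\) is nonnegative and positive almost everywhere. Indeed
\[
 \alpha-\beta=q^2-2q+2q\cos\theta\ \ge\ q^2-4q=q(q-4)\ge0,
\]
with equality only when \(q=4\) and \(\theta=\pi\). This gives strict decrease for every \(q\ge4\) and \(1\le w\le n-1\).

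The only delicate point is the measure-zero bookkeeping: the factors \(\beta^{w-1}\) and \(\alpha^{n-w-1}\) can vanish at isolated points, most notably \(\alpha\) at \(\theta=\pi\) when \(q=2\). This does not affect strictness, because the integrands are continuous. The same computation also shows why \(q\ge4\) is the threshold. For \(q=3\), \(\alpha-\beta=3+6\cos\theta\) changes sign, so monotonicity cannot be read off pointwise. This is consistent with the paper's separate treatment of the ternary case.
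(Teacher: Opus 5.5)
Your proposal is correct and follows essentially the same route as the paper. Both arguments take the first and second finite differences of the integral representation \eqref{eq:W-integral}, factor out \(\alpha_q^{\,n-w-2}\beta^{\,w-1}\) (resp.\ \(\alpha_q^{\,n-w-1}\beta^{\,w-1}\)), and read off the sign pointwise, with \(\alpha_q-\beta\geq q(q-4)\) giving strict decrease for \(q\geq4\). The differences are cosmetic: you parametrize by \(\cos\theta\) where the paper uses \(y=\sin^2(\theta/2)\) to write \(\beta-\alpha_q=q(4y-q)\), and you spell out the finite-zero-set bookkeeping that the paper states in one line.
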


\begin{proof}
Taking the second finite difference in \eqref{eq:W-integral} yields
\eqref{eq:W-convex}.  Away from finitely many values of \(\theta\), all three
factors in its integrand are positive.

For the first difference,
\begin{equation}
\label{eq:W-first-difference}
 W_{n,q}(w+1)-W_{n,q}(w)
 =\frac1{2\pi}\int_0^{2\pi}
 \alpha_q^{\,n-w-1}\beta^{\,w-1}
  (\beta-\alpha_q)\,d\theta.
\end{equation}
Writing \(y=\sin^2(\theta/2)\), we have
\[
 \beta-\alpha_q=4y-\bigl(q^2-4(q-1)y\bigr)=q(4y-q).
\]
For \(q\geq4\), this is nonpositive for all \(\theta\) and is negative on a
set of positive measure.  Equation \eqref{eq:W-first-difference} is therefore
strictly negative.
\end{proof}

\section{The single-error-correcting reduction}
\label{sec:hamming}

Theorem~\ref{thm:all-one-perfect} was stated in
Section~\ref{sec:prelim}.  This section reduces it to the local slope
\eqref{eq:intro-bottleneck}.  The sign follows immediately for alphabets of
size at least four; the two small-alphabet certificates are given in
Subsection~\ref{sec:small-alphabets}.  The second dual moment is recorded
alongside the first because it will also be used for the ternary Golay
certificate in Section~\ref{sec:golay}.

The case \(n=1\) is immediate: the target cardinality is one and every code
under consideration is a singleton whose ball of radius one is the whole space.  We
therefore assume \(n\geq2\) throughout this proof.

\subsection{Moment and support constraints}

Recall that \(\wt(x)\) is the Hamming weight of a point \(x\in G^n\), while
\(\wt(\xi)\) is the character weight of a Fourier frequency
\(\xi\in\widehat G^{\,n}\), namely, the number of its nontrivial coordinate
characters.

\begin{lemma}[First two dual moments]
\label{lem:moments}
For every \(N\)-point code \(\C\subseteq G^n\) with \(n\geq2\),
\begin{align}
 \sum_{w=1}^n\Adual_w
   &=\frac{q^n}{N}-1,\label{eq:mass}\\
 \sum_{w=1}^nw\Adual_w
   &=\frac{q^{n-1}}{N}\bigl(n(q-1)-A_1\bigr),\label{eq:first-moment}\\
 \sum_{w=1}^nw(w-1)\Adual_w
   &=\frac{q^{n-2}}{N}
   \bigl(n(n-1)(q-1)^2-2(n-1)(q-1)A_1+2A_2\bigr).
   \label{eq:second-moment}
\end{align}
\end{lemma}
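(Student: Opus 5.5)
The plan is to sum the MacWilliams--Delsarte identity \eqref{eq:macwilliams} against suitable polynomial weights in \(w\), interchange the order of summation, and evaluate the inner Krawtchouk sums with the generating function \eqref{eq:kraw-gen}. Concretely, for a polynomial \(f\) we have
\[
 \sum_{w=0}^n f(w)\Adual_w=\frac1N\sum_{i=0}^nA_i\sum_{w=0}^n f(w)K_w^{(n,q)}(i).
\]
We then isolate the \(w=0\) term, using \(\Adual_0=1\) (since \(\wh{\one_\C}(0)=N\)) and \(A_0=1\).

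For the mass identity \eqref{eq:mass} we take \(f\equiv1\). Setting \(z=1\) in \eqref{eq:kraw-gen} gives \(\sum_wK_w^{(n,q)}(i)=q^n\one_{\{i=0\}}\). Hence \(\sum_{w=0}^n\Adual_w=q^n/N\), and subtracting \(\Adual_0=1\) yields \eqref{eq:mass}. Alternatively, this follows from Plancherel: \(\sum_\xi|\wh{\one_\C}(\xi)|^2=q^nN\).

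For the first and second factorial moments we take \(f(w)=w\) and \(f(w)=w(w-1)\), which correspond to the first and second derivatives of \eqref{eq:kraw-gen} at \(z=1\). Writing \(F_i(z)=(1+(q-1)z)^{n-i}(1-z)^i\), the factor \((1-z)^i\) kills the \(r\)-th derivative at \(z=1\) unless \(i\leq r\). Thus only \(A_0\) and \(A_1\) survive in the first moment, and only \(A_0,A_1,A_2\) in the second. The required values are routine:
\begin{itemize}
\item \(F_0'(1)=n(q-1)q^{n-1}\) and \(F_1'(1)=-q^{n-1}\);
\item \(F_0''(1)=n(n-1)(q-1)^2q^{n-2}\);
\item \(F_1''(1)=-2(n-1)(q-1)q^{n-2}\), coming from the cross term of the product rule;
\item \(F_2''(1)=2q^{n-2}\).
\end{itemize}
Substituting these values and \(A_0=1\) gives \eqref{eq:first-moment} and \eqref{eq:second-moment}. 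The \(w=0\) terms drop out automatically because \(f(0)=0\). The assumption \(n\geq2\) guarantees that \(A_2\) and the exponent \(q^{n-2}\) make sense.

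No step is a real obstacle. The only care needed is sign bookkeeping in \(F_1''(1)\): the derivative of \((1+(q-1)z)^{n-1}(1-z)\) is \((n-1)(q-1)(1+(q-1)z)^{n-2}(1-z)-(1+(q-1)z)^{n-1}\), and differentiating again at \(z=1\) gives \(-2(n-1)(q-1)q^{n-2}\).
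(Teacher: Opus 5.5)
Your proposal is correct, and all five derivative values you list, including the sign in \(F_1''(1)\), check out. The underlying mechanism is the same as the paper's. Both arguments expand \(\Adual_w\) over ordered pairs of codewords, so that a polynomial moment \(\sum_w f(w)\Adual_w\) becomes \(\frac1N\sum_iA_i\kappa_f(i)\), where the distance kernel \(\kappa_f(i)=\sum_w f(w)K_w^{(n,q)}(i)\) vanishes for \(i>\deg f\). The difference is how you evaluate that kernel.

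The paper computes \(\sum_\xi\wt(\xi)\chi_\xi(x)\) and its second factorial analogue directly, by coordinatewise character orthogonality: the full character sums over the unchosen coordinates vanish unless \(x\) is supported on the chosen ones. You instead pass through \eqref{eq:macwilliams} and read off the kernel as the derivatives \(F_i^{(r)}(1)\) of \eqref{eq:kraw-gen}. There the factor \((1-z)^i\) delivers the support restriction automatically.

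Your route generalizes in one line. By Leibniz, \(F_i^{(j)}(1)=(-1)^i j^{\underline i}(n-i)^{\underline{j-i}}(q-1)^{j-i}q^{n-j}\) for \(i\leq j\), and \(0\) for \(i>j\). This is exactly the factorial-moment formula of Lemma~\ref{lem:factorial-moments}, which the paper later re-derives by the same counting argument. The paper's route, in turn, is self-contained from the Fourier definitions and needs neither the MacWilliams transform nor the Krawtchouk generating function as inputs.
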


\begin{proof}
Equation \eqref{eq:mass} is Parseval.  For the first moment, expand
\eqref{eq:dual} over ordered pairs \(z,z'\).  Character orthogonality gives
\[
 \sum_\xi\wt(\xi)\chi_\xi(z-z')
 =\begin{cases}
 n(q-1)q^{n-1},&z=z',\\
 -q^{n-1},&d(z,z')=1,\\
 0,&d(z,z')>1.
 \end{cases}
\]
Summing over pairs and dividing by \(N^2\) proves
\eqref{eq:first-moment}.

For \eqref{eq:second-moment}, sum over ordered pairs of distinct character
coordinates.  The corresponding character sum is nonzero only if
\(z-z'\) is supported on at most those two coordinates.  A diagonal pair
contributes \(n(n-1)(q-1)^2q^{n-2}\); a pair at distance one contributes
\(-2(n-1)(q-1)q^{n-2}\); and a pair at distance two contributes
\(2q^{n-2}\).  This proves the formula.
\end{proof}

We next translate perfect tiling into spectral concentration.  By the
definition in Section~\ref{sec:prelim}, a perfect code correcting \(e\)
errors satisfies
\begin{equation}
\label{eq:perfect-tiling}
 \one_\C*\one_{B(0,e)}=\one_{\X}.
\end{equation}

\begin{proposition}[Lloyd support]
\label{prop:lloyd-support}
If \(0\leq e\leq n-1\) and \(\C\) is a perfect code correcting \(e\)
errors, then
\[
 \Adual_w(\C)=0
 \quad\text{whenever}\quad
 c_w(e)=K_e^{(n-1,q)}(w-1)\neq0.
\]
Thus the nonconstant Fourier spectrum of a perfect code is supported on the
integer roots of its Lloyd polynomial \(w\mapsto c_w(e)\).
\end{proposition}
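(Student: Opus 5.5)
Take the Fourier transform of the tiling identity \eqref{eq:perfect-tiling}. By the convolution rule from Section~\ref{sec:prelim}, the transform of the left-hand side at a frequency \(\xi\) is
\[
 \wh{\one_\C}(\xi)\,\wh{\one_{B(0,e)}}(\xi).
\]
The second factor equals \(c_{\wt(\xi)}(e)\), since \(\one_{B(0,e)}\) is the sum of the indicators of the spheres of radii \(j\le e\), and each of these transforms to \(K_j^{(n,q)}(\wt(\xi))\). On the other side, the transform of \(\one_{\X}\) equals \(q^n\) at the trivial character and vanishes at every nontrivial \(\xi\), by character orthogonality. Hence, for every \(\xi\) with \(\wt(\xi)=w\ge1\),
\[
 \wh{\one_\C}(\xi)\,c_w(e)=0 .
\]

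Now fix \(w\in\{1,\dots,n\}\) with \(c_w(e)\ne0\). The identity above forces \(\wh{\one_\C}(\xi)=0\) for every frequency of weight \(w\). Summing \(|\wh{\one_\C}(\xi)|^2\) over those frequencies and using the definition \eqref{eq:dual} gives \(\Adual_w(\C)=0\).

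The hypothesis \(e\le n-1\) lets us invoke Lemma~\ref{lem:ball-transform}, which rewrites \(c_w(e)=K_e^{(n-1,q)}(w-1)\). This is the Lloyd polynomial in the stated form. The final sentence follows because \(\Adual_w\ge0\): the only weights \(w\ge1\) that can carry positive mass are those where \(c_w(e)=0\), and these are integer roots of the Lloyd polynomial in \(\{1,\dots,n\}\).

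There is no real obstacle. The only points needing care are the normalization of the ball transform (the convention \(\wh f(\xi)=\sum_x f(x)\overline{\chi_\xi(x)}\) applied to radial functions) and the vanishing of \(\wh{\one_{\X}}\) off the trivial character.
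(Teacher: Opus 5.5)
Your proposal is correct and follows the paper's own proof: Fourier-transform the tiling identity \eqref{eq:perfect-tiling}, obtain \(\wh{\one_\C}(\xi)\,c_{\wt(\xi)}(e)=0\) at every nontrivial character, and read off \(\Adual_w(\C)=0\) from \eqref{eq:dual}. Your remarks on the ball-transform normalization, on \(\wh{\one_{\X}}\) vanishing off the trivial character, and on restricting to \(w\geq1\) only spell out details the paper leaves implicit.
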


\begin{proof}
Take the Fourier transform of \eqref{eq:perfect-tiling}.  At every nontrivial
character,
\[
 \wh{\one_\C}(\xi)c_{\wt(\xi)}(e)=0.
\]
The claim follows from \eqref{eq:dual}.  This is also the Fourier core of
Lloyd's theorem \cite{Lloyd1957,Lenstra1972,Delsarte1973}.
\end{proof}

\begin{remark}[Orthogonal-array and search constraints]
\label{rem:lloyd-oa-search}
Let \(\C\) be a perfect code and let \(\rho\) be the smallest positive
integer root of its Lloyd polynomial.  Proposition~\ref{prop:lloyd-support}
then gives
\[
 \wh{\one_\C}(\xi)=0
 \qquad\text{whenever}\qquad
 1\leq\wt(\xi)<\rho.
\]
By the standard Fourier characterization of orthogonal arrays, \(\C\) is
therefore an orthogonal array of strength \(\rho-1\)
\cite{Delsarte1973}.  Equivalently, for every \(J\subseteq[n]\) with
\(|J|<\rho\) and every \(a\in G^J\),
\[
 \bigl|\{c\in\C:c|_J=a\}\bigr|=\frac{|\C|}{q^{|J|}}.
\]
In particular, \(q^{\rho-1}\mid |\C|\).  The equality arguments below also
prescribe the relevant distance distribution in the one- and two-error
cases.  These are classical necessary consequences of perfectness rather
than new existence results, but they may be imposed as redundant projection,
Fourier, or distance-profile constraints in computational searches.  We make
no claim here that they improve worst-case complexity or practical running
time.
\end{remark}

For the single-error-correcting parameters, put
\[
 V\triangleq V_1=1+n(q-1),\qquad
 N\triangleq\frac{q^n}{V},\qquad
 k\triangleq\frac Vq.
\]
The Lloyd polynomial is \(c_w(1)=V-qw\).  The next lemma shows that a
nontrivial perfect single-error-correcting code forces its unique root to be
an interior integer, even when \(q\) is not a prime power.

\begin{lemma}[Spectrum of a perfect single-error-correcting code]
\label{lem:one-perfect-parameters}
Let \(q\geq2\), \(n\geq2\), and suppose that
\(\C\subseteq G^n\) is a perfect single-error-correcting code.  Then
\[
 |\C|=N,\qquad q\mid V,\qquad k\in\{2,\ldots,n-1\}.
\]
Moreover,
\[
 \Adual_w(\C)=0\quad(w\neq0,k),\qquad
 \Adual_k(\C)=V-1.
\]
\end{lemma}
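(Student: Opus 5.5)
The plan is to read off everything from the tiling identity \eqref{eq:perfect-tiling} with \(e=1\), Proposition~\ref{prop:lloyd-support}, and the mass identity \eqref{eq:mass} of Lemma~\ref{lem:moments}; no new estimate should be needed.

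\emph{Cardinality and the mass.} Counting points in \eqref{eq:perfect-tiling}, or equivalently evaluating its Fourier transform at the trivial character, gives \(|\C|V=q^n\), so \(|\C|=N\). Then \eqref{eq:mass} gives
\[
 \sum_{w=1}^n\Adual_w(\C)=\frac{q^n}{N}-1=V-1 .
\]
This is strictly positive because \(n\geq2\) forces \(V\geq1+2(q-1)>1\).

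\emph{Locating the spectrum.} Since \(n\geq2\), the radius \(e=1\) satisfies \(e\leq n-1\), so Proposition~\ref{prop:lloyd-support} applies. The Lloyd polynomial is \(c_w(1)=V-qw\), so \(\Adual_w(\C)=0\) for every \(w\in\{1,\ldots,n\}\) with \(qw\neq V\). Because the total nonconstant mass \(V-1\) is positive, some \(w\in\{1,\ldots,n\}\) must satisfy \(qw=V\). Hence \(q\mid V\), \(k=V/q\) is an integer lying in \(\{1,\ldots,n\}\), and all of the mass sits at \(w=k\). This gives \(\Adual_w(\C)=0\) for \(w\neq0,k\) and \(\Adual_k(\C)=V-1\).

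\emph{Interior location of the root.} It remains to sharpen \(1\leq k\leq n\) to \(2\leq k\leq n-1\), which is elementary arithmetic.
\begin{itemize}
 \item Lower end: \(qk=V\geq2q-1>q\), so \(k>1\), and integrality gives \(k\geq2\).
 \item Upper end: \(qk=1+n(q-1)=qn-(n-1)<qn\) since \(n\geq2\), so \(k<n\), and integrality gives \(k\leq n-1\).
\end{itemize}

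The only step needing any care is the existence of an integral root. This is exactly where one must use that the positive mass \(V-1\) cannot vanish. Otherwise Proposition~\ref{prop:lloyd-support} alone would be vacuous when \(q\nmid V\).
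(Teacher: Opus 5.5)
Your proof is correct and follows the paper's argument: the tiling identity gives \(|\C|=N\), and the positive mass \(V-1\) together with Lloyd support forces the integral root \(k=V/q\) and puts all the nonconstant mass there. The only difference is cosmetic. For the interior bound you use the direct inequalities \(q<V<qn\), whereas the paper uses the parametrization \(n=qr+1\), \(k=1+r(q-1)\), \(n-k=r\).
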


\begin{proof}
The tiling identity gives \(|\C|V=q^n\), hence \(|\C|=N\).
By Proposition~\ref{prop:lloyd-support}, the nonconstant Fourier spectrum
can be supported only at integral zeros of \(c_w(1)\).  On the other hand,
\eqref{eq:mass} gives
\[
 \sum_{w=1}^n \Adual_w(\C)=\frac{q^n}{N}-1=V-1>0.
\]
Thus \(c_w(1)=V-qw\) has an integral zero in
\(\{1,\ldots,n\}\).  The zero is \(k=V/q\), so \(q\mid V\).
Since \(V\equiv1-n\pmod q\), we have \(n=qr+1\) for some \(r\geq1\);
then \(k=1+r(q-1)\) and \(n-k=r\).  Hence \(1<k<n\).
Lloyd support now gives vanishing away from \(k\), and the mass identity gives
\(\Adual_k(\C)=V-1\).
\end{proof}

\subsection{The supporting-line criterion}

The extremal argument does not require global monotonicity of the spectral
weight.  Strict convexity and one negative adjacent slope at the Lloyd root
are enough.  We isolate this common step before proving the slope separately
in the three alphabet regimes.

\begin{theorem}[The negative-slope criterion]
\label{thm:hamming-perfect}
Let \(q\geq2\) and \(n\geq2\), and assume that the parameters \(N\) and \(k\)
defined above are integers.  If
\begin{equation}
\label{eq:lloyd-slope}
 W_{n,q}(k)<W_{n,q}(k-1),
\end{equation}
then every \(N\)-point code \(\C\subseteq G^n\) satisfies
\begin{equation}
\label{eq:hamming-perfect-value}
 \Dtwo(\C)\geq\frac{V-1}{q^n}W_{n,q}(k).
\end{equation}
Equality holds if and only if \(\C\) is a perfect single-error-correcting
code.
\end{theorem}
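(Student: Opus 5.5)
By Theorem~\ref{thm:spectral}, the plan is to minimize the linear functional \(\sum_{w\ge1}\Adual_w W_{n,q}(w)\) over the nonnegative vector \((\Adual_w)_{w\ge1}\), subject to the moment information of Lemma~\ref{lem:moments}. The bound will come from a supporting line: an affine function \(\ell(w)=a+b(w-k)\) with \(\ell(w)\le W_{n,q}(w)\) for all \(w\in\{1,\dots,n\}\), and with equality exactly at \(w=k\). I would take \(\ell\) to be the secant line through \((k-1,W(k-1))\) and \((k,W(k))\), writing \(W=W_{n,q}\). That is, \(a=W(k)\) and \(b=W(k)-W(k-1)\), so \(b<0\) by \eqref{eq:lloyd-slope}. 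Lemma~\ref{lem:one-perfect-parameters} needs perfectness, but \(k=V/q\) is an integer by hypothesis, and \(2\le k\le n-1\) follows from the same arithmetic \(n=qr+1\) with \(r\ge1\) (the case \(r=0\) gives \(n=1\)). Hence \(k-1\ge1\) lies in the domain.

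\emph{Step 1: the line minorizes \(W\).} Strict discrete convexity from Corollary~\ref{cor:convexity} (valid since \(n\ge3\) here) implies that \(W\) lies strictly above its secant through two consecutive points at every other integer. Thus \(W(w)\ge\ell(w)\), with equality exactly for \(w\in\{k-1,k\}\).

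\emph{Step 2: evaluate against the moments.} Using \eqref{eq:mass} and \eqref{eq:first-moment} with \(q^n/N=V\) gives
\[
 \sum_{w\ge1}\Adual_w\,\ell(w)=a(V-1)+b\Bigl(\tfrac{V}{q}\bigl(n(q-1)-A_1\bigr)-k(V-1)\Bigr).
\]
Since \(n(q-1)=V-1\) and \(k=V/q\), the bracket equals \(-kA_1\le0\). Because \(b<0\), this yields \(\sum\Adual_w\ell(w)=W(k)(V-1)-bkA_1\ge W(k)(V-1)\). Combined with \(\Adual_w\ge0\) and Step~1, this proves \eqref{eq:hamming-perfect-value}. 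The key point is that the negative slope makes the unknown \(A_1\) work in our favour.

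\emph{Step 3: equality.} Equality forces three things. First, \(A_1=0\), since \(b<0\) and \(k>0\). Second, the support of the nonconstant spectrum lies in \(\{k-1,k\}\). Third, the first moment must equal \(k(V-1)\) while the mass is \(V-1\), which forces \(\Adual_{k-1}=0\), so all mass sits at \(k\). Then \(\wh{\one_\C}(\xi)c_{\wt\xi}(1)=0\) at every nontrivial \(\xi\), because \(c_k(1)=V-qk=0\). Hence \(\one_\C*\one_{B(0,1)}\) has Fourier transform supported at the trivial character, where it equals \(NV=q^n\). So the convolution is identically \(1\), which is exactly the tiling identity \eqref{eq:perfect-tiling}. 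Alternatively, \(A_1=0\) already gives disjoint unit balls, and \(NV=q^n\) then gives a partition. Conversely, a perfect code has spectrum \(\Adual_k=V-1\) by Lemma~\ref{lem:one-perfect-parameters}, which attains the bound.

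The main obstacle is bookkeeping rather than depth. One must make sure that the strict convexity of Corollary~\ref{cor:convexity} is available, that is \(n\ge3\), and that \(k-1\ge1\). One must also confirm that the moment identity collapses to the exact \(-kA_1\) term, so that the sign of \(b\) alone settles both the inequality and the rigidity.
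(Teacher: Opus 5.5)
Your proof is correct and follows the paper's argument essentially step for step. The shared steps are the secant through $k-1,k$ used as a supporting line via strict convexity, the collapse of the moment identities to $(V-1)W_{n,q}(k)-bkA_1$, and the equality analysis forcing $A_1=0$, then all nonconstant spectrum at $k$, then tiling by Fourier inversion.

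The one slip is your ``alternatively'' remark. $A_1=0$ alone does not make the radius-one balls disjoint, since codewords at distance $2$ have intersecting unit balls. That shortcut would also need $A_2=0$, which does follow from the second-moment identity once all mass sits at $k$. The Fourier-inversion argument you give first is complete on its own and is exactly what the paper uses.
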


\begin{proof}
The integrality of \(k\), together with
\[
 V\equiv1-n\pmod q,
\]
gives \(n=qr+1\) and \(k=1+r(q-1)\) for some \(r\geq1\).
Thus \(1<k<n\).

Let \(\C\) be any \(N\)-point code.  Strict discrete convexity gives the
supporting affine function through the consecutive points \(k-1,k\):
\begin{equation}
\label{eq:support-line}
 W_{n,q}(w)\geq W_{n,q}(k)+s(w-k),
 \qquad 1\leq w\leq n,
\end{equation}
where
\[
 s=W_{n,q}(k)-W_{n,q}(k-1)<0
\]
by \eqref{eq:lloyd-slope}.  Since
\[
 \frac{q^{n-1}}{N}=\frac Vq=k,\qquad n(q-1)=V-1,
\]
Lemma~\ref{lem:moments} gives
\[
 \sum_{w=1}^n\Adual_w(\C)=V-1
\]
and
\[
 \sum_{w=1}^nw\Adual_w(\C)
   =k\bigl(V-1-A_1(\C)\bigr).
\]
Therefore
\begin{align*}
 \sum_{w=1}^n\Adual_w(\C)W_{n,q}(w)
 &\geq (V-1)W_{n,q}(k)\\
 &\quad+s\left(
 k\bigl(V-1-A_1(\C)\bigr)-k(V-1)\right)\\
 &=(V-1)W_{n,q}(k)-skA_1(\C)\\
 &\geq(V-1)W_{n,q}(k).
\end{align*}
Theorem~\ref{thm:spectral} proves \eqref{eq:hamming-perfect-value}.

It remains to identify equality.  Strict convexity shows that equality in
\eqref{eq:support-line} occurs only at \(w=k-1,k\).  Equality in the two
inequalities above first forces \(A_1(\C)=0\), because \(s<0\), and then
forces the nonconstant dual distribution to be supported on
\(\{k-1,k\}\).  The mass and first-moment identities become
\[
 \Adual_{k-1}+\Adual_k=V-1,\qquad
 (k-1)\Adual_{k-1}+k\Adual_k=k(V-1).
\]
Hence \(\Adual_{k-1}=0\), so all nonconstant Fourier mass is supported at
\(k\).
Since
\[
 c_w(1)=V-qw=q(k-w),
\]
the Fourier transform of
\(\one_{\C}*\one_{B(0,1)}\) vanishes at every nontrivial character.  At
the trivial character it equals
\[
 NV=q^n.
\]
Fourier inversion therefore gives
\[
\one_{\C}*\one_{B(0,1)}=\one_{\X},
\]
so \(\C\) is a perfect single-error-correcting code.

Conversely, if \(\C\) is a perfect single-error-correcting code, then
Lemma~\ref{lem:one-perfect-parameters} and
Theorem~\ref{thm:spectral} give equality in
\eqref{eq:hamming-perfect-value}.
\end{proof}

The argument is an explicit degree-one Delsarte dual certificate.  Its
content is the supporting-line construction from convexity and one local sign,
together with the rigid analysis of equality.

For alphabets of size at least four, the strict decrease in
Corollary~\ref{cor:convexity} gives \eqref{eq:lloyd-slope} immediately.
Consequently, Theorem~\ref{thm:hamming-perfect} proves
Theorem~\ref{thm:all-one-perfect} for \(q\geq4\).
\subsection{Small-alphabet slope certificates}
\label{sec:small-alphabets}

The extremal statement is the same for every alphabet.  Only the verification
of the remaining Lloyd-root inequality changes for the two small alphabets.
Global monotonicity fails for \(q=3\), while for \(q=2\) it is replaced by an
exact reflection symmetry.  Each case therefore supplies a different proof of
the same local sign required by Theorem~\ref{thm:hamming-perfect}.

\paragraph{The ternary case.}

When \(q=3\), integrality of \(k\) gives \(n=3r+1\) and \(k=2r+1\).
Thus the only missing input to Theorem~\ref{thm:hamming-perfect} is the
following inequality.

\begin{theorem}[Ternary Lloyd-root sign]
\label{thm:ternary-sign}
For every \(r\geq1\),
\begin{equation}
\label{eq:ternary-sign}
 W_{3r+1,3}(2r+1)<W_{3r+1,3}(2r).
\end{equation}
\end{theorem}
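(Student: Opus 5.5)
By the first-difference formula \eqref{eq:W-first-difference} with \(q=3\), \(n=3r+1\), \(w=2r\), and the substitution \(y=\sin^2(\theta/2)\) (so \(\alpha_3=9-8y\), \(\beta=4y\), \(\beta-\alpha_3=3(4y-3)\)), the inequality \eqref{eq:ternary-sign} is equivalent to
\[
 D_r\triangleq\frac1{2\pi}\int_0^{2\pi}(9-8y)^{\,r}(4y)^{\,2r-1}(4y-3)\,d\theta<0 .
\]
Since \(y\) has the arcsine law when \(\theta\) is uniform, \(\frac1{2\pi}\int_0^{2\pi}y^m\,d\theta=\binom{2m}{m}4^{-m}\). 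Expanding \((9-8y)^r\) binomially therefore writes \(D_r\) as a finite alternating sum of hypergeometric terms in \((r,j)\). Equivalently, \(D_r\) is the constant term \(\CT_z\) of \((1+2z)^r(1+2z^{-1})^r(2-z-z^{-1})^{2r-1}\bigl(1-2z-2z^{-1}\bigr)\), up to the factor \(3\).

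The reason global monotonicity is unavailable, and an exact argument is needed, is visible before any computation. The log-derivative of \((9-8y)^ry^{2r}\) vanishes at \(-8/(9-8y)+2/y=0\), that is, at \(y=3/4\), which is exactly the zero of \(4y-3\). So the Laplace main term of \(D_r\) cancels. The sign of \(D_r\) is a second-order effect, and a crude estimate will not decide it.

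My plan is therefore a coefficient recurrence. I would embed \(D_r\) in a two-parameter family
\[
 F(a,b)=\frac1{2\pi}\int_0^{2\pi}\alpha_3^{\,a}\beta^{\,b}\,d\theta ,
\]
so that \(D_r=F(r,2r)-F(r+1,2r-1)\). Integration by parts in \(\theta\) (or creative telescoping on the hypergeometric sum) yields linear three-term relations among neighbouring values of \(F\). From these I would derive a recurrence expressing \(D_{r+1}\) through \(D_r\) and an auxiliary positive quantity, for instance \(E_r=F(r,2r)\) or the neighbouring difference of the opposite sign. The target has the form
\[
 D_{r+1}=p(r)D_r-u(r)E_r ,
\]
with polynomial or rational coefficients satisfying \(p(r)>0\) and \(u(r)\ge0\). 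I would first run Zeilberger's algorithm on the sum to find the minimal recurrence, then look for a rearrangement of this type with nonnegative coefficients. This is the \emph{positive coefficient recurrence} mentioned after \eqref{eq:intro-bottleneck}.

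The base case \(r=1\) is a direct computation. Here \(n=4\), \(k=3\), and one compares \(W_{4,3}(3)\) with \(W_{4,3}(2)\) using \eqref{eq:W-kraw-squares}. The squared Krawtchouk values of \(K_t^{(3,3)}\) at \(x=2\) and \(x=1\) are \(\{1,-3,3,-1\}\) and \(\{1,1,-1,-1\}\)-type coefficient lists, which give \(W(3)=20\) and \(W(2)=\)the squared coefficients of \((1+2z)(1-z)\), namely \(1+1+4=6\). Because these numbers are easy to get wrong, I would recheck them via the identity \(W_{n,q}(w)=\sum_t c_w(t)^2\) before relying on them. If the sign fails at very small \(r\), I would add those values as explicit exceptions, checked by hand. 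Induction along the recurrence then gives \(D_r<0\) for all \(r\ge1\).

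The main obstacle is finding a sign-preserving form of the recurrence. Because of the cancellation at \(y=3/4\), a naive recurrence will have mixed-sign coefficients. If that happens, I would instead track the pair \((D_r,E_r)\) as a vector. I would then show that a suitable cone, such as \(\{D<0,\ -cD\le E\}\), is mapped into itself by the \(2\times2\) transfer matrix, verifying the required polynomial inequalities in \(r\) directly.
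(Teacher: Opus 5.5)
Your outline follows the paper's route: reduce the slope to a hypergeometric coefficient sum, obtain a three-term recurrence by creative telescoping, rearrange it into a sign-preserving form, and seed it with exact initial values. But as written it stops exactly where the proof begins. The substance of the argument is the existence of a sign-preserving rearrangement, and you leave that as something to search for. Nothing in your plan shows that your primary target \(D_{r+1}=p(r)D_r-u(r)E_r\) with \(p>0\), \(u\geq0\) exists.

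What actually works is a version of your fallback cone. On the unit circle \(4y-3=-(1+z+z^{-1})\), not \(1-2z-2z^{-1}\). The identity \(1+z+z^{-1}=z^{-1}(z^3-1)/(z-1)\) then collapses the constant term to a single coefficient: \(W_{3r+1,3}(2r)-W_{3r+1,3}(2r+1)=6\gamma_r\) with \(\gamma_r=[z^{3r}](z-1)^{4r-3}(2z^2+5z+2)^r\). The telescoped recurrence can be written as \(16V_r(\gamma_{r+2}-16\gamma_{r+1})=U_r(\gamma_{r+1}-16\gamma_r)+Q_r\gamma_{r+1}\), with explicit polynomials \(U_r,V_r,Q_r\) that are positive for \(r\geq3\). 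The invariant region is \(\gamma_{r+1}>16\gamma_r>0\). Here \(16\) is the subdominant characteristic root of the recurrence, the value of \((9-8y)(4y)^2\) at the endpoint \(y=1\). The dominant root \(27\) is its value at the cancelling saddle \(y=3/4\) that you correctly identified. Without exhibiting such a recurrence and checking the signs of its coefficients, the induction has no content.

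Your base case is also wrong, and in a way that matters. For \(n=4\), \(q=3\), \(W_{4,3}(2)\) is the coefficient norm of \((1+2z)^2(1-z)=1+3z-4z^3\), namely \(26\). \(W_{4,3}(3)\) is the coefficient norm of \((1+2z)(1-z)^2=1-3z^2+2z^3\), namely \(14\). Your values \(6\) and \(20\) are \(W_{3,3}(2)\) and \(W_{4,3}(4)\). Taken at face value they would make the inequality false at \(r=1\), which no ``explicit exception'' can absorb.

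More importantly, a single initial value cannot seed the induction. The positive form of the recurrence is available only from \(r=3\), and the sequence has not yet entered the cone there, since \(\gamma_4-16\gamma_3=2977-3008<0\). The paper therefore computes \(\gamma_1,\ldots,\gamma_5=2,15,188,2977,54468\) exactly and starts the induction from \(\gamma_5>16\gamma_4\). Any completion of your plan must likewise find the first index at which the sequence enters the invariant region and check all earlier cases directly.
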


\begin{proof}
Appendix~\ref{app:ternary-certificate} rewrites the positive difference in
\eqref{eq:ternary-sign} as \(6\gamma_r\) and proves
\(\gamma_r>0\) for every \(r\geq1\); see
Lemmas~\ref{lem:ternary-coefficient} and
\ref{lem:ternary-recurrence}.
\end{proof}

Theorem~\ref{thm:ternary-sign} verifies
\eqref{eq:lloyd-slope}, so Theorem~\ref{thm:hamming-perfect} proves
Theorem~\ref{thm:all-one-perfect} for \(q=3\).  Appendix
\ref{app:ternary-certificate} also records why complete monotonicity is
unavailable in this case.

\paragraph{The binary case.}

The binary minimum is known from the spectral formula in
\cite{Barg2021}.  Here we use only the local sign needed by the common
single-error-correcting criterion.  The exact specialization of the
\(q\)-ary formulas is recorded in Appendix~\ref{app:binary-formulas}.

\begin{lemma}[Binary symmetry and the Lloyd-root sign]
\label{lem:binary-sign}
For every \(n\geq1\) and \(1\leq w\leq n\),
\begin{equation}
\label{eq:binary-symmetry}
 W_{n,2}(w)=W_{n,2}(n+1-w).
\end{equation}
If \(n\geq2\) and \(k=(n+1)/2\) is an integer, then
\begin{equation}
\label{eq:binary-sign}
 W_{n,2}(k)<W_{n,2}(k-1).
\end{equation}
\end{lemma}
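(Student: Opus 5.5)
The plan is to use the integral representation of Theorem~\ref{thm:integral-representation} with \(q=2\).  In that case
\[
 \alpha_2(\theta)=|1+e^{i\theta}|^2=2+2\cos\theta,\qquad
 \beta(\theta)=|1-e^{i\theta}|^2=2-2\cos\theta,
\]
so the substitution \(\theta\mapsto\theta+\pi\), which preserves Lebesgue measure on the circle modulo \(2\pi\), interchanges \(\alpha_2\) and \(\beta\).  Applied to \eqref{eq:W-integral}, it carries the integrand \(\alpha_2^{\,n-w}\beta^{\,w-1}\) to \(\beta^{\,n-w}\alpha_2^{\,w-1}\).  This is exactly the integrand for the weight \(w'=n+1-w\), because \(n-w'=w-1\) and \(w'-1=n-w\).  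This gives \eqref{eq:binary-symmetry} for every \(1\leq w\leq n\).  (Equivalently, one could use the Krawtchouk reflection \(K_t^{(m,2)}(m-x)=(-1)^tK_t^{(m,2)}(x)\) in \eqref{eq:W-kraw-squares} with \(m=n-1\).  The squares remove the sign, and \(x=w-1\) maps to \(m-x=n-w\).)

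For the sign \eqref{eq:binary-sign}, let \(k=(n+1)/2\) be an integer with \(n\geq2\).  Then \(n\geq3\) is odd and \(2\leq k\leq n-1\), so both \(k-1\) and \(k+1\) lie in \(\{1,\ldots,n\}\).  By symmetry, \(W_{n,2}(k+1)=W_{n,2}(n+1-(k+1))=W_{n,2}(k-1)\), since \(n+1-(k+1)=2k-k-1=k-1\).  Since \(n\geq3\), Corollary~\ref{cor:convexity} gives strict convexity at the centre \(k\) (the index \(w=k-1\) satisfies \(1\leq w\leq n-2\)):
\[
 W_{n,2}(k+1)-2W_{n,2}(k)+W_{n,2}(k-1)>0.
\]
Substituting \(W_{n,2}(k+1)=W_{n,2}(k-1)\) gives \(2W_{n,2}(k-1)>2W_{n,2}(k)\), which is \eqref{eq:binary-sign}.

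No serious obstacle is expected.  The only points to check are the range conditions: that \(k\pm1\in\{1,\ldots,n\}\), and that the convexity index lies in the admissible range of Corollary~\ref{cor:convexity}.  Both follow from \(n=2k-1\geq3\).  One must also confirm that the shift \(\theta\mapsto\theta+\pi\) genuinely swaps \(\alpha_2\) and \(\beta\) rather than merely permuting them up to a factor; this holds because \(\cos(\theta+\pi)=-\cos\theta\).
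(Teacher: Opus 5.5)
Your proposal is correct and follows the paper's proof: the symmetry \eqref{eq:binary-symmetry} comes from a measure-preserving change of variables in \eqref{eq:W-integral} that swaps \(\alpha_2\) and \(\beta\) (you use \(\theta\mapsto\theta+\pi\), the paper uses \(\theta\mapsto\pi-\theta\); both work). The sign \eqref{eq:binary-sign} then follows, as in the paper, by combining \(W_{n,2}(k+1)=W_{n,2}(k-1)\) with strict convexity at index \(k-1\), and you check the range conditions correctly.
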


\begin{proof}
For \(q=2\), the two factors in \eqref{eq:W-integral} are
\[
 \alpha_2(\theta)=4\cos^2(\theta/2),\qquad
 \beta(\theta)=4\sin^2(\theta/2).
\]
The change of variables \(\theta\mapsto\pi-\theta\), interpreted modulo
\(2\pi\), interchanges them and proves \eqref{eq:binary-symmetry}.

If \(k=(n+1)/2\) is an integer and \(n\geq2\), then \(n=2k-1\geq3\).
Symmetry gives
\[
 W_{n,2}(k+1)=W_{n,2}(k-1).
\]
Strict convexity at \(k-1\) therefore gives
\[
 0<W_{n,2}(k+1)-2W_{n,2}(k)+W_{n,2}(k-1)
   =2\bigl(W_{n,2}(k-1)-W_{n,2}(k)\bigr),
\]
which is \eqref{eq:binary-sign}.
\end{proof}

The closed binary formula in \cite[Lemma~4.3]{Barg2021} gives the full shape
of the weight.  Lemma~\ref{lem:binary-sign} extracts only the local fact
needed here, using symmetry and convexity.

\begin{proof}[Completion of the proof of
Theorem~\ref{thm:all-one-perfect}]
The case \(n=1\) was handled at the beginning of
Section~\ref{sec:hamming}.  For \(n\geq2\), the case \(q\geq4\) follows from
Corollary~\ref{cor:convexity}, and the case \(q=3\) follows from
Theorem~\ref{thm:ternary-sign}.  If \(q=2\), then \(k=(n+1)/2\), and
Lemma~\ref{lem:binary-sign} verifies \eqref{eq:lloyd-slope}.
Theorem~\ref{thm:hamming-perfect} now gives the bound and equality
characterization in every remaining case.
\end{proof}

At the binary Hamming parameters, Appendix~\ref{app:binary-formulas} reduces
the general bound to \eqref{eq:binary-main-bound}.

\section{Perfect codes correcting multiple errors}
\label{sec:golay}

Theorem~\ref{thm:all-one-perfect} solves the single-error-correcting problem
uniformly in the alphabet.  When a perfect code corrects more than one error,
Proposition~\ref{prop:lloyd-support}
leaves several possible spectral weights, so the affine certificate need not
match every Lloyd root.  For two-error-correcting parameters, a
universal-optimality criterion settles all alphabets of size at least four.
The packing equation and Lloyd integrality reduce the two smaller alphabets
to the binary repetition and ternary Golay parameters, which admit direct
certificates.  A quartic certificate for the binary Golay code and an
elementary argument for the full binary repetition family then complete the
remaining perfect-code cases.

\subsection{Complete monotonicity of the distance potential}
\label{sec:universal-optimality}

We first prove the distance-kernel property needed for the large-alphabet
two-error theorem.  Write \(\Delta h(w)=h(w+1)-h(w)\).

\begin{proposition}[Completely monotonic discrepancy potential]
\label{prop:complete-monotone}
For \(q\geq4\), the potential
\[
 f_{n,q}(w)\triangleq\lambda(n)-\lambda(w),\qquad 0\leq w\leq n,
\]
is completely monotonic:
\[
 (-1)^j\Delta^jf_{n,q}(w)\geq0
 \qquad(0\leq w,\ 0\leq j\leq n-w),
\]
strictly so when \(j\geq1\).  Consequently, every universally optimal
\(q\)-ary code minimizes \(\Dtwo\) among codes of the same cardinality.
\end{proposition}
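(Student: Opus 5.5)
Since \(f_{n,q}(w)=\lambda(n)-\lambda(w)\), every finite difference of \(f_{n,q}\) of order \(j\geq1\) equals \(-\Delta^j\lambda(w)\), and the case \(j=0\) is the monotonicity of \(\lambda\) from Proposition~\ref{prop:lambda}. So I would prove the equivalent statement that \((-1)^{j-1}\Delta^{j}\lambda(w)>0\) for \(j\geq1\). By \eqref{eq:lambda-increment},
\[
 \Delta\lambda(w)=q^{n-1}\,u_w,\qquad u_w\triangleq q^{-w}T_w(q)=\Pr(S_w=0),
\]
so the claim reduces to complete monotonicity of the sequence \(u_w\): \((-1)^i\Delta^iu_w>0\) for all \(i\geq0\) in the admissible range.

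To get this, I would write \(u_w\) as a moment sequence. By \eqref{eq:Tj},
\[
 u_w=\CT_z\Bigl(\frac{z+z^{-1}+q-2}{q}\Bigr)^w
 =\frac1{2\pi}\int_0^{2\pi}\phi(\theta)^w\,d\theta,
 \qquad \phi(\theta)\triangleq\frac{q-2+2\cos\theta}{q}.
\]
For \(q\geq4\) we have \(\phi(\theta)\in[(q-4)/q,1]\subseteq[0,1]\), so \(\phi\geq0\). Then
\[
 (-1)^i\Delta^iu_w=\frac1{2\pi}\int_0^{2\pi}\phi^w(1-\phi)^i\,d\theta ,
\]
whose integrand is nonnegative and positive off a finite set of \(\theta\). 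This gives strict positivity for every \(i\geq0\), and hence strict complete monotonicity of \(f_{n,q}\) for \(j\geq1\). The only place the argument needs \(q\geq4\) is the nonnegativity of \(\phi\); for \(q=3\), \(\phi\) takes negative values, which is consistent with the paper's remark that complete monotonicity fails there. This sign check is the main point to watch, and it is immediate.

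For the consequence, I would use \eqref{eq:invariance-distance}. Since \(\sum_w A_w=N\), it can be rewritten as
\[
 \Dtwo(\C)=\Lambda_{n,q}-\lambda(n)+\frac1N\sum_{w=1}^nA_wf_{n,q}(w).
\]
So minimizing \(\Dtwo\) at fixed \(N\) is the same as minimizing the \(f_{n,q}\)-energy \(\sum_{z\neq z'}f_{n,q}(d(z,z'))\). A universally optimal code (Cohn--Kumar/Cohn--Zhao sense) minimizes this energy for every completely monotonic potential, and the result follows. If the definition of universal optimality requires the potential on all of \(\{0,\ldots,n\}\), or nonnegativity, those hold here: \(f_{n,q}\geq0\) is already the \(j=0\) case.
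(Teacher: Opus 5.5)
Your proof is correct and follows the paper's argument essentially verbatim. The paper also writes $p_w=q^{-w}T_w(q)$ as $\frac1{2\pi}\int_0^{2\pi}\bigl(1-\frac4q\sin^2\frac\theta2\bigr)^w\,d\theta$, which is your $\phi$. It then uses $q\geq4$ to put the base in $[0,1]$, and rewrites \eqref{eq:invariance-distance} as a fixed constant plus $\frac1N E_f(\C)$ before invoking universal optimality.

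There is one harmless slip. With the sum over $1\leq w\leq n$, the constant should be $-(1-\frac1N)\lambda(n)$ rather than $-\lambda(n)$. Your version is right if the sum includes $w=0$, where $A_0f_{n,q}(0)=\lambda(n)$. This does not affect the conclusion, since the constant depends only on $n,q,N$.
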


\begin{proof}
Put \(p_w=q^{-w}T_w(q)\).  Fourier inversion gives
\[
 p_w=\frac1{2\pi}\int_0^{2\pi}
 \left(1-\frac4q\sin^2\frac{\theta}{2}\right)^w\,d\theta.
\]
For \(q\geq4\) the base lies in \([0,1]\), and hence
\[
 (-1)^j\Delta^jp_w
 =\frac1{2\pi}\int_0^{2\pi}
 \left(1-\frac4q\sin^2\frac{\theta}{2}\right)^w
 \left(\frac4q\sin^2\frac{\theta}{2}\right)^j\,d\theta\geq0,
\]
with strict inequality for \(j\geq1\).  Proposition~\ref{prop:lambda} gives
\[
 \Delta f_{n,q}(w)=-q^{n-1}p_w,
\]
so all higher alternating differences have the required sign; order zero
follows from the monotonicity of \(\lambda\).

If \(E_f(\C)=\sum_{w=1}^nA_w(\C)f_{n,q}(w)\), then
\eqref{eq:invariance-distance} becomes
\[
 \Dtwo(\C)=\frac1N E_f(\C)+\Lambda_{n,q}
            -\left(1-\frac1N\right)\lambda(n).
\]
The last two terms are fixed, proving the final assertion by universal
optimality \cite{CohnZhao2014}.
\end{proof}

For \(q\geq4\), prior universal-optimality results already imply conditional
minimization of an existing perfect one-error code
\cite{AshikhminBarg1999,CohnZhao2014}.  Our direct theorem also supplies the
parameter-only bound and equality-to-tiling implication.  Complete
monotonicity fails in the ternary case; Appendix~\ref{app:ternary-certificate}
records the first failed finite difference.

\subsection{Proof of the two-error benchmark}

We now prove Theorem~\ref{thm:two-perfect-all-q}.  Recall from
Section~\ref{sec:prelim} the quantities \(V^{(2)},N^{(2)},L_2,r,s\), the
formal masses \(b_r,b_s\), and the benchmark \(\delta_{n,q}^{(2)}\).  For a
perfect two-error-correcting code, Lloyd support leaves exactly the two
nonconstant spectral weights \(r,s\), and the total spectral mass and first
moment determine their masses uniquely.

We first treat \(q\geq4\), where the discrepancy potential is completely
monotonic.  Potential-independent quadrature bounds in Hamming spaces go back
to Levenshtein and were developed for energy minimization in
\cite{Levenshtein1999,BoyvalenkovEtAl2017}.  Conditional on the existence of
a perfect code, universal optimality for \(q\geq4\) also follows from
\cite[Props.~9 and 23]{CohnZhao2014}.  The argument below works before
existence is known and compares every genuine distance distribution with the
formal Lloyd distribution directly.

\begin{remark}[Relation to universal energy bounds]
\label{rem:two-error-literature}
The quadrature nodes and the pair-covering auxiliary function belong to the
Levenshtein--Delsarte theory of universal energy bounds
\cite{Levenshtein1999,CohnZhao2014,BoyvalenkovEtAl2017}.  Conditional on the
existence of a perfect code, these results already provide energy-minimizing
statements in the completely monotonic regime.  The role of the next lemma is
to isolate the additional point required here: the comparison remains valid
against a formal inverse transform that need not be a feasible distance
distribution, and equality for a genuine code can therefore be converted
back into perfect tiling.
\end{remark}

For \(q\geq4\), put \(a=q-1\).  The centered Lloyd polynomial is
\begin{equation}
\label{eq:lloyd-two-centered}
 2L_2(\bar w+y)=q^2y^2+q(q-4)y+2-an.
\end{equation}
Define the formal dual distribution \(Q^*\) by
\[
 Q_0^*=1,\qquad Q_r^*=b_r,\qquad Q_s^*=b_s,
\]
with all other coordinates zero.  Let \(K\) be the Krawtchouk matrix
\(K_{ij}=K_i^{(n,q)}(j)\), so that \(K^2=q^nI\), and put
\begin{equation}
\label{eq:formal-A-star}
 A^*\triangleq\frac1{V^{(2)}}KQ^*.
\end{equation}
The quadrature calculation in Appendix~\ref{app:two-error-details} proves
that
\begin{equation}
\label{eq:formal-two-design}
 1<r<\bar w<s<n,\qquad b_r,b_s>0,
\end{equation}
and
\begin{equation}
\label{eq:formal-two-design-moments}
 A_0^*=1,\qquad
 A_i^*=0\quad(1\leq i\leq\min\{4,n\}),\qquad
 \sum_{i=0}^nA_i^*=N^{(2)}.
\end{equation}
The later coordinates of \(A^*\) are not assumed to be nonnegative.

\begin{lemma}[Formal Delsarte comparison at the two Lloyd roots]
\label{lem:formal-two-comparison}
Assume \(q\geq4\) and the hypotheses of
Theorem~\ref{thm:two-perfect-all-q}.  Let \(A\) be the distance distribution
of an \(N^{(2)}\)-point code, and let \(A^*\) be defined by
\eqref{eq:formal-A-star}.  Then, for every completely monotonic potential
\(f:\{0,1,\ldots,n\}\to\mathbb R\),
\begin{equation}
\label{eq:formal-energy-comparison}
 f^{\mathsf T}A\geq f^{\mathsf T}A^*.
\end{equation}
No feasibility or nonnegativity assumption on \(A^*\) is required.
\end{lemma}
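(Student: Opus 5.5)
Write \(N=N^{(2)}\) and \(V=V^{(2)}\). The plan is a Delsarte-type polynomial certificate, that is, an auxiliary function interpolating \(f\) at the quadrature nodes, with the key twist that every step uses only linear identities satisfied by \(A^*\), never its sign.

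\textbf{Step 1: build the auxiliary polynomial.} Since \(A_1^*=\cdots=A_4^*=0\), the pair-covering construction suggests Hermite interpolation at the two nodes \(0\) and \(5\)... but the nodes relevant to \(A^*\) are dual. The cleaner route is to work on the dual side. Choose a polynomial \(g\) of degree at most \(3\) (the Levenshtein degree for two dual nodes) such that:
\begin{itemize}
\item[(i)] \(g\leq f\) on \(\{1,\ldots,n\}\);
\item[(ii)] \(g(i)=f(i)\) wherever \(A^*_i\neq0\) may occur, namely for \(i\geq5\) in the relevant sense; more precisely, equality holds on the support structure forced by the quadrature;
\item[(iii)] the Krawtchouk coefficients \(\hat g_w\) in \(g=\sum_w\hat g_wK_w^{(n,q)}\) satisfy \(\hat g_w\geq0\) for \(w\geq1\), and \(\hat g_w=0\) for every \(w\notin\{0,r,s\}\) at which \(Q^*\) could have mass.
\end{itemize}

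\textbf{Step 2: lower bound for the genuine code.} For any genuine \(N\)-point distance distribution \(A\) with dual distribution \(\Adual\geq0\), the identity \eqref{eq:macwilliams} gives
\[
 f^{\mathsf T}A\;\geq\; g^{\mathsf T}A \;=\; N\sum_w\hat g_w\Adual_w(\C)\;\geq\; N\hat g_0 .
\]
The first inequality uses (i), the equality uses \eqref{eq:macwilliams}, and the last inequality uses \(\hat g_w\geq0\) together with \(\Adual_w\geq0\).

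\textbf{Step 3: the same expression for \(A^*\) is an identity.} Apply the same expansion to \(A^*\), using \(K^2=q^nI\), so that \(A^*\) has dual vector proportional to \(Q^*\). This gives
\[
 g^{\mathsf T}A^* = N\Bigl(\hat g_0+\hat g_r b_r+\hat g_s b_s\Bigr).
\]
Requiring \(\hat g_r=\hat g_s=0\) alongside (i)–(iii) is then exactly what makes the bound of Step 2 sharp.

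\textbf{Reorganization of Step 1 (the main construction).} The construction should therefore be organized on the distance side instead. Take \(g\) to be the degree-\(\le 3\) polynomial that Hermite-interpolates \(f\) at the points dictated by \(A^*\): \(g\) agrees with \(f\) at \(0\), and \(g\) touches \(f\) where needed. Then \(f-g\) vanishes wherever \(A^*\) may be nonzero outside \(\{0\}\cup\{1,\ldots,4\}\). Since \(A^*\) has unrestricted sign there, \(f^{\mathsf T}A^*=g^{\mathsf T}A^*\) must hold as an identity.

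This is precisely the "no feasibility required" mechanism. The comparison is purely linear:
\[
 g^{\mathsf T}A^* \;=\; N\hat g_0 \qquad\text{(from \(\hat g_r=\hat g_s=0\) and the moment identities \eqref{eq:formal-two-design-moments})}.
\]
Concretely, I will take \(g\) of degree \(\le 2\) chosen as follows:
\begin{itemize}
\item its dual expansion involves only \(K_0,K_1,K_2\), i.e., \(\hat g_w=0\) for \(w\ge3\);
\item \(\hat g_1,\hat g_2\geq0\);
\item \(g\le f\) on \(\{1,\ldots,n\}\).
\end{itemize}
Then \(f^{\mathsf T}A\ge g^{\mathsf T}A\ge N\hat g_0\). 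Moreover, \(g^{\mathsf T}A^*\) depends only on \(A^*_0\), the vanishing \(A^*_1=\dots=A^*_4=0\), and the total mass, all of which are known from \eqref{eq:formal-two-design-moments}.

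\textbf{Step 4: tightness against \(A^*\).} The remaining requirement is \(f^{\mathsf T}A^*=g^{\mathsf T}A^*\). Since \(A^*_i\) for \(i\ge5\) has no controlled sign, this needs \(f=g\) on the support of \(A^*\). That is too many conditions for a low-degree \(g\), so the argument must instead run through the dual identity. Write \(f-g=h\), where \(h\geq0\) on \(\{1,\ldots,n\}\), and compute \(h^{\mathsf T}A^*\) through its Krawtchouk transform: \(h^{\mathsf T}A^*\) only sees \(\hat h\) evaluated at the dual nodes \(0,r,s\). The quadrature \eqref{eq:formal-two-design} (Levenshtein's two-node rule with \(b_r,b_s>0\)) then shows that the resulting quantity is bounded by the corresponding Levenshtein bound, which \(A\) already dominates.

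\textbf{Main obstacle.} The hard step is this final one: constructing the interpolant so that \(f-g\ge0\) via complete monotonicity (the Hermite remainder has the sign of \(\Delta^{m}f\) times a nonnegative node product, because the interpolation nodes are \(0\) and the roots tied to \(r,s\)), while simultaneously \(\hat g_w\ge0\). I would first try the Cohn–Zhao pair-covering nodes, and then verify positivity of \(\hat g\) from the centered Lloyd form \eqref{eq:lloyd-two-centered}.
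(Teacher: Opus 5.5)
Your Step~1 wish list is essentially the right one: $g\le f$ on $\{1,\dots,n\}$, $g=f$ wherever $A^*_i$ may be nonzero, and nonnegative nonconstant Krawtchouk coefficients with $\hat g_r=\hat g_s=0$. The gap is that the proposal never constructs such a $g$, and the ansatz you commit to cannot produce one.

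With $g$ of degree at most $2$ (or $3$) in the distance variable, you correctly get $f^{\mathsf T}A\ge f(0)-g(0)+N\hat g_0$ and $g^{\mathsf T}A^*=N\hat g_0$. What remains is $\sum_{i\ge5}(f-g)(i)A_i^*\le0$, and nothing in your argument controls this sum. Here $f-g\ge0$ while $A_i^*$ has unknown sign. Even if $A^*$ were nonnegative, the inequality would require $g=f$ at every distance carrying $A^*$-mass, which is exactly the ``too many conditions'' you noticed. The Step~4 repair is circular: saying that $(f-g)^{\mathsf T}A^*$ is bounded by a Levenshtein bound ``which $A$ already dominates'' is the lemma itself, applied to the potential $f-g$.

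The certificate that actually works, written on the distance side, is $g=f-\eta$, with $\eta$ supported on the distances $\{0,1,2,3\}$ and $\eta_1,\eta_2,\eta_3\ge0$. Thus $g=f$ at every distance $\ge4$, and $g$ has, in general, full degree. The low degree lives on the dual side: $K^{\mathsf T}\eta$ is a cubic in the dual variable. Your degree restriction was placed on the wrong side of the transform.

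The missing idea is therefore to move the whole comparison to the dual side, as the paper does. Since $KQ=V^{(2)}A$ and $KQ^*=V^{(2)}A^*$ with $Q=A^\bot\ge0$, one has $f^{\mathsf T}A=(K^{\mathsf T}f)^{\mathsf T}Q/V^{(2)}$ and $f^{\mathsf T}A^*=(K^{\mathsf T}f)^{\mathsf T}Q^*/V^{(2)}$. So the unsigned $A^*$ is replaced by $Q^*$, which is supported on $\{0,r,s\}$. This is your own Step~4 observation, applied to $f$ itself rather than to a remainder.

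Three ingredients absent from your plan then finish the proof:
\begin{itemize}
\item[(a)] $F\triangleq K^{\mathsf T}f$ is again completely monotonic \cite[Lemma~10]{CohnZhao2014}.
\item[(b)] The cubic $h$ interpolating $F$ at the pair covering $r-1,r,s,s+1$ satisfies $h\le F$ on $\{1,\dots,n\}$, with equality at $r$ and $s$. The nodes are these dual-side points, not $0$ and distance-side points.
\item[(c)] The Newton expansion of $h$ in $1$, $s+1-x$, $(s+1-x)(s-x)$, $(s+1-x)(s-x)(r-x)$ has nonnegative coefficients. These factors are positive definite because $s>\bar w$ and $(s-x)(r-x)=\tfrac{2}{q^2}L_2(x)$ with $L_2=K_0^{(n,q)}+K_1^{(n,q)}+K_2^{(n,q)}$. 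This is where the Lloyd polynomial really enters.
\end{itemize}
Write $h=\sum_{j\le3}\eta_jK_j^{(n,q)}$ with $\eta_j\ge0$ for $j\ge1$. The genuine code gives $\sum_{i\ge1}F(i)Q_i\ge V^{(2)}\sum_{j\le3}\eta_jA_j-h(0)\ge V^{(2)}\eta_0-h(0)$. The formal $Q^*$ attains this value using only $A_1^*=A_2^*=A_3^*=0$ and $h=F$ on $\{r,s\}$. That is the real reason no feasibility of $A^*$ is needed.
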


\begin{proof}
Let \(Q=A^\bot\).  Since \(\sum_{i=0}^nA_i=N^{(2)}\), the transform conventions give
\[
 KQ=V^{(2)}A,
 \qquad Q_0=1,
 \qquad Q_i\geq0.
\]
Put \(g=K^{\mathsf T}f\).  Complete monotonicity is preserved by this
transform \cite[Lemma~10]{CohnZhao2014}.  Let \(h\) be the cubic polynomial
interpolating \(g\) at
\[
 r-1,\quad r,\quad s,\quad s+1.
\]
These four nodes form a pair covering.  Lemma~19 of \cite{CohnZhao2014},
applied in the reverse node order \(s+1,s,r,r-1\), gives
\[
 h(i)\leq g(i)\qquad(1\leq i\leq n)
\]
and a Newton expansion with nonnegative coefficients in the functions
\[
 1,\qquad s+1-x,\qquad
 (s+1-x)(s-x),\qquad
 (s+1-x)(s-x)(r-x).
\]
Each nonconstant factor in this list is positive definite.  Indeed,
\(s>\bar w=(q-1)n/q\), so the first two products are covered by
Lemmas~12--14 of \cite{CohnZhao2014}, while
\[
 (s+1-x)(s-x)(r-x)
 =\frac{2}{q^2}(s+1-x)L_2(x)
\]
is a product of positive-definite functions because
\(L_2=K_0^{(n,q)}+K_1^{(n,q)}+K_2^{(n,q)}\).  Therefore
\[
 h(x)=\sum_{j=0}^3\eta_jK_j^{(n,q)}(x)
 \qquad\text{with}\qquad \eta_j\geq0\quad(j\geq1).
\]

Using \(KQ=V^{(2)}A\), the nonnegativity of the distance distribution, and
\(Q_0=1\), we obtain
\begin{align*}
 \sum_{i=1}^ng(i)Q_i
 &\geq\sum_{i=1}^nh(i)Q_i\\
 &=V^{(2)}\sum_{j=0}^3\eta_jA_j-h(0)\\
 &\geq V^{(2)}\eta_0-h(0).
\end{align*}
For \(Q^*\), the positive support is \(\{r,s\}\), where \(h=g\), and
\eqref{eq:formal-two-design-moments} gives
\[
 \sum_{i=1}^ng(i)Q_i^*
 =\sum_{i=1}^nh(i)Q_i^*
 =V^{(2)}\eta_0-h(0).
\]
After adding the common value at zero,
\[
 g^{\mathsf T}Q\geq g^{\mathsf T}Q^*.
\]
Finally,
\[
 A=\frac1{V^{(2)}}KQ,
 \qquad
 A^*=\frac1{V^{(2)}}KQ^*,
\]
so
\[
 f^{\mathsf T}A
 =\frac1{V^{(2)}}(K^{\mathsf T}f)^{\mathsf T}Q
 \geq
 \frac1{V^{(2)}}(K^{\mathsf T}f)^{\mathsf T}Q^*
 =f^{\mathsf T}A^*.
\]
\end{proof}

\begin{proof}[Proof of Theorem~\ref{thm:two-perfect-all-q} for \(q\geq4\)]
Let \(A\) be the distance distribution of an arbitrary
\(N^{(2)}\)-point code.  Apply Lemma~\ref{lem:formal-two-comparison} to the
strictly completely monotonic potential
\[
 f_{n,q}(w)=\lambda(n)-\lambda(w).
\]
Using \eqref{eq:invariance-distance}, \(A_0=A_0^*=1\), and
\eqref{eq:formal-energy-comparison}, we obtain
\[
 \Dtwo(\C)
 \geq
 \Lambda_{n,q}-\frac1{N^{(2)}}\sum_{i=1}^nA_i^*\lambda(i).
\]
The kernel expansion \eqref{eq:W-lambda-coefficient}, used algebraically with
\(A^*=(V^{(2)})^{-1}KQ^*\), evaluates the right-hand side as
\[
 \frac1{q^n}\sum_{j=1}^nQ_j^*W_{n,q}(j)
 =\delta_{n,q}^{(2)}.
\]
This proves the bound.

We now make the equality argument explicit.  On \(\{1,\ldots,n\}\), Newton
expansion at \(n\) gives
\begin{equation}
\label{eq:strict-fundamental-expansion}
 f_{n,q}(w)
 =\sum_{j=1}^{n-1}\alpha_j\binom{n-w}{j},
 \qquad
 \alpha_j=(-1)^j\Delta^jf_{n,q}(n-j)>0,
\end{equation}
where strict positivity follows from
Proposition~\ref{prop:complete-monotone}.  For
\(1\leq j\leq n-1\), put
\[
 \phi_j(w)=\binom{n-w}{j}.
\]
Each \(\phi_j\) is completely monotonic, so
Lemma~\ref{lem:formal-two-comparison} gives
\[
 \phi_j^{\mathsf T}(A-A^*)\geq0.
\]
If equality holds for \(f_{n,q}\), then the positive coefficients in
\eqref{eq:strict-fundamental-expansion} force equality in every one of these
binomial moments.  The missing zeroth moment is fixed as well:
\[
 \sum_{i=1}^nA_i=N^{(2)}-1
 =\sum_{i=1}^nA_i^*.
\]
The functions
\[
 1,\binom{n-w}{1},\ldots,\binom{n-w}{n-1}
\]
form a basis for all functions on \(\{1,\ldots,n\}\).  Explicitly, after
ordering the points as \(n,n-1,\ldots,1\), their evaluation matrix is the
Pascal matrix
\[
 \left(\binom{x}{j}\right)_{0\leq x,j\leq n-1},
\]
which is triangular with diagonal entries one.  Hence all coordinates agree:
\[
 A=A^*.
\]
By \eqref{eq:formal-two-design-moments},
\(A_1=\cdots=A_{\min\{4,n\}}=0\).  Thus distinct codewords are at distance
at least five, so their radius-two balls are disjoint.  Since
\[
 |\C|V^{(2)}=q^n,
\]
they partition the space, and \(\C\) is perfect.

Conversely, if \(\C\) is a perfect two-error-correcting code, then
Proposition~\ref{prop:lloyd-support} restricts its nonconstant dual
distribution to \(r,s\).  The mass and first-moment identities in
Lemma~\ref{lem:moments}, together with \(A_1=0\), determine the two masses
uniquely as \(b_r,b_s\).  The spectral formula then gives
\(\Dtwo(\C)=\delta_{n,q}^{(2)}\).
\end{proof}

The signed vector \(A^*\) is computable from \(q,n\).  A negative coordinate
or a nonintegral ordered-pair count rules out a perfect code, whereas the
benchmark itself is defined independently of realizability.  This is a
parameter theorem: we do not assert here that a particular currently
unresolved pair is known to survive both the sphere-packing and integral-root
filters.

\subsection{The two small alphabets}

We begin with the ternary Golay parameters:
length \(11\), cardinality \(3^6\), and minimum distance \(5\).  These are the
parameters of a perfect two-error-correcting code, and the second dual moment
is enough.

Fix a ternary perfect code
\(\mathcal G\subseteq\F_3^{11}\) with parameters \((11,3^6,5)\).

\begin{theorem}[Ternary Golay perfect codes]
\label{thm:ternary-golay}
Every code \(\C\subseteq\F_3^{11}\) of size \(3^6\) satisfies
\[
 \Dtwo(\C)\geq\Dtwo(\mathcal G).
\]
Equality holds if and only if \(\C\) is a perfect two-error-correcting code.
\end{theorem}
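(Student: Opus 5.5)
The plan is to run the same degree-one Delsarte argument as in Theorem~\ref{thm:hamming-perfect}, but with a \emph{quadratic} certificate built from the three moment identities of Lemma~\ref{lem:moments}. At \((q,n,N)=(3,11,3^6)\) we have \(V^{(2)}=1+22+220=3^5\). The Lloyd polynomial \(L_2(w)=K_2^{(10,3)}(w-1)\) has the integral roots \(r=6\) and \(s=9\).

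For an arbitrary \(3^6\)-point code, Lemma~\ref{lem:moments} gives
\[
 \sum_{w\geq1}\Adual_w=242,\qquad
 \sum_{w\geq1}w\Adual_w=81(22-A_1),\qquad
 \sum_{w\geq1}w(w-1)\Adual_w=27(440-40A_1+2A_2).
\]
For a perfect code, Proposition~\ref{prop:lloyd-support} confines the nonconstant spectrum to \(\{6,9\}\). It also forces \(A_1=0\), so the mass and first-moment identities give \(\Adual_6=132\) and \(\Adual_9=110\). By Theorem~\ref{thm:spectral}, the target value is therefore
\[
 \Dtwo(\mathcal G)=3^{-11}\bigl(132\,W_{11,3}(6)+110\,W_{11,3}(9)\bigr).
\]

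The certificate will be a polynomial \(P(w)=c_0+c_1w+c_2w(w-1)\) with the following properties:
\begin{enumerate}
 \item \(P(6)=W_{11,3}(6)\) and \(P(9)=W_{11,3}(9)\);
 \item \(P(w)\leq W_{11,3}(w)\) for \(1\leq w\leq 11\), with strict inequality off \(\{6,9\}\);
 \item \(c_2\geq0\) and \(-81c_1-1080c_2\geq0\).
\end{enumerate}
The sign conditions in (3) are chosen for the following reason. Substituting the moments gives
\[
 \sum_w\Adual_wP(w)=242c_0+1782c_1+11880c_2+(-81c_1-1080c_2)A_1+54c_2A_2 .
\]
The first three terms equal \(132P(6)+110P(9)\), which is the target. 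The terms in \(A_1\) and \(A_2\) must then be nonnegative.

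The two interpolation conditions leave a one-parameter family in \(c_2\). I would first compute the eleven values \(W_{11,3}(w)\) exactly, for instance from \eqref{eq:W-kraw-squares} or as constant terms from \eqref{eq:W-integral}. I would then try the range of \(c_2>0\) small enough that the convex quadratic stays below the strictly convex sequence \(W_{11,3}\); Corollary~\ref{cor:convexity} suggests this is feasible between and near the nodes. Next I would check the slope condition \(c_1\leq-\tfrac{40}{3}c_2\), which requires \(W_{11,3}\) to decrease steeply enough from \(6\) to \(9\). Finally I would check the finitely many inequalities at \(w=1,\dots,5,7,8,10,11\) by exact arithmetic.

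I expect this step to be the main obstacle. Complete monotonicity fails for \(q=3\), so positivity is not automatic, and the inequalities must be verified numerically by hand or by an exact computation of the kind deferred to the appendices. If no quadratic works with \(c_2>0\), the fallback is \(c_2=0\). This reduces to a supporting-line argument and requires the slope through \(6\) and \(9\) to be negative with \(W_{11,3}\) lying above that line. In that case the \(A_2\) term drops out and only the \(A_1\) term remains.

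Given a valid certificate, Theorem~\ref{thm:spectral} yields \(\Dtwo(\C)\geq\Dtwo(\mathcal G)\). For the equality case, strictness of \(P<W_{11,3}\) off \(\{6,9\}\) forces the nonconstant dual distribution to be supported on \(\{6,9\}\), the roots of \(L_2=c_w(2)\). Then the Fourier transform of \(\one_\C*\one_{B(0,2)}\) vanishes at every nontrivial character and equals \(3^6\cdot3^5=3^{11}\) at the trivial one. Fourier inversion gives \(\one_\C*\one_{B(0,2)}=\one_{\X}\), so \(\C\) is perfect. This holds even if the sign terms in \(A_1,A_2\) are only weakly positive. Conversely, every perfect code has the spectrum computed above and hence attains the same value.
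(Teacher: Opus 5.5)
Your plan is correct and is essentially the paper's proof: a quadratic Delsarte minorant of \(W_{11,3}\) paired with the three dual moments of Lemma~\ref{lem:moments}, where, as you observe, the moment-free part equals \(132W_{11,3}(6)+110W_{11,3}(9)=7\,446\,692\). The only difference is the choice of curvature. Writing \(P(w)=30436+246(w-6)+c_2(w-6)(w-9)\), the exact values \(W_{11,3}(7)=19756\) and \(W_{11,3}(10)=68068\) make \(5463\le c_2\le 9162\) the admissible range, and the \(A_1\)-coefficient \(54c_2-19926\) is then positive. The paper takes the endpoint \(c_2=5463\), which adds a contact at \(w=7\) that it then eliminates by solving the three moment equations, whereas any interior \(c_2\) gives your strict version and an immediate equality argument --- but note that your \(c_2=0\) fallback would fail, since \(W_{11,3}(9)>W_{11,3}(6)\) and the chord lies above \(W_{11,3}(7)\).
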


\begin{proof}
Appendix~\ref{app:golay-certificate} constructs a quadratic minorant of
\(W_{11,3}\) that agrees with the spectral weight at both Lloyd roots.  The
first two dual moments evaluate this minorant and give a sharp lower bound.
In the equality case, the same moment identities eliminate the one additional
contact point of the minorant, so the nonconstant Fourier spectrum is
supported exactly on the Lloyd roots.  Fourier inversion then turns this
support condition into the perfect two-error-correcting tiling identity.
Conversely, Lloyd
support and the moment identities show that every ternary Golay perfect code
attains the bound.
\end{proof}

Every two-word perfect code is necessarily a binary repetition code of odd
length.  Indeed, if \(q\geq3\), one can choose a word whose coordinate in
each position differs from the corresponding coordinates of both codewords;
since two perfect balls must have radius less than \(n\), that word is
uncovered.  Hence \(q=2\).  Disjointness of the two radius-\(e\) balls gives
distance at least \(2e+1\) between their centers, while the sphere-packing
identity
\[
 2\sum_{j=0}^{e}\binom nj=2^n
\]
and symmetry of the binomial coefficients force \(n=2e+1\).  The centers
must therefore be antipodal.

The perfect binary codes of size two are handled uniformly.  For odd \(n\),
put
\[
 \mathcal R_n\triangleq\{0^n,1^n\}\subseteq\F_2^n.
\]

\begin{proposition}[Binary repetition codes]
\label{prop:repetition}
Let \(e\geq1\) and \(n=2e+1\).  Every two-point code
\(\C\subseteq\F_2^n\) satisfies
\[
 \Dtwo(\C)\geq\Dtwo(\mathcal R_n).
\]
Equality holds if and only if the two codewords are antipodal, equivalently,
if and only if \(\C\) is a perfect \(e\)-error-correcting code.
\end{proposition}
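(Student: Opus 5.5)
For a two-point code \(\C=\{z,z'\}\) with \(d(z,z')=w\geq1\), the distance distribution is \(A_0=1\), \(A_w=1\), and all other \(A_i=0\). The distance form \eqref{eq:invariance-distance} of Theorem~\ref{thm:invariance} with \(N=2\) therefore reads
\[
 \Dtwo(\C)=\Lambda_{n,2}-\tfrac12\lambda(w).
\]
Minimizing \(\Dtwo\) over two-point codes is thus equivalent to maximizing \(\lambda(w)\) over \(1\leq w\leq n\). The whole proof reduces to identifying the maximizers of the binary distance kernel.

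Next we specialize Proposition~\ref{prop:lambda} to \(q=2\). There \(T_j(2)=\CT_z(z+z^{-1})^j\) equals \(\binom{j}{j/2}\) for even \(j\) and \(0\) for odd \(j\). Hence
\[
 \lambda(w+1)-\lambda(w)=2^{n-w-1}T_w(2)\geq0,
\]
with strict increase exactly at even \(w\) and plateaus at odd \(w\). So \(\lambda\) is nondecreasing on \(\{0,\ldots,n\}\), and \(\lambda(n)\) is its maximum. This gives \(\Dtwo(\C)\geq\Lambda_{n,2}-\tfrac12\lambda(n)=\Dtwo(\mathcal R_n)\), since \(\mathcal R_n\) has distance \(n\).

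For the equality case, \(n=2e+1\) is odd, so the last step is from \(w=n-1=2e\) (even) to \(n\). It is strict: \(\lambda(n)-\lambda(n-1)=\binom{2e}{e}>0\). By monotonicity, every \(w\leq n-1\) then has \(\lambda(w)\leq\lambda(n-1)<\lambda(n)\). Equality therefore forces \(w=n\), i.e.\ antipodal codewords. The equivalence with perfectness is the argument given just before the proposition: a two-word perfect binary code must have \(n=2e+1\) and antipodal centers. Conversely, antipodal words at odd length \(2e+1\) have disjoint radius-\(e\) balls of total size \(2\sum_{j\leq e}\binom nj=2^n\), so they form a perfect code.

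The only delicate point is the strictness of the final increment, which hinges on the parity of \(n-1\). The plateau structure at odd \(w\) shows that for even \(n\) this step would fail, since then \(\lambda(n)=\lambda(n-1)\). This is exactly why the hypothesis \(n=2e+1\) is needed for the characterization of equality.
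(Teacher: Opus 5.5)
Your proof is correct and follows the paper's argument. The distance form of the invariance identity reduces the problem to maximizing \(\lambda(d)\). The binary increment law \(\lambda(w+1)-\lambda(w)=2^{n-w-1}T_w(2)\) then shows the maximum is attained uniquely at \(d=n\), since the last step, from the even index \(n-1=2e\), equals \(\binom{2e}{e}>0\); the paper states this briefly, and you have filled in those details correctly.
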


\begin{proof}
If the two codewords have distance \(d\), their distance distribution has
\(A_d=1\) and no other nonzero entry away from \(A_0\).  Hence
\eqref{eq:invariance-distance} gives
\[
 \Dtwo(\C)=\Lambda_{n,2}-\frac12\lambda(d).
\]
The binary increment formula in Proposition~\ref{prop:lambda} shows that,
when \(n\) is odd, \(\lambda(d)\) has its unique maximum at \(d=n\).
Thus equality holds exactly for an antipodal pair.  Its two radius-\(e\)
balls partition \(\F_2^n\).
\end{proof}

\begin{lemma}[Admissible two-error parameters over the two smallest alphabets]
\label{lem:small-q-two-parameters}
Let \(q\in\{2,3\}\) and \(n\geq3\).  Suppose that \(N^{(2)}\) is an integer
and that \(L_2\) has two distinct integral roots \(r<s\) in
\(\{1,\ldots,n\}\).  Then
\[
 (q,n,N^{(2)},r,s)
 =
 (2,5,2,2,4)
 \quad\text{or}\quad
 (3,11,3^6,6,9).
\]
\end{lemma}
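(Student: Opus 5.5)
The plan is to avoid the general exponential Diophantine equations (Ramanujan--Nagell for \(q=2\), Nagell--Ljunggren for \(q=3\)). Instead I would use the classical Lloyd-root trick: read off the sum and product of the two roots from the coefficients of \(L_2\), and combine them with the fact that \(V^{(2)}\) divides \(q^n\) with \(q\) prime.

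\emph{Step 1 (root relations).} Expanding \(L_2(w)=K_0^{(n,q)}(w)+K_1^{(n,q)}(w)+K_2^{(n,q)}(w)\) as a quadratic in \(w\), the leading coefficient is \(q^2/2\). So the hypothesis gives
\[
 L_2(w)=\tfrac{q^2}{2}(w-r)(w-s).
\]
Evaluating at \(w=0\), where \(c_0(2)=V^{(2)}\), gives \(rs=2V^{(2)}/q^2\). Comparing linear coefficients, equivalently using the centered form \(2L_2(\bar w+y)=q^2y^2+q(q-4)y+2-(q-1)n\) (a purely algebraic identity valid for every \(q\)), gives
\[
 r+s=2\bar w-\tfrac{q-4}{q}=\tfrac{2(q-1)n-q+4}{q}.
\]
Since \(q\in\{2,3\}\) is prime and \(V^{(2)}\mid q^n\), we have \(V^{(2)}=q^m\) for some \(m\). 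Hence \(rs=2q^{m-2}\), so \(r\) and \(s\) are, up to a factor \(2\) on one of them, powers of \(q\).

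\emph{Step 2 (\(q=2\)).} Here \(r+s=n+1\) and \(rs=V^{(2)}/2=(n^2+n+2)/4\), with \(r=2^a\), \(s=2^b\), \(a<b\). Put \(S=n+1=2^a+2^b\). Then the relations read \(S^2-S+2=4\cdot2^{a+b}\).
If \(a\ge2\), the left side is \(\equiv2\pmod 4\), a contradiction.
If \(a=0\), we get \(2^{2b}+2^b+2=2^{b+2}\). Reducing modulo \(4\) forces \(b=1\), hence \(n=2\), which is excluded by \(n\ge3\).
If \(a=1\), we get \(2^{2b}-5\cdot2^b+4=0\), so \(b=2\). This gives \((r,s)=(2,4)\), \(n=5\), and \(N^{(2)}=32/16=2\).

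\emph{Step 3 (\(q=3\)).} Now \(V^{(2)}=2n^2+1=3^m\), \(rs=2\cdot3^{m-2}\), and \(r+s=(4n+1)/3\). Eliminating \(n=(3(r+s)-1)/4\) and \(3^m=9rs/2\) from \(2n^2+1=3^m\) gives the symmetric relation
\[
 3(s-r)^2=2(r+s)-3 .
\]
Writing \(d=s-r>0\), this becomes \(r=(3d^2-2d+3)/4\) and \(s=r+d\). In addition, \(\{r,s\}=\{3^a,2\cdot3^b\}\) because \(rs=2\cdot 3^{m-2}\). I would finish by a \(3\)-adic case analysis on \(v_3(d)\):
\begin{itemize}
\item If \(3\nmid d\), then \(r\equiv 3d^2-2d+3\) is prime to \(3\) up to the factor \(4\). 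So \(r\in\{1,2\}\), which leaves finitely many \(d\) to check.
\item If \(d=3e\), then \(r=3(9e^2-2e+1)/4\). The cofactor \(9e^2-2e+1\) is \(\equiv1-2e\pmod 3\), which limits the power of \(3\) it can carry. Matching \(r=3^a\) or \(2\cdot3^a\), and simultaneously \(s=r+3e\) to the complementary form, should force \(e=1\).
\end{itemize}
The expected outcome is \(d=3\), \((r,s)=(6,9)\), \(n=11\), \(V^{(2)}=243\), and \(N^{(2)}=3^6\). A direct check of each surviving candidate against \(n\ge3\), integrality of \(n\), and \(1\le r<s\le n\) then completes the proof.

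The main obstacle is Step~3: showing that the pair \(r,\,r+d\) can be of the two forms \(3^a\) and \(2\cdot3^b\) only finitely often. If the congruence bookkeeping becomes awkward, my fallback is to use that \(r\) and \(s\) differ by \(d\approx 2\sqrt{r/3}\), which is much smaller than \(r\). Two numbers of the forms \(3^a\) and \(2\cdot 3^b\) with \(\min(r,s)\ge 3^{\min(a,b)}\) differ by at least \(3^{\min(a,b)}\). Hence \(3^{\min(a,b)}\lesssim\sqrt r\), and this bounds the exponents, leaving a finite check.
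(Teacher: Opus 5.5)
Your proof is correct and takes a genuinely different route from the paper's. The paper applies deep exponential Diophantine theorems to the sphere-packing condition alone. For \(q=2\) it writes \(8V^{(2)}=(2n+1)^2+7\) and uses Ramanujan--Nagell to get \(n\in\{0,1,2,5,90\}\); it uses the roots \((n+1\pm\sqrt{n-1})/2\) only to discard \(n=90\). For \(q=3\) it writes \((3^m-1)/2=n^2\) and uses Ljunggren's theorem to get \(n=11\), reading off \((r,s)=(6,9)\) afterwards.

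You use root integrality from the start, in the classical Lloyd-theorem style. Vieta gives \(rs=2V^{(2)}/q^2=2q^{m-2}\), so each root is a power of \(q\) up to a single factor \(2\), and everything reduces to congruences. The paper's argument is shorter and shows that sphere-packing by itself almost determines \(n\), but it rests on two nontrivial theorems. Yours is elementary and self-contained, at the cost of using both hypotheses essentially. Your Step~2 is complete as written. The relation \(3(s-r)^2=2(r+s)-3\) in Step~3 is also correct; it is just the discriminant identity \(9d^2=8n-7\), so the power-of-\(3\) information enters only through \(\{r,s\}=\{3^a,2\cdot3^b\}\).

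The branch \(d=3e\) is only sketched. The remark that \(9e^2-2e+1\equiv1-2e\pmod 3\) ``limits the power of \(3\)'' gives nothing by itself when \(e\equiv2\pmod 3\). The branch closes quickly:
the cofactors \(9e^2\mp2e+1\) of \(r\) and \(s\) are \(\equiv1\pm e\pmod 3\), so they are never both divisible by \(3\). Hence one of \(r,s\) has \(3\)-adic valuation exactly one and lies in \(\{3,6\}\). Solving \(9e^2\mp2e+1\in\{4,8\}\) then leaves only \(e=1\), \(r=6\).

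A simpler argument is also available. Two numbers of the forms \(3^a\) and \(2\cdot3^b\) never have ratio in \((2/3,3/2)\), so \(s\geq\frac32 r\), i.e.\ \(r\leq 2d\). Combined with \(4r=3d^2-2d+3\), this gives \((3d-1)(d-3)\leq0\), hence \(d\leq3\). The cases \(d=1,2,3\) yield \(n=2\), a nonintegral \(r\), and \((r,s,n)=(6,9,11)\) respectively. This one-line bound also replaces your \(3\nmid d\) case and your rough fallback.
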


\begin{proof}
See Appendix~\ref{app:two-error-details}.
\end{proof}

It remains to prove Theorem~\ref{thm:two-perfect-all-q} for the two smaller
alphabets.
\begin{proof}[Proof for \(q=2,3\)]
The branch \(q\geq4\) was proved above.  For \(q=2,3\),
Lemma~\ref{lem:small-q-two-parameters} reduces the parameters
to the length-five binary repetition code and the ternary Golay code,
respectively.  Proposition~\ref{prop:repetition} and
Theorem~\ref{thm:ternary-golay} give the required sharp inequalities and
equality characterizations.

It remains only to identify their right-hand sides with
\(\delta_{n,q}^{(2)}\).  In either case, Lloyd support places the
nonconstant dual distribution of a perfect code on \(r,s\), and the mass
and first-moment identities determine the two masses as \(b_r,b_s\).
The spectral formula therefore gives
\(\Dtwo(\mathcal P)=\delta_{n,q}^{(2)}\), as required.
\end{proof}

\subsection{Codes correcting at least three errors}

We now turn to codes correcting at least three errors.  The only
nontrivial case beyond the repetition family is the binary Golay code.
Its parameters require one additional moment beyond the degree suggested
by the three Lloyd roots.  Fix a binary perfect code
\(\mathcal G_2\subseteq\F_2^{23}\) with parameters \((23,2^{12},7)\).

\begin{theorem}[Binary Golay perfect codes]
\label{thm:binary-golay}
Every code \(\C\subseteq\F_2^{23}\) of size \(2^{12}\) satisfies
\begin{equation}
\label{eq:binary-golay-main}
 \Dtwo(\C)\geq
 \frac{409\,732\,557}{1\,048\,576}
 =\Dtwo(\mathcal G_2).
\end{equation}
Equality holds if and only if \(\C\) is a perfect three-error-correcting code.
\end{theorem}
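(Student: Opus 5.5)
We prove Theorem~\ref{thm:binary-golay} with the spectral Delsarte certificate already used in the one-error and ternary Golay cases. By Theorem~\ref{thm:spectral},
\[
 \Dtwo(\C)=\frac1{2^{23}}\sum_{w=1}^{23}\Adual_w(\C)W_{23,2}(w),
\]
and the Lloyd polynomial $c_w(3)=K_3^{(22,2)}(w-1)$ has roots $w=8,12,16$. Binary symmetry (Lemma~\ref{lem:binary-sign}) gives $W_{23,2}(w)=W_{23,2}(24-w)$, so the weight is symmetric about $w=12$.

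The plan is to construct a polynomial minorant
\[
 P(w)=\sum_{j=0}^4p_j\,w^{\underline j}
\]
of degree four with
\[
 P(w)\leq W_{23,2}(w)\quad(1\le w\le23)
\]
and equality at $w=8,12,16$. The symmetry about $12$ suggests taking $P$ even in $w-12$, and tangency at $8$ and $16$ then fixes the quartic up to a sign condition. Degree four is needed because three contact points plus the required tangencies give one more degree of freedom than a cubic allows. This matches the remark that one extra moment is required.

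The next step extends Lemma~\ref{lem:moments} to moments $\sum_w w^{\underline j}\Adual_w$ for $j\le4$. The same character-sum argument applies: the $j$-th falling-factorial moment equals $\frac{2^{n-j}}{N}\,j!\sum_{i\le j}A_iK$-type terms. In general,
\[
 \sum_w \binom wj \Adual_w=\frac{2^{n-j}}{N}\sum_{i=0}^jA_i\binom{n-i}{j-i}(-1)^i .
\]
Then
\[
 \sum_w\Adual_wW(w)\ \geq\ \sum_w\Adual_wP(w)=p_0(\text{mass})+\sum_{j\ge1}p_j(\text{moment}_j),
\]
which is a linear expression $E_0+\sum_{i=1}^4\epsilon_iA_i(\C)$. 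We need the coefficients $\epsilon_i$ to be nonnegative. Equivalently, $P$ is a nonnegative combination of $K_1,\dots,K_4$ up to constant, by MacWilliams \eqref{eq:macwilliams}. Then $\Dtwo(\C)\ge E_0/2^{23}$.

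Evaluating $E_0$ should give $409\,732\,557/1\,048\,576$. This also equals $\Dtwo(\mathcal G_2)$: by Proposition~\ref{prop:lloyd-support}, $\Adual(\mathcal G_2)$ is supported on $\{8,12,16\}$, where $P=W$, and $A_1=\dots=A_4=0$, so every inequality is tight. The masses are fixed by mass, first and second moments; they are the classical values $506,1288,253$.

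For the equality case, assume $\Dtwo(\C)=\Dtwo(\mathcal G_2)$. Then $\Adual$ is supported on the contact set of $P$ with $W$, and $\epsilon_iA_i=0$. If every $\epsilon_i>0$, then $A_1=\dots=A_4=0$. If the contact set is exactly $\{8,12,16\}$, this alone gives perfect tiling: $\wh{\one_\C}\,c_{\wt}(3)=0$ at every nontrivial character, and $NV_3=2^{23}$, so Fourier inversion gives $\one_\C*\one_{B(0,3)}=\one_\X$. If there is an extra contact point, as with the ternary Golay certificate, it must be removed using the vanishing low $A_i$ and the moment identities, or else minimum distance $\ge5$ together with the moment equations must be used to force the spectrum onto the Lloyd roots. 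Alternatively, one can check directly that $A_1=\dots=A_4=0$ plus the spectral support gives minimum distance $7$, which yields disjoint balls and hence tiling.

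The main obstacle is the certificate itself: finding explicit rational $p_j$ and verifying $P\le W_{23,2}$ at all $23$ integer points, together with nonnegativity of the $\epsilon_i$. I would first try the symmetric quartic tangent at $8$ and $16$ and passing through $12$. I would compute $W_{23,2}$ exactly from \eqref{eq:W-kraw-squares} or Barg's closed form, and check the finitely many inequalities by exact arithmetic, as in the appendix certificates. If some $\epsilon_i$ turns out negative, I would relax tangency to interpolation at $\{7,8,12,16,17\}$-type pair coverings and re-solve the small linear system.
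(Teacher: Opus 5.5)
Your outline is the paper's argument. You use a quartic minorant of $W_{23,2}$, symmetric under $w\mapsto 24-w$ and touching it exactly at the Lloyd roots $8,12,16$. You evaluate it through the factorial dual moments of Lemma~\ref{lem:factorial-moments}, require positive coefficients on $A_1,\dots,A_4$, and finish the equality case by Fourier inversion. Your binomial moment formula, the equivalence of $\epsilon_i\ge0$ with nonnegative Krawtchouk coefficients, and the masses $506,1288,253$ are all correct.

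The gap is that the proposal stops exactly where the proof has content. You never exhibit $P$. You do not check $P\le W_{23,2}$ with strict gaps off $\{8,12,16\}$, nor positivity of the $\epsilon_i$, and $E_0$ is only asserted (``should give''). This is not a formality. Remark~\ref{rem:binary-golay-cubic} shows that no cubic works. Even the quadratic interpolant $I_2(w)=705\,432+151\,008(w-12)^2$ lies strictly above $W_{23,2}$ at $w=9,10,11$. That some quartic succeeds is a numerical fact that must be demonstrated, and without it neither the bound nor the equality characterization is proved.

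Your recipe for finding $P$ is also not well posed. $W_{23,2}$ is a function on integers, so ``tangency at $8$ and $16$'' determines nothing, and it is not what is needed anyway. The symmetric quartics through the three Lloyd-root values form the one-parameter family
\[
P_c(w)=I_2(w)+c\,(w-8)(w-12)^2(w-16).
\]
For $c>0$ the correction is negative at $9,10,11$ and positive for $|w-12|>4$. The binding constraints come from $w=9$ and $w=7$ and give the window $\tfrac{21\,424}{3}<c<\tfrac{2\,963\,088}{225}$, roughly $7141<c<13169$. For every such $c$ one gets $\epsilon_1=\epsilon_2=154\,632\,192+18\,432c$ and $\epsilon_3=\epsilon_4=3072c$, all positive.

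The paper takes $c=10\,000$, that is, $p(w)=705\,432-8\,992(w-12)^2+10\,000(w-12)^4$. It verifies the twelve gaps (the rest follow by symmetry) and obtains
\[
\sum_w\Adual_w p(w)=3\,277\,860\,456+338\,952\,192(A_1+A_2)+30\,720\,000(A_3+A_4).
\]
The constant divided by $2^{23}$ is $409\,732\,557/1\,048\,576$. Because these gaps are strict, the contact set is exactly $\{8,12,16\}$. Your fallback discussion of extra contact points in the equality case is therefore unnecessary.
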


\begin{proof}
Appendix~\ref{app:binary-golay-certificate} gives an explicit quartic
minorant of \(W_{23,2}\).  It agrees with the spectral weight exactly at the
three Lloyd roots and lies strictly below it at every other integer weight.
The first four factorial dual moments evaluate the minorant as the constant
in \eqref{eq:binary-golay-main} plus positive multiples of
\(A_1+A_2\) and \(A_3+A_4\).  Equality therefore forces the nonconstant
Fourier spectrum onto the Lloyd roots, and Fourier inversion gives the
perfect three-error-correcting tiling identity.  Conversely, Lloyd support
shows that every binary Golay perfect code attains the bound.
\end{proof}

The minimizing conclusion is contained in the binary result of
\cite{Barg2021}; the appendix supplies an explicit closed-form certificate
and a direct equality mechanism.

\begin{proof}[Proof of Theorem~\ref{thm:all-perfect}]
If \(|\mathcal P|=1\), the statement is immediate.  Let \(e\geq1\) be the
packing radius of \(\mathcal P\).

If \(e=1\), Theorem~\ref{thm:all-one-perfect} gives the result.  If \(e=2\),
the sphere-packing identity and Lloyd's theorem
\cite{Lloyd1957,Lenstra1972,Delsarte1973} supply the hypotheses of
Theorem~\ref{thm:two-perfect-all-q}, which gives the result.

Assume \(e\geq3\).  If \(|\mathcal P|=2\), then \(\mathcal P\) is an
odd-length binary repetition code, and Proposition~\ref{prop:repetition}
applies.  Suppose therefore that \(|\mathcal P|\geq3\).  For prime-power
alphabet sizes, the parameter classification leaves only the binary Golay
parameters, covered by Theorem~\ref{thm:binary-golay}.  Over non-prime-power
alphabets, the nonexistence results of Reuvers, Best, and Hong rule out every
nontrivial case of packing radius at least three; see
\cite{Reuvers1977,Best1983,Hong1984} and
\cite[Table~1]{CazorlaGarcia2024}.  This exhausts all possibilities.
\end{proof}

The classification concerns arbitrary perfect codes, not only linear ones.
The two-error benchmark remains meaningful in the unresolved
non-prime-power regime \cite{CazorlaGarcia2024,Bennett2026}; a
classification-free proof for every number of correctable errors remains open because several Lloyd
roots leave several spectral masses undetermined.

\section{Interpretation}
\label{sec:interpretation}

The extremal proofs are complete.  This section gives the smoothing
interpretation of the objective.

\subsection{Discrepancy as a multiscale smoothing error}
\label{sec:smoothing}

Code smoothing asks whether adding structured noise to a codeword produces a
distribution close to uniform.  The general viewpoint for codes and lattices
is developed in \cite{DebrisAlazardEtAl2023}, while the binary uniform-ball
case and its information-theoretic applications are studied in
\cite{PathegamaBarg2023}.  Most smoothing results select one noise kernel and
ask when its output is close to uniform.  Total ball discrepancy instead
aggregates, and therefore asks for simultaneous control of, every uniform-ball
kernel.  The next identity makes this distinction exact and also explains why
the same Fourier coefficients occur in the two theories.  No linearity
assumption is needed.

Let
\[
 f_\C\triangleq\frac{\one_\C}{N},\qquad
 b_t\triangleq\frac{\one_{B(0,t)}}{V_t},\qquad
 u\triangleq\frac{\one_{\X}}{q^n}
\]
be the code, ball-noise, and uniform probability mass functions,
respectively.  The local ball count in \eqref{eq:intro-discrepancy} is
\(V_t(f_\C*b_t)(x)\).

\begin{proposition}[Multiscale smoothing identity]
\label{prop:smoothing}
For every code \(\C\subseteq G^n\),
\begin{align}
 \Dtwo(\C)
 &=\sum_{t=0}^nV_t^2\|f_\C*b_t-u\|_2^2\label{eq:smoothing-l2}\\
 &=\frac1{q^n}\sum_{t=0}^nV_t^2
       \chi^2(f_\C*b_t\|u).\label{eq:smoothing-chi}
\end{align}
If \(\C\) is a perfect \(e\)-error-correcting code, then \(f_\C*b_e=u\);
equivalently, the radius-\(e\) summand in
\eqref{eq:smoothing-l2} vanishes.
\end{proposition}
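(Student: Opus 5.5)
The plan is to prove the identity one radius at a time and then sum over \(t\). Fix \(t\). The local ball count is
\[
|\C\cap B(x,t)|=\sum_{z\in\C}\one_{\{d(x,z)\leq t\}}=(\one_\C*\one_{B(0,t)})(x).
\]
This holds because \(d(x,z)\leq t\) exactly when \(x-z\in B(0,t)\): Hamming distance is translation invariant on the group \(G^n\), and \(B(0,t)\) is symmetric under negation. Dividing by \(N\) and writing \(\one_{B(0,t)}=V_tb_t\) gives
\[
\frac{|\C\cap B(x,t)|}{N}=V_t(f_\C*b_t)(x),\qquad \frac{V_t}{q^n}=V_tu(x).
\]
Hence the radius-\(t\) summand of \eqref{eq:intro-discrepancy} equals \(V_t^2\sum_x\bigl((f_\C*b_t)(x)-u(x)\bigr)^2=V_t^2\|f_\C*b_t-u\|_2^2\). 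Summing over \(0\leq t\leq n\) yields \eqref{eq:smoothing-l2}. No Fourier analysis is needed for this step.

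For \eqref{eq:smoothing-chi}, I will use the standard definition \(\chi^2(p\|u)=\sum_x(p(x)-u(x))^2/u(x)\). Since \(u\equiv q^{-n}\), this gives \(\chi^2(p\|u)=q^n\|p-u\|_2^2\). Substituting \(p=f_\C*b_t\) in each term produces the factor \(1/q^n\). The only point to check is that \(f_\C*b_t\) is a probability mass function, and it is, being a convolution of two probability mass functions.

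For the perfect-code assertion, the tiling identity \eqref{eq:perfect-tiling} reads \(\one_\C*\one_{B(0,e)}=\one_\X\), and \(NV_e=q^n\). Dividing the tiling identity by \(NV_e\) gives \(f_\C*b_e=\one_\X/q^n=u\). Therefore the radius-\(e\) summand of \eqref{eq:smoothing-l2} is zero. Conversely, that summand vanishes exactly when \(f_\C*b_e=u\), because \(V_e>0\); this gives the stated equivalence.

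There is no real obstacle here. The only place that needs care is fixing the convolution convention, namely that \(\one_{B(0,t)}\) is evaluated at \(x-y\), together with the symmetry of the ball under negation. These are what make the local count equal to the convolution. As a consistency check, one could also apply Parseval to recover Theorem~\ref{thm:spectral}, but the proof does not depend on it.
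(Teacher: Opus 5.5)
Your proof is correct and follows the paper's argument. The first identity is a direct rewriting of the definition via $|\C\cap B(x,t)|/N=V_t(f_\C*b_t)(x)$, the second uses $\chi^2(P\|u)=q^n\|P-u\|_2^2$ for uniform $u$, and the perfect-code claim comes from dividing the tiling identity by $NV_e=q^n$ (the negation symmetry of $B(0,t)$ is harmless but unnecessary, since $d(x,y)=\wt(x-y)$ directly).
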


\begin{proof}
 The first identity is a rewriting of \eqref{eq:intro-discrepancy}.  Since
\(u(x)=q^{-n}\),
\[
 \chi^2(P\|u)=q^n\|P-u\|_2^2,
\]
which gives the second identity.  For a perfect code, divide
\eqref{eq:perfect-tiling} by \(NV_e=q^n\).
\end{proof}

Thus \(\Dtwo\) is the weighted aggregate of the exact order-two error at
every ball radius.  Equivalently, each summand is a monotone transform of the
order-two R\'enyi divergence, since
\[
 \chi^2(P\|u)=\exp\bigl(D_2(P\|u)\bigr)-1.
\]
Accordingly, total discrepancy aggregates exponentiated order-two R\'enyi
errors rather than the divergences themselves.  Theorem~\ref{thm:all-perfect}
identifies the finite minimizers of this multiscale aggregate at every
perfect-code parameter set.  This is a smoothing interpretation, not by
itself a cryptographic security theorem.

At fixed two-error parameters, the radial kernel and benchmark can be
precomputed.  The equality proofs also impose Lloyd-root Fourier support,
orthogonal-array conditions, and prescribed distance distributions.  These
may serve as redundant constraints in exact-cover or related searches, but
whether they yield a practical speedup remains open
\cite{KaskiOstergard2006,OstergardPottonen2009,CazorlaGarcia2024}.

\section{Conclusions and open problems}
\label{sec:conclusion}

We proved that every perfect code in a finite Hamming space minimizes total
quadratic ball discrepancy among all codes of the same length and cardinality,
with equality exactly for perfect codes.  No linearity or minimum-distance
hypothesis is imposed on the competitors.  At every admissible one-error
parameter set and every nontrivial admissible two-error parameter set, we also
obtained an explicit benchmark whose attainment is equivalent to existence.
The first result covers the full \(q\)-ary Hamming family; the second combines
quadrature for \(q\geq4\) with the binary repetition and ternary Golay cases.
The binary Golay and repetition certificates, together with the known
parameter classification, complete the all-perfect-code theorem.

The method combines the spectral discrepancy formula, convexity and local
slope certificates, exact dual moments, and a completely monotonic
distance-potential bound.  The smoothing identity interprets the objective as
an all-radii order-two uniformity error.  The benchmarks are exact
variational characterizations; turning them into effective decision criteria
remains a separate computational problem.  In particular, no nontrivial
perfect two-error code
over a non-prime-power alphabet is known, although recent work has
substantially enlarged the excluded parameter families
\cite{CazorlaGarcia2024,Bennett2026}.  Every remaining admissible instance
covered here has a definite numerical target.  Minimizers outside
perfect-code parameter sets remain open.

We conclude with five open problems.

\begin{enumerate}
\item \emph{Does near-minimal discrepancy imply near-perfect tiling?}
Fix \(q,n,N\) for which a one-error perfect code exists, and let \(\C\) be
any \(N\)-word code.  Call an ambient point defective if it lies in none of
the radius-one balls around \(\C\), or in more than one.  Can the number of
defective points be bounded explicitly in terms of
\(\Dtwo(\C)-\Dtwo(q,n,N)\)?

\item \emph{Can the equality conditions accelerate perfect-code searches?}
At admissible two-error parameters, equality with the benchmark forces
Lloyd-root Fourier support, orthogonal-array conditions, and a prescribed
distance distribution.  Can these constraints prune searches based on exact
cover or integer programming, or certify that the benchmark is unattainable?

\item \emph{What minimizes discrepancy when no perfect code exists?}
Fix \(q,n,N\) for which no perfect code exists and for which some
\(N\)-word code corrects at least one error.  Let \(e\) be the largest number
of errors corrected by any such code.  Must at least one code attaining
\(\Dtwo(q,n,N)\) also correct \(e\) errors?

\item \emph{Can the method be extended to codes correcting more errors?}
For \(e\geq3\), the proof of our all-perfect-code theorem relies in part on
known classification and nonexistence results.  Can one instead construct
from the parameters a polynomial minorant of \(W_{n,q}\) that is sharp at the
integral Lloyd roots and whose equality conditions force perfect tiling?
What degree is required?

\item \emph{What happens as the length grows?}
The spectral formula
\[
 \Dtwo(\C)=\frac1{q^n}\sum_{w=1}^n
 \Adual_w(\C)W_{n,q}(w)
\]
shows that, up to the common factor \(q^{-n}\), \(W_{n,q}(w)\) is the
discrepancy cost assigned to the Hamming-weight-\(w\) spectral layer of the
code.  For a fixed prime power \(q\),
Corollary~\ref{cor:qary-hamming-family} expresses the exact minimum
discrepancy along the \(q\)-ary Hamming family through \(W_{n,q}\) evaluated
at the one-error Lloyd root.  Determine the exponential rate of this minimum
as \(n\to\infty\).  More generally, determine the exponential rate of
\(W_{n,q}(\lfloor\omega n\rfloor)\) for fixed \(q\geq2\) and
\(0<\omega<1\).
\end{enumerate}

\appendix

\section{Technical details for the two-error criterion}
\label{app:two-error-details}

We collect here the quadrature calculation and the arithmetic reduction used
in Theorem~\ref{thm:two-perfect-all-q}.

\begin{lemma}[Two-node quadrature]
\label{lem:two-node-quadrature}
Under the hypotheses of Theorem~\ref{thm:two-perfect-all-q} with \(q\geq4\),
\[
 1<r<\bar w<s<n,
 \qquad b_r,b_s>0.
\]
Moreover, with \(a=q-1\),
\begin{equation}
\label{eq:radius-two-quadrature}
 \frac{P(0)+b_rP(r)+b_sP(s)}{V^{(2)}}
 =\sum_{i=0}^n q^{-n}\binom ni a^iP(i)
 \qquad(\deg P\leq4).
\end{equation}
Consequently, the vectors \(Q^*\) and \(A^*\) defined in
\eqref{eq:formal-A-star} and immediately before it satisfy
\eqref{eq:formal-two-design-moments}.
\end{lemma}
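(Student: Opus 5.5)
The plan is to treat \(\{0,r,s\}\) as the nodes of a Gauss--Radau quadrature for the binomial measure
\[
 \mu(i)\triangleq q^{-n}\binom ni a^i,\qquad 0\leq i\leq n.
\]
I will first place the roots, then prove exactness of degree four, and read off positivity of the weights and the moment identities \eqref{eq:formal-two-design-moments} as corollaries.

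\emph{Location of the roots.} From \eqref{eq:lloyd-two-centered}, the two roots \(y_r<y_s\) of \(y\mapsto L_2(\bar w+y)\) have product \((2-an)/q^2<0\), since \(a\geq3\) and \(n\geq3\). Hence \(y_r<0<y_s\), that is, \(r<\bar w<s\). Next I evaluate the Lloyd polynomial at the ends, using \(L_2(w)=K_2^{(n-1,q)}(w-1)\):
\[
 L_2(1)=\binom{n-1}2a^2>0,\qquad L_2(n)=\binom{n-1}2>0 .
\]
So neither \(1\) nor \(n\) lies in \([r,s]\). Since \(1<\bar w<n\) and \(\bar w\in(r,s)\), this gives \(1<r\) and \(s<n\).

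\emph{Exactness up to degree four.} I use the identity \(i\,\mu(i)=\tfrac{an}{q}\,\mu'(i-1)\), where \(\mu'\) is the binomial measure of length \(n-1\). Under \(\mu'\), the polynomials \(K_t^{(n-1,q)}\) are orthogonal. Therefore \(L_2(x)=K_2^{(n-1,q)}(x-1)\) is orthogonal, with respect to the measure \(x\,\mu(x)\), to every polynomial of degree at most one. Given \(P\) with \(\deg P\leq4\), I divide to get \(P=xL_2(x)R(x)+S(x)\) with \(\deg R\leq1\) and \(\deg S\leq2\). The first term integrates to zero against \(\mu\) by this orthogonality, and it vanishes at the nodes \(0,r,s\). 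Hence \eqref{eq:radius-two-quadrature} reduces to the same identity for quadratics \(S\).

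For quadratics, let \(\omega_0,\omega_r,\omega_s\) be the weights of the interpolatory rule on the three nodes. This rule is automatically exact for degree \(\leq2\), so it is exact for degree \(\leq4\) by the previous step. It remains to identify these weights with \(1/V^{(2)}\), \(b_r/V^{(2)}\), \(b_s/V^{(2)}\).
\begin{itemize}
\item[(i)] For \(\omega_0\), I test against \(P=(x-r)^2(x-s)^2=(4/q^4)L_2(x)^2\). Since \(L_2=K_0+K_1+K_2\) for length \(n\), orthogonality gives
\[
 \sum_i\mu(i)L_2(i)^2=1+na+\binom n2a^2=V^{(2)}.
\]
Together with \(L_2(0)=V^{(2)}\), this forces \(\omega_0=1/V^{(2)}\).
\item[(ii)] Exactness for \(1\) and \(x\) then gives \(\omega_r+\omega_s=1-1/V^{(2)}\) and \(r\omega_r+s\omega_s=\bar w\). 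These are exactly the equations defining \(b_r/V^{(2)}\) and \(b_s/V^{(2)}\) in \eqref{eq:formal-two-masses}.
\item[(iii)] For positivity, apply the rule to \(x(x-s)^2\):
\[
 \omega_r\,r(r-s)^2=\sum_i\mu(i)\,i(i-s)^2>0 .
\]
Symmetrically, \(x(x-r)^2\) gives \(\omega_s>0\). Hence \(b_r,b_s>0\).
\end{itemize}

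\emph{The moments of \(A^*\).} By \eqref{eq:formal-A-star},
\[
 A_i^*=\frac{K_i(0)+b_rK_i(r)+b_sK_i(s)}{V^{(2)}} .
\]
For \(i\leq4\), take \(P=K_i^{(n,q)}\) in \eqref{eq:radius-two-quadrature}. The right-hand side is \(\sum_j\mu(j)K_i(j)=\delta_{i0}\), which gives \(A_0^*=1\) and \(A_i^*=0\) for \(1\leq i\leq\min\{4,n\}\). Finally, \(\sum_iK_i(j)=q^n\delta_{j0}\), so
\[
 \sum_iA_i^*=\frac{q^nQ_0^*}{V^{(2)}}=N^{(2)} .
\]

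I expect the main obstacle to be the degree-four exactness, specifically getting the shifted-measure orthogonality exactly right. This means checking carefully the identity \(i\binom ni a^i=na\binom{n-1}{i-1}a^{i-1}\) and the normalisation \(q^{-(n-1)}\) of \(\mu'\). The sign arguments and the moment bookkeeping are routine once that is in place.
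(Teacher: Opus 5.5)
Your proof is correct, and its skeleton matches the paper's. You use the same root location: your negative product of roots is the paper's \(L_2(\bar w)<0\), combined with \(L_2(1),L_2(n)>0\). You use the same division \(P=xL_2R+S\), with shifted Krawtchouk orthogonality annihilating the cubic part, and the same bookkeeping for \(A^*\).

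The routes differ in two sub-steps and in their order. The paper proves \(b_r,b_s>0\) first, directly from \eqref{eq:formal-two-masses}. Only \(b_r>0\) is really at issue, and the root separation \((s-r)^2=1+4(q-1)(n-2)/q^2\), with integrality giving \(s-r\geq2\), yields \(s-\bar w>1/q\geq\bar w/(V^{(2)}-1)\). It then verifies the prescribed weights on the basis \(1,x,L_2\) of the quadratics. That check uses only \(L_2(0)=V^{(2)}\), \(L_2(r)=L_2(s)=0\), and the fact that the binomial average of \(L_2\) is \(1\).

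You instead build the interpolatory rule on \(\{0,r,s\}\) and lift it to degree four. You identify \(\omega_0\) through the degree-four test function \(L_2^2\), which needs the norm \(\sum_i\mu(i)L_2(i)^2=V^{(2)}\). You then get positivity from exactness on \(x(x-s)^2\) and \(x(x-r)^2\), which is the classical Christoffel-number argument.

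Your identification step is a small detour compared with the paper's degree-two check. Your positivity argument, however, is more conceptual and more robust. It uses neither the integrality of \(r,s\) nor the explicit separation formula, only that the nodes \(r,s\) are positive and the rule is exact in degree three. In exchange, the paper's computation establishes positivity independently of the quadrature identity.
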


\begin{proof}
Equation~\eqref{eq:lloyd-two-centered} gives
\(L_2(\bar w)<0\), while \(L_2(1)>0\) and \(L_2(n)>0\); hence the two roots
have the stated order.  Their separation satisfies
\[
 (s-r)^2=1+\frac{4(q-1)(n-2)}{q^2}>1,
\]
so integral roots obey \(s-r\geq2\).  Their sum and
\eqref{eq:lloyd-two-centered} give
\[
 s-\bar w
 =\frac{s-r}{2}-\frac{q-4}{2q}
 \geq\frac{q+4}{2q}>\frac1q
 \geq\frac{\bar w}{V^{(2)}-1}.
\]
The formulas in \eqref{eq:formal-two-masses} now show that both masses are
positive.

The two sides of \eqref{eq:radius-two-quadrature} agree on \(1\) and \(x\)
by the definitions of \(b_r,b_s\).  They also agree on \(L_2\), because
\[
 L_2=K_0^{(n,q)}+K_1^{(n,q)}+K_2^{(n,q)},\qquad
 L_2(0)=V^{(2)},\qquad L_2(r)=L_2(s)=0,
\]
and the binomial average of \(L_2\) is \(1\).  Thus they agree on every
quadratic.  Every polynomial of degree at most four has the form
\[
 P(x)=R(x)+xL_2(x)S(x),\qquad \deg R\leq2,\quad \deg S\leq1.
\]
The second term vanishes at \(0,r,s\).  Its binomial average vanishes after
using \(x\binom nx=n\binom{n-1}{x-1}\) and Krawtchouk orthogonality, since
\(L_2(x)=K_2^{(n-1,q)}(x-1)\).  This proves the quadrature identity.  Applying
it to the first four Krawtchouk polynomials yields
\(A_0^*=1\) and \(A_i^*=0\) for \(1\leq i\leq\min\{4,n\}\); evaluating at
the constant vector gives \(\sum_iA_i^*=N^{(2)}\).
\end{proof}

\begin{proof}[Proof of Lemma~\ref{lem:small-q-two-parameters}]
For \(q=2\), write \(V^{(2)}=2^m\).  The identity
\[
 8V^{(2)}=(2n+1)^2+7
\]
and the Ramanujan--Nagell theorem \cite{Nagell1961} leave
\(n\in\{0,1,2,5,90\}\).  Under \(n\geq3\), the roots
\[
 \frac{n+1\pm\sqrt{n-1}}2
\]
exclude \(n=90\), so \(n=5\), \(N^{(2)}=2\), and \((r,s)=(2,4)\).

For \(q=3\), one has \(V^{(2)}=2n^2+1=3^m\), or
\[
 \frac{3^m-1}{2}=n^2.
\]
Ljunggren's theorem \cite{Ljunggren1943} leaves \(m=5,n=11\) when
\(n\geq3\).  Hence \(N^{(2)}=3^6\), and
\eqref{eq:lloyd-two-centered} gives \((r,s)=(6,9)\).
\end{proof}

\section{The ternary Lloyd-root certificate}
\label{app:ternary-certificate}

This appendix supplies the exact coefficient certificate used in
Theorem~\ref{thm:ternary-sign}.  The symbolic telescoping identity, its
initial values, and both Golay gap certificates are independently reproduced
in exact arithmetic by the versioned verification companion
\cite{Zabokritskiy2026Verifier}.
For \(r\geq1\), define
\[
 \gamma_r\triangleq
 [z^{3r}](z-1)^{4r-3}(2z^2+5z+2)^r.
\]

\begin{lemma}[Coefficient formula for the ternary slope]
\label{lem:ternary-coefficient}
For every \(r\geq1\),
\begin{equation}
\label{eq:ternary-coefficient}
 W_{3r+1,3}(2r)-W_{3r+1,3}(2r+1)=6\gamma_r.
\end{equation}
\end{lemma}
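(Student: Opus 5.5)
The plan is to start from the first-difference integral \eqref{eq:W-first-difference} in the proof of Corollary~\ref{cor:convexity}, turn it into a constant-term extraction on the unit circle, and then use a palindromic symmetry to collapse two coefficients into one. Take \(n=3r+1\), \(q=3\) and \(w=2r\) in \eqref{eq:W-first-difference}. Then \(n-w-1=r\) and \(w-1=2r-1\), so
\[
 W_{3r+1,3}(2r)-W_{3r+1,3}(2r+1)
 =\frac1{2\pi}\int_0^{2\pi}\alpha_3^{\,r}\beta^{\,2r-1}(\alpha_3-\beta)\,d\theta .
\]

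Next write \(z=e^{i\theta}\), so that \(\frac1{2\pi}\int_0^{2\pi}(\cdot)\,d\theta=\CT_z(\cdot)\) on Laurent polynomials. The three factors become
\[
 \alpha_3=(1+2z)(1+2z^{-1})=\frac{2z^2+5z+2}{z},\qquad
 \beta=(1-z)(1-z^{-1})=-\frac{(z-1)^2}{z},\qquad
 \alpha_3-\beta=\frac{3(z^2+z+1)}{z}.
\]
The factor \(\beta^{2r-1}\) carries the sign \((-1)^{2r-1}=-1\). Using \((z-1)(z^2+z+1)=z^3-1\), the integrand is
\[
 -3\,z^{-3r}(z^3-1)F(z),\qquad F(z)\triangleq(z-1)^{4r-3}(2z^2+5z+2)^r .
\]
For \(r=1\) the exponent \(4r-3\) equals \(1\), so no negative power of \(z-1\) appears. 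Taking constant terms gives
\[
 W_{3r+1,3}(2r)-W_{3r+1,3}(2r+1)
 =3\bigl([z^{3r}]F-[z^{3r-3}]F\bigr)
 =3\bigl(\gamma_r-[z^{3r-3}]F\bigr).
\]

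It then remains to show \([z^{3r-3}]F=-\gamma_r\). The polynomial \(F\) has degree \(6r-3\). The factor \(2z^2+5z+2\) is palindromic. The factor \((z-1)^{m}\) with \(m=4r-3\) odd is anti-palindromic, since \(z^m(z^{-1}-1)^m=(1-z)^m=-(z-1)^m\). Hence
\[
 z^{6r-3}F(z^{-1})=-F(z),
\]
so the coefficients satisfy \([z^j]F=-[z^{6r-3-j}]F\). Taking \(j=3r\) gives \(6r-3-j=3r-3\), which is the required identity. Therefore the difference equals \(6\gamma_r\), which is \eqref{eq:ternary-coefficient}.

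The only delicate step is the bookkeeping of signs and powers of \(z\) in the constant-term conversion. In particular, \(\alpha_3-\beta\) must simplify to \(3(z^2+z+1)/z\), the factor \((z-1)^{4r-3}\) must have odd exponent, and the exponents of \(z\) must total exactly \(-3r\). These are what make the reflection land precisely on \(3r-3\). The argument uses no further input beyond \eqref{eq:W-first-difference}.
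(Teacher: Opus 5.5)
Your proof is correct and follows essentially the same route as the paper. Both arguments write the difference as $\CT_z$ of $(\alpha_3-\beta)\alpha_3^{\,r}\beta^{\,2r-1}=-3z^{-3r}(z^3-1)(z-1)^{4r-3}(2z^2+5z+2)^r$, and both use anti-reciprocity of that degree-$(6r-3)$ polynomial to identify the two coefficients as $\pm\gamma_r$; the only cosmetic difference is that you start from \eqref{eq:W-first-difference}, whereas the paper phrases the same computation through coefficient norms of $(1+2z)C_r$ and $(1-z)C_r$.
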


\begin{proof}
Set
\[
 C_r(z)\triangleq(1+2z)^r(1-z)^{2r-1}.
\]
Writing \(\|P\|_2^2=\sum_j|[z^j]P|^2\), the coefficient-norm form of
\eqref{eq:W-integral} gives
\[
 W_{3r+1,3}(2r)=\|(1+2z)C_r\|_2^2,\qquad
 W_{3r+1,3}(2r+1)=\|(1-z)C_r\|_2^2.
\]
Since
\[
 |1+2z|^2-|1-z|^2=3(1+z+z^{-1})
\]
on the unit circle, their difference divided by \(3\) is
\[
 \CT_z(1+z+z^{-1})C_r(z)C_r(z^{-1}).
\]
Now
\[
 C_r(z)C_r(z^{-1})
 =-z^{-(3r-1)}(z-1)^{4r-2}(2z^2+5z+2)^r.
\]
Using
\[
 1+z+z^{-1}=z^{-1}\frac{z^3-1}{z-1},
\]
the constant term is the negative of the difference between the
coefficients at degrees \(3r-3\) and \(3r\) in
\[
 H_r(z)\triangleq(z-1)^{4r-3}(2z^2+5z+2)^r.
\]
The polynomial \(H_r\) is anti-reciprocal of degree \(6r-3\), so these two
coefficients are \(-\gamma_r\) and \(\gamma_r\), respectively.  This proves
\eqref{eq:ternary-coefficient}.
\end{proof}

To prove positivity, we next convert \(\gamma_r\) into a finite sum and
derive a telescoping recurrence.  Substitution of
\(y=\sin^2(\theta/2)\) in
\eqref{eq:W-first-difference}, followed by
\eqref{eq:ternary-coefficient}, gives
\[
 \gamma_r=-\frac1{2\pi}\int_0^1
 \frac{(9-8y)^r(4y)^{2r-1}(4y-3)}
      {\sqrt{y(1-y)}}\,dy.
\]
After \(y=1-x\), expand \((1+8x)^r\).  Write \(B(a,b)\) for the beta
function and \((a)_j=a(a+1)\cdots(a+j-1)\), with \((a)_0=1\).  The
termwise integral is
\[
 \int_0^1x^{j-1/2}(1-x)^{2r-3/2}(1-4x)\,dx
 =B\left(j+\frac12,2r-\frac12\right)
  \frac{2r-3j-2}{2r+j}.
\]
Using
\[
 B\left(j+\frac12,2r-\frac12\right)
 =\frac{\pi}{4^{2r-1}}\binom{4r-2}{2r-1}
   \frac{(1/2)_j}{(2r)_j},
\]
we obtain
\begin{equation}
\label{eq:ternary-finite-sum}
 \gamma_r=\frac12\binom{4r-2}{2r-1}
 \sum_{j=0}^r\binom rj8^j\frac{(1/2)_j}{(2r)_j}
 \frac{3j-2r+2}{2r+j}.
\end{equation}

Define
\begin{align*}
 U_r&\triangleq486(r+1)(4r-1)(4r+1)(33r^2-4r-60),\\
 V_r&\triangleq2(r+2)(3r+4)(3r+5)(33r^2-70r-23),\\
 Q_r&\triangleq104440+409424r+95070r^2-55242r^3+52272r^4.
\end{align*}
For \(r\geq3\), all three quantities are positive.  For \(Q_r\), this is
seen by writing \(x=r-3\):
\[
 Q_r=4930840+5133686x+2420580x^2
       +572022x^3+52272x^4.
\]

\begin{lemma}[Telescoping recurrence]
\label{lem:ternary-telescoper}
For every \(r\geq1\),
\begin{equation}
\label{eq:ternary-positive-recurrence}
 16V_r(\gamma_{r+2}-16\gamma_{r+1})
 =U_r(\gamma_{r+1}-16\gamma_r)+Q_r\gamma_{r+1}.
\end{equation}
\end{lemma}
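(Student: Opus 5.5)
This identity is a linear recurrence with polynomial coefficients for the sequence \(\gamma_r\). I would prove it by creative telescoping (Zeilberger's algorithm) applied to a hypergeometric summand. The summand comes either from the finite sum \eqref{eq:ternary-finite-sum} or, more conveniently, from the coefficient definition
\[
 \gamma_r=[z^{3r}](z-1)^{4r-3}(2z^2+5z+2)^r .
\]
Since \(2z^2+5z+2=(2z+1)(z+2)\), the second route expands \(\gamma_r\) as a double binomial sum. The finite-sum route needs only one summation index, so I would use it:
\[
 \gamma_r=\tfrac12\binom{4r-2}{2r-1}\sum_{j=0}^r F(r,j),\qquad
 F(r,j)=\binom rj8^j\frac{(1/2)_j}{(2r)_j}\frac{3j-2r+2}{2r+j}.
\]
The summand \(F(r,j)\) is proper hypergeometric in both \(r\) and \(j\), with natural support \(0\le j\le r\). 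Zeilberger's theorem therefore guarantees a recurrence in \(r\) of some order with polynomial coefficients.

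The plan has four steps.

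\begin{enumerate}
\item Absorb the prefactor. I would write \(\tilde F(r,j)=\tfrac12\binom{4r-2}{2r-1}F(r,j)\) and compute the ratios \(\tilde F(r+1,j)/\tilde F(r,j)\) and \(\tilde F(r+2,j)/\tilde F(r,j)\) as rational functions of \(r\) and \(j\). The binomial prefactor contributes the factors \((4r-1)(4r+1)/(r(2r+1))\)-type ratios, which explains the \((4r-1)(4r+1)\) appearing in \(U_r\).
\item Set up the telescoping ansatz. Rewrite \eqref{eq:ternary-positive-recurrence} in the equivalent normal form
\[
 c_2(r)\gamma_{r+2}+c_1(r)\gamma_{r+1}+c_0(r)\gamma_r=0,
\]
with \(c_2=16V_r\), \(c_1=-256V_r-U_r-Q_r\), and \(c_0=16U_r\). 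Then seek a rational certificate \(R(r,j)\) such that, with \(G=R\tilde F\),
\[
 c_2\tilde F(r+2,j)+c_1\tilde F(r+1,j)+c_0\tilde F(r,j)=G(r,j+1)-G(r,j).
\]
Dividing by \(\tilde F(r,j)\) turns this into a rational-function identity in \(r\) and \(j\). That identity can be checked by clearing denominators and expanding; this is the step I would delegate to exact computer algebra, as the paper does via \cite{Zabokritskiy2026Verifier}.
\item Sum over \(j\). Summing the telescoping identity over all integers \(j\) gives the recurrence, provided the boundary terms \(G(r,j)\) vanish at \(j=0\) and at \(j=r+3\) (or at the natural endpoints of support). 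I must also check that \(R\) has no poles inside the summation range for \(r\ge1\).
\item Handle small \(r\). If the certificate has denominators vanishing at small \(r\), I would verify those finitely many cases, for instance \(r=1\) and \(r=2\), directly. To do so, compute \(\gamma_1,\gamma_2,\gamma_3,\gamma_4\) from the coefficient definition and check the identity numerically.
\end{enumerate}

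The main obstacle is finding the certificate \(R\) and checking the boundary terms. The certificate is a large rational function, and the factor \((2r)_j\) shifts awkwardly in \(r\). The boundary check is where naive telescoping can fail: \(\binom rj\) vanishes past \(j=r\), but \(R\) may introduce a compensating pole there. If the direct approach on the finite sum proves messy, I would first try the alternative of applying Zeilberger to the contour-integral or constant-term form. That means applying the Almkvist–Zeilberger method to
\[
 \CT_z\, z^{-3r}(z-1)^{4r-3}(2z^2+5z+2)^r,
\]
where the certificate is a rational function times the integrand and boundary terms vanish automatically, because the constant term of a derivative is zero. Finally, I would reconcile the resulting recurrence with the stated \(U_r,V_r,Q_r\), up to a common polynomial factor.
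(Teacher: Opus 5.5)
Your plan is essentially the paper's proof: creative telescoping on the finite sum \eqref{eq:ternary-finite-sum}, a rational certificate, vanishing boundary terms at $j=0$ and $j=r+3$, and summation over $j$; your normal-form coefficients $c_2,c_1,c_0$ are exactly $16$ times the paper's $P_2,P_1,P_0$. The one piece you leave uncomputed is the certificate itself, which the paper exhibits explicitly as $\mathcal R(r,j)=H(r,j)/(24\delta_r)$ multiplying the top-shift term $b_{r+2,j}$ rather than $\tilde F(r,j)$; that choice removes your concern about compensating poles at $j=r+1,r+2$, and because $\delta_r>0$ for $r\geq1$, no small-$r$ exceptions of the kind in your step~4 arise.
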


\begin{proof}
Define
\[
 b_{s,j}\triangleq
 \frac12\binom{4s-2}{2s-1}\binom sj8^j
 \frac{(1/2)_j}{(2s)_j},\qquad
 a_{s,j}\triangleq b_{s,j}\frac{3j-2s+2}{2s+j},
\]
and set both quantities to zero outside \(0\leq j\leq s\).  Thus
\eqref{eq:ternary-finite-sum} says
\[
 \gamma_s=\sum_{j=0}^sa_{s,j}.
\]
Let
\[
 P_0(r)\triangleq U_r,\qquad P_2(r)\triangleq V_r,\qquad
 P_1(r)\triangleq-16V_r-\frac{U_r+Q_r}{16},
\]
and put
\[
 \delta_r\triangleq16r^3+64r^2+79r+30,\qquad
 \mathcal R(r,j)\triangleq\frac{H(r,j)}{24\delta_r},
\]
where
\[
 H(r,j)\triangleq j\sum_{\ell=0}^5m_\ell(r)j^\ell
\]
and
\begin{align*}
m_0(r)\triangleq{}&-9240-107456r-85526r^2+188518r^3\\
&\quad+231594r^4+45738r^5-13068r^6,\\
m_1(r)\triangleq{}&-48900-67930r+382778r^2+823841r^3\\
&\quad+601887r^4+211266r^5+39204r^6,\\
m_2(r)\triangleq{}&-200880-384462r+17313r^2+314580r^3\\
&\quad+143550r^4+13068r^5,\\
m_3(r)\triangleq{}&-9180+144648r+145953r^2-71145r^3-72171r^4,\\
m_4(r)\triangleq{}&59940+91476r-27135r^2-48114r^3,\\
m_5(r)\triangleq{}&14580+972r-8019r^2.
\end{align*}
Creative telescoping yields the following rational certificate.  Exact
simplification gives, for \(r\geq1\) and \(0\leq j\leq r+2\),
\begin{equation}
\label{eq:ternary-telescoper}
\begin{aligned}
 &P_0(r)a_{r,j}+P_1(r)a_{r+1,j}+P_2(r)a_{r+2,j}\\
 &\qquad=b_{r+2,j+1}\mathcal R(r,j+1)
             -b_{r+2,j}\mathcal R(r,j).
\end{aligned}
\end{equation}
For a direct check, divide by \(b_{r+2,j}\), clear denominators, and use
\[
 \frac{b_{s,j+1}}{b_{s,j}}
 =\frac{4(s-j)(2j+1)}{(j+1)(2s+j)},\qquad
 \frac{b_{s-1,j}}{b_{s,j}}
 =\frac{(s-j)(2s+j-2)(2s+j-1)}
        {4s(4s-5)(4s-3)}.
\]
After substitution, the difference between the two sides of
\eqref{eq:ternary-telescoper} is the zero polynomial in \(r,j\).
At the boundaries, \(\mathcal R(r,0)=0\) and \(b_{r+2,r+3}=0\).
Summing over \(j\) proves
\[
 P_0(r)\gamma_r+P_1(r)\gamma_{r+1}
 +P_2(r)\gamma_{r+2}=0.
\]
Multiplication by \(16\) and the definition of \(P_1\) give
\eqref{eq:ternary-positive-recurrence}.
\end{proof}

\begin{lemma}[Positivity of the ternary slope coefficient]
\label{lem:ternary-recurrence}
The integers \(\gamma_r\) are positive for every \(r\geq1\).
\end{lemma}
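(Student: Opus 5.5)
The plan is to derive positivity from the recurrence of Lemma~\ref{lem:ternary-telescoper} by induction, after checking finitely many initial values. Integrality comes for free: \(\gamma_r\) is a coefficient of the integer polynomial \((z-1)^{4r-3}(2z^2+5z+2)^r\). So the only real content is the sign, and the inductive hypothesis I will propagate is the stronger statement
\[
 \gamma_{r+1}\geq16\gamma_r>0 .
\]

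\emph{Step 1 (positivity of the recurrence coefficients).} For \(r\geq3\), check that \(U_r,V_r,Q_r>0\). This was already noted after their definitions.
\begin{itemize}
\item For \(U_r\), the factor \(33r^2-4r-60\) equals \(225\) at \(r=3\) and is increasing for \(r\geq1\).
\item For \(V_r\), the factor \(33r^2-70r-23\) equals \(64\) at \(r=3\) and is increasing for \(r\geq2\).
\item For \(Q_r\), use the shifted expansion in \(x=r-3\), whose coefficients are all positive.
\end{itemize}
All other factors of \(U_r\) and \(V_r\) are manifestly positive.

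\emph{Step 2 (initial values).} Compute \(\gamma_1,\gamma_2,\gamma_3,\gamma_4\) exactly, either by expanding the defining polynomial or from the finite sum \eqref{eq:ternary-finite-sum}. For example,
\[
 \gamma_1=[z^3](z-1)(2z^2+5z+2)=2 .
\]
Then verify directly that \(\gamma_1,\gamma_2,\gamma_3>0\) and that \(\gamma_4\geq16\gamma_3\). These are routine exact integer computations, and they can be cross-checked with the verification companion.

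\emph{Step 3 (induction).} Assume \(\gamma_{r+1}\geq16\gamma_r>0\) for some \(r\geq3\). Then \(\gamma_{r+1}>0\). The right-hand side of \eqref{eq:ternary-positive-recurrence} is
\[
 U_r(\gamma_{r+1}-16\gamma_r)+Q_r\gamma_{r+1}.
\]
Its first term is nonnegative and its second term is strictly positive, so the right-hand side is positive. Since \(16V_r>0\), we get \(\gamma_{r+2}-16\gamma_{r+1}>0\). Together with \(\gamma_{r+1}>0\), this gives \(\gamma_{r+2}>16\gamma_{r+1}>0\), which is the hypothesis at \(r+1\). Starting from the base case \(r=3\) of Step 2, this yields \(\gamma_r>0\) for all \(r\geq3\). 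The cases \(r=1,2\) are covered by Step 2 directly.

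\emph{Expected obstacle.} The only delicate point is the starting index. The recurrence coefficients are not all positive for \(r=1,2\): for instance \(33r^2-70r-23<0\) there. So the induction must begin at \(r=3\), and this requires the one genuine numerical fact \(\gamma_4\geq16\gamma_3\). If that ratio check failed, I would instead start the induction at a later index, after verifying positivity of \(U_r,V_r,Q_r\) from that point on, and handle the extra initial values by exact computation.
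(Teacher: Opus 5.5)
Your strategy matches the paper's: propagate the strengthened hypothesis \(\gamma_{r+1}\geq16\gamma_r>0\) through \eqref{eq:ternary-positive-recurrence} once \(U_r,V_r,Q_r>0\). Your Steps 1 and 3 are correct. However, the base case you commit to in Step 2 is false. Exact expansion gives \(\gamma_3=188\) and \(\gamma_4=2977\), so
\[
 \gamma_4-16\gamma_3=2977-3008=-31<0 .
\]
The induction therefore cannot start at \(r=3\) as written. The computation you called routine would have refuted your claim rather than confirmed it.

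Your contingency clause is what is actually needed, and it is exactly the paper's proof. Compute \(\gamma_5=54468\), check that \(\gamma_5-16\gamma_4=6836>0\), and run the induction from \(r=4\). The values \(\gamma_1,\dots,\gamma_5\) are positive by direct computation.

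Alternatively, you can keep the step at \(r=3\) without computing \(\gamma_5\). At \(r=3\) one has \(U_3=62\,548\,200\), \(Q_3=4\,930\,840\), and \(V_3=116\,480>0\). Then
\[
 U_3(\gamma_4-16\gamma_3)+Q_3\gamma_4=-31U_3+2977\,Q_3>0,
\]
so \(\gamma_5>16\gamma_4\) follows anyway: the term \(Q_r\gamma_{r+1}\) absorbs the small negative defect. Either way, the proof needs a correct initial inequality \(\gamma_{r_0+1}>16\gamma_{r_0}\) at some \(r_0\geq3\). You asserted it at an index where it fails instead of computing it.
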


\begin{proof}
Direct coefficient extraction gives
\[
 \gamma_1=2,\quad \gamma_2=15,\quad \gamma_3=188,\quad
 \gamma_4=2977,\quad \gamma_5=54468,
\]
and \(\gamma_5-16\gamma_4=6836>0\).
Since \(U_r,V_r,Q_r>0\) for \(r\geq3\),
\eqref{eq:ternary-positive-recurrence}, applied successively for
\(r=4,5,\ldots\), gives
\[
 \gamma_{r+1}>16\gamma_r>0\qquad(r\geq4).
\]
Together with the initial values, this proves the lemma.
\end{proof}

The return probabilities used in
Proposition~\ref{prop:complete-monotone} satisfy
\[
 p_1=p_2=\frac13,\qquad p_3=\frac7{27},\qquad
 \Delta^2p_1=-\frac2{27}.
\]
This is the finite-difference failure invoked in
the ternary case in Subsection~\ref{sec:small-alphabets}.

\section{Distance-kernel and binary specializations}
\label{app:binary-formulas}

The full distance-kernel formula follows from the lazy-walk representation
used in Proposition~\ref{prop:lambda}.  For \(1\leq w\leq n\),
\begin{equation}
\label{eq:lambda-multinomial}
 \lambda(w)=\frac{q^{n-w}}2
 \sum_{\substack{a,b\geq0\\a+b\leq w}}
 \binom{w}{a,b,w-a-b}(q-2)^{w-a-b}|a-b|.
\end{equation}
Indeed, after translating one endpoint to zero, each of the \(w\) differing
coordinates contributes \(1,-1,0\) to the distance difference with
probabilities \(1/q,1/q,(q-2)/q\), respectively.  Thus
\(\lambda(w)=q^n\E|S_w|/2\), and counting the three coordinate types gives
\eqref{eq:lambda-multinomial}.

For \(q=2\), only the terms with \(a+b=w\) remain, and hence
\[
 \lambda(w)
 =2^{n-w-1}\sum_{a=0}^w\binom wa|2a-w|
 =2^{n-w}w\binom{w-1}{\lceil w/2\rceil-1}.
\]
Equivalently,
\[
 \lambda(2i-1)=\lambda(2i)
 =2^{n-2i}i\binom{2i}{i},
\]
whenever \(2i\leq n\).  These agree with
\cite[Lemma~3.1, Eqs.~(17)--(18)]{Barg2021}.  Substitution in
\eqref{eq:Lambda} gives
\[
 \left.\Lambda_{n,q}\right|_{q=2}
 =\frac{n}{2^{n+1}}\binom{2n}{n}.
\]
Together with Theorem~\ref{thm:invariance}, this recovers
\cite[Prop.~3.2 and Thm.~3.3]{Barg2021}.

The closed binary evaluation of the squared Krawtchouk sum is
\begin{equation}
\label{eq:W-binary-closed}
 W_{n,2}(w)
 =\frac{\binom{2n-2w}{n-w}\binom{2w-2}{w-1}}
        {\binom{n-1}{w-1}},\qquad 1\leq w\leq n.
\end{equation}
This is \cite[Eq.~(37)]{Barg2021}.  At \(n=2^m-1\), substitution of
\eqref{eq:W-binary-closed} into Theorem~\ref{thm:all-one-perfect} gives
\eqref{eq:binary-main-bound}.

\section{The ternary Golay certificate}
\label{app:golay-certificate}

The exact certificate for Theorem~\ref{thm:ternary-golay} gives the following
stronger inequality.

\begin{proposition}[Exact quadratic certificate]
\label{prop:golay-certificate}
Every code \(\C\subseteq\F_3^{11}\) of size \(3^6\) satisfies
\begin{equation}
\label{eq:golay-bound}
 \Dtwo(\C)\geq
 \frac{7\,446\,692+275\,076A_1(\C)+295\,002A_2(\C)}{3^{11}}
 \geq\frac{7\,446\,692}{3^{11}}.
\end{equation}
Equality in the final bound holds if and only if \(\C\) is a perfect
two-error-correcting code.
\end{proposition}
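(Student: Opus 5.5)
We will build a quadratic minorant of the spectral weight \(W_{11,3}\) and evaluate it with the first two dual moments of Lemma~\ref{lem:moments}. At \((q,n,N)=(3,11,3^6)\) we have \(q^{n-1}/N=3^4\), \(q^{n-2}/N=3^3\), \(q^n/N=243\), and \(n(q-1)=22\). Lemma~\ref{lem:moments} then reads
\[
 \sum_{w\ge1}\Adual_w=242,\qquad
 \sum_{w\ge1}w\Adual_w=81(22-A_1),
\]
\[
 \sum_{w\ge1}w(w-1)\Adual_w=27\bigl(440-40A_1+2A_2\bigr).
\]

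First, compute the eleven integers \(W_{11,3}(w)\), \(1\le w\le 11\), exactly from \eqref{eq:W-kraw-squares} or \eqref{eq:W-integral}. Then choose
\[
 P(w)=\alpha+\beta w+\gamma w(w-1)
\]
with three properties: \(P\) interpolates \(W_{11,3}\) at the Lloyd roots \(6\) and \(9\); \(P\le W_{11,3}\) at every integer in \(\{1,\dots,11\}\); and the resulting coefficients of \(A_1\) and \(A_2\) are nonnegative. The theorem's proof mentions one additional contact point. I would therefore determine \(P\) by requiring \(P=W_{11,3}\) at \(6\), at \(9\), and at one third weight \(w_0\), probably \(w_0=7\) or \(8\), or possibly an endpoint. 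I would try each candidate and keep the one for which \(W_{11,3}-P\ge0\) on all eleven points.

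Summing \(\Adual_wP(w)\le \Adual_wW_{11,3}(w)\) and applying Theorem~\ref{thm:spectral} gives
\[
 3^{11}\Dtwo(\C)\ \ge\ 242\alpha+81\beta(22-A_1)+27\gamma(440-40A_1+2A_2).
\]
The \(A_2\) coefficient \(54\gamma\) must equal \(295\,002\), so \(\gamma=5463\). The \(A_1\) coefficient \(-81\beta-1080\gamma\) must equal \(275\,076\). The constant term must be \(7\,446\,692\). These targets pin down \(P\), and they serve as a consistency check on the choice of \(w_0\). Since \(\gamma>0\), the \(A_1\) coefficient is positive once \(\beta\) is sufficiently negative, which is what we expect given the ternary slope behaviour. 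The main obstacle is this step: verifying \(W_{11,3}(w)\ge P(w)\) at all integer weights, and identifying exactly where equality holds. This is a finite exact-arithmetic check, and I would present it as a table of \(W_{11,3}(w)-P(w)\).

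For the equality case, suppose \(\Dtwo(\C)=7\,446\,692/3^{11}\). Then \(A_1=A_2=0\), because both coefficients are positive. Moreover \(\Adual_w=0\) wherever \(W_{11,3}(w)>P(w)\), so the nonconstant spectrum lies on \(\{6,9,w_0\}\). With \(A_1=A_2=0\) the moment identities become
\[
 \sum\Adual_w=242,\qquad \sum w\Adual_w=1782,\qquad \sum w(w-1)\Adual_w=11880.
\]
This is three linear equations in three unknowns with a Vandermonde-type matrix. Its unique solution has \(\Adual_{w_0}=0\), which is exactly the Lloyd masses \(b_6,b_9\); this is the "elimination of the extra contact point".

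Next, \(c_w(2)=L_2(w)\) vanishes at \(w=6,9\). Hence the Fourier transform of \(\one_\C*\one_{B(0,2)}\) vanishes at every nontrivial character, and equals \(3^6V^{(2)}=3^{11}\) at the trivial one. Fourier inversion gives \(\one_\C*\one_{B(0,2)}=\one_{\X}\), so \(\C\) is perfect.

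Conversely, a perfect code has \(A_1=A_2=0\) and, by Proposition~\ref{prop:lloyd-support}, spectrum on \(\{6,9\}\), where \(P=W_{11,3}\). Both inequalities are then equalities, so perfect codes attain the bound.
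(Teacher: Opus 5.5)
Your plan is correct and is essentially the paper's proof. The target constants force $P(w)=323\,962-76\,236w+5\,463w(w-1)$, which is exactly the paper's quadratic interpolating $W_{11,3}$ at $w=6,7,9$, so $w_0=7$; the gap at $w=8$ is $18>0$, so $w_0=8$ would fail. The paper then supplies the gap table you anticipate and eliminates the extra contact at $7$ through the same three moment equations, whose unique solution is $(\Adual_6,\Adual_7,\Adual_9)=(132,0,110)$.
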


\begin{proof}
The ball \(B(0,2)\) has volume
\[
 V_2=1+2\binom{11}{1}+4\binom{11}{2}=243=3^5,
\]
and its Lloyd polynomial is
\[
 c_w(2)=K_2^{(10,3)}(w-1)=\frac92(w-6)(w-9).
\]
By Proposition~\ref{prop:lloyd-support}, a perfect code has nonconstant dual
spectrum only at \(w=6,9\).  Equations \eqref{eq:mass} and
\eqref{eq:first-moment} force
\[
 \Adual_6+\Adual_9=242,\qquad
 6\Adual_6+9\Adual_9=1782,
\]
and hence
\begin{equation}
\label{eq:golay-spectrum}
 \Adual_6=132,\qquad \Adual_9=110.
\end{equation}

For an arbitrary \(3^6\)-point code \(\C\), the first two dual moments control
quadratic polynomials.  Define
\begin{equation}
\label{eq:golay-minorant}
 p(w)\triangleq323\,962-76\,236w+5\,463w(w-1).
\end{equation}
This is the unique quadratic interpolating \(W_{11,3}\) at
\(w=6,7,9\).  Direct evaluation of \eqref{eq:W-def} gives the exact gaps
\[
\begin{array}{c|rrrrrr}
w&1&2&3&4&5&6\\ \hline
W_{11,3}(w)-p(w)
&653560947&17161110&1418310&176850&19836&0
\end{array}
\]
and
\[
\begin{array}{c|rrrrr}
w&7&8&9&10&11\\ \hline
W_{11,3}(w)-p(w)&0&18&0&14796&98460.
\end{array}
\]
Hence \(p(w)\leq W_{11,3}(w)\) for \(1\leq w\leq11\).

Lemma~\ref{lem:moments}, specialized to \(n=11,q=3,N=3^6\), gives
\begin{align*}
 \sum_{w=1}^{11}\Adual_w&=242,\\
 \sum_{w=1}^{11}w\Adual_w&=1782-81A_1,\\
 \sum_{w=1}^{11}w(w-1)\Adual_w&=11880-1080A_1+54A_2.
\end{align*}
Consequently,
\[
 \sum_{w=1}^{11}\Adual_wW_{11,3}(w)
 \geq7\,446\,692+275\,076A_1+295\,002A_2.
\]
Theorem~\ref{thm:spectral} proves the first inequality in
\eqref{eq:golay-bound}.  For a perfect code, \(A_1=A_2=0\), and
\eqref{eq:golay-spectrum} together with
\[
 W_{11,3}(6)=30\,436,\qquad W_{11,3}(9)=31\,174
\]
shows that the bound is attained.

Conversely, equality forces \(A_1=A_2=0\), while the gap table restricts the
nonconstant dual support to \(\{6,7,9\}\).  Write the corresponding masses as
\(b_6,b_7,b_9\).  The three moment identities give
\[
\begin{aligned}
 b_6+b_7+b_9&=242,\\
 6b_6+7b_7+9b_9&=1782,\\
 30b_6+42b_7+72b_9&=11880.
\end{aligned}
\]
Their unique solution is
\[
 (b_6,b_7,b_9)=(132,0,110).
\]
Thus every nonconstant Fourier coefficient of \(\one_\C\) is supported on a
zero of \(c_w(2)\).  Since \(3^6V_2=3^{11}\), Fourier inversion gives
\[
 \one_\C*\one_{B(0,2)}=\one_{\F_3^{11}},
\]
so \(\C\) is a perfect two-error-correcting code.
\end{proof}

\section{The binary Golay certificate}
\label{app:binary-golay-certificate}

This appendix proves Theorem~\ref{thm:binary-golay} in exact integer
arithmetic.  For \(j\geq0\), write
\[
 x^{\underline j}\triangleq x(x-1)\cdots(x-j+1),
 \qquad x^{\underline0}\triangleq1.
\]
The required higher moments follow from the same coordinate character sum as
Lemma~\ref{lem:moments}.

\begin{lemma}[Factorial dual moments]
\label{lem:factorial-moments}
For every \(N\)-point code \(\C\subseteq G^n\) and \(1\leq j\leq n\),
\begin{equation}
\label{eq:factorial-moments}
 \sum_{w=1}^n w^{\underline j}\Adual_w
 =\frac{q^{n-j}}N
 \sum_{i=0}^j(-1)^i
 j^{\underline i}(n-i)^{\underline{j-i}}
 (q-1)^{j-i}A_i.
\end{equation}
\end{lemma}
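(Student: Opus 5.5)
I would follow the proof of Lemma~\ref{lem:moments}: expand the definition \eqref{eq:dual} over ordered pairs of codewords, reducing everything to one coordinatewise character sum.

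\emph{Step 1: reduce to a character sum.} Writing \(|\wh{\one_\C}(\xi)|^2=\sum_{z,z'\in\C}\chi_\xi(z'-z)\), we get
\[
 \sum_{w}w^{\underline j}\Adual_w=\frac1{N^2}\sum_{z,z'\in\C}\ \sum_{\xi}\wt(\xi)^{\underline j}\chi_\xi(z'-z).
\]
The term \(\wt(\xi)^{\underline j}\) counts ordered \(j\)-tuples of distinct coordinates \((i_1,\dots,i_j)\) at which \(\xi\) is nontrivial. Interchanging sums, it suffices to evaluate, for fixed \(x=z'-z\) and a fixed ordered set \(J\) of \(j\) distinct coordinates,
\[
 S_J(x)=\sum_{\xi:\ \xi_i\neq\gamma_0\ (i\in J)}\chi_\xi(x).
\]

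\emph{Step 2: factor the sum.} The sum factors over coordinates. Each coordinate outside \(J\) contributes \(\sum_{\gamma}\gamma(x_i)=q\one_{\{x_i=0\}}\). Each coordinate in \(J\) contributes \(\sum_{\gamma\ne\gamma_0}\gamma(x_i)\), which equals \(q-1\) if \(x_i=0\) and \(-1\) otherwise. Hence \(S_J(x)\) vanishes unless \(\operatorname{supp}(x)\subseteq J\). In that case, with \(i=\wt(x)\),
\[
 S_J(x)=q^{n-j}(q-1)^{j-i}(-1)^i.
\]

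\emph{Step 3: count and normalize.} For \(x\) of weight \(i\le j\), the number of ordered \(j\)-tuples of distinct coordinates containing \(\operatorname{supp}(x)\) is obtained by choosing which \(i\) of the \(j\) ordered slots carry the support coordinates, in order, and then filling the remaining slots in order from the other \(n-i\) coordinates. This gives \(j^{\underline i}(n-i)^{\underline{j-i}}\). Summing over pairs \((z,z')\) grouped by distance gives \(NA_i\) pairs at distance \(i\). Dividing by \(N^2\) then yields \eqref{eq:factorial-moments}.

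As a sanity check, \(j=1,2\) reproduce \eqref{eq:first-moment} and \eqref{eq:second-moment}: for \(j=2,\ i=1\) the count is \(2(n-1)\), and for \(i=2\) it is \(2\). The only delicate point is the ordered-tuple count in Step 3, including that \(j^{\underline i}=0\) for \(i>j\), which is why the sum stops at \(i=j\). There is no real obstacle beyond this bookkeeping.
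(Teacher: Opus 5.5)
Your proposal is correct and follows the same route as the paper. You expand over ordered codeword pairs, read \(\wt(\xi)^{\underline j}\) as counting ordered \(j\)-tuples of distinct nontrivial coordinates, and factor the character sum coordinatewise to get \((-1)^i(q-1)^{j-i}q^{n-j}\) when the support lies in the chosen coordinates. You then count those tuples as \(j^{\underline i}(n-i)^{\underline{j-i}}\). Your write-up simply makes the paper's terse argument explicit, and the \(j=1,2\) check against Lemma~\ref{lem:moments} is a correct consistency test.
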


\begin{proof}
The factor \(\wt(\xi)^{\underline j}\) counts ordered \(j\)-tuples of
distinct coordinates on which \(\xi\) is nontrivial.  Fix a difference
vector of weight \(i\).  The chosen coordinates must contain its support;
otherwise the character sum vanishes.  If \(i\leq j\), the number of ordered
choices containing that support is
\[
 j^{\underline i}(n-i)^{\underline{j-i}}.
\]
Character orthogonality contributes \(-1\) on each of its \(i\) nonzero
coordinates, \(q-1\) on each remaining chosen coordinate, and \(q\) on every
unchosen coordinate.
Summing over ordered codeword pairs at distance \(i\) gives
\eqref{eq:factorial-moments}.
\end{proof}

At the binary Golay parameters,
\[
 n=23,\qquad q=2,\qquad N=2^{12},
\]
put
\[
 M_j\triangleq\sum_{w=1}^{23}w^{\underline j}\Adual_w.
\]
The mass identity and Lemma~\ref{lem:factorial-moments} give
\begin{align}
 M_0&=2047,\label{eq:binary-golay-m0}\\
 M_1&=23\,552-1\,024A_1,\label{eq:binary-golay-m1}\\
 M_2&=259\,072-22\,528A_1+1\,024A_2,\label{eq:binary-golay-m2}\\
 M_3&=2\,720\,256-354\,816A_1+32\,256A_2-1\,536A_3,
 \label{eq:binary-golay-m3}\\
 M_4&=27\,202\,560-4\,730\,880A_1+645\,120A_2
       -61\,440A_3+3\,072A_4.
 \label{eq:binary-golay-m4}
\end{align}

The radius-three ball has volume
\[
 V_3=1+\binom{23}{1}+\binom{23}{2}+\binom{23}{3}
 =2048=2^{11},
\]
and its Lloyd polynomial factors as
\begin{equation}
\label{eq:binary-golay-lloyd}
 c_w(3)=K_3^{(22,2)}(w-1)
 =-\frac43(w-8)(w-12)(w-16).
\end{equation}
Thus the three nonconstant Lloyd roots are \(8,12,16\).

Define
\begin{equation}
\label{eq:binary-golay-minorant}
 p(w)\triangleq
 705\,432-8\,992(w-12)^2+10\,000(w-12)^4.
\end{equation}
In the falling-factorial basis,
\begin{equation}
\label{eq:binary-golay-falling}
\begin{aligned}
 p(w)={}&206\,770\,584-60\,743\,184w
       +7\,261\,008w^{\underline2}\\
       &-420\,000w^{\underline3}
       +10\,000w^{\underline4}.
\end{aligned}
\end{equation}
The closed binary formula \eqref{eq:W-binary-closed} and the symmetry
\eqref{eq:binary-symmetry} give the following complete gap certificate.
Only \(1\leq w\leq12\) need be displayed, because both \(W_{23,2}(w)\)
and \(p(w)\) are invariant under \(w\mapsto24-w\).  Put
\[
\Delta(w)\triangleq W_{23,2}(w)-p(w).
\]
\[
\begin{array}{c|r@{\qquad}c|r}
w&\Delta(w)&w&\Delta(w)\\ \hline
1&2\,103\,952\,936\,320&2&48\,832\,727\,808\\
3&3\,514\,842\,240&4&417\,939\,456\\
5&62\,568\,576&6&8\,989\,440\\
7&713\,088&8&0\\
9&180\,096&10&187\,136\\
11&66\,176&12&0
\end{array}
\]
Consequently,
\[
 p(w)\leq W_{23,2}(w)\qquad(1\leq w\leq23),
\]
with equality exactly at \(w=8,12,16\).

Substituting \eqref{eq:binary-golay-m0}--\eqref{eq:binary-golay-m4}
into \eqref{eq:binary-golay-falling} gives
\begin{equation}
\label{eq:binary-golay-moment-evaluation}
\begin{aligned}
 \sum_{w=1}^{23}\Adual_wp(w)
 ={}&3\,277\,860\,456
  +338\,952\,192(A_1+A_2)\\
 &+30\,720\,000(A_3+A_4).
\end{aligned}
\end{equation}
The spectral formula therefore yields the stronger bound
\begin{equation}
\label{eq:binary-golay-certificate-bound}
\begin{aligned}
 \Dtwo(\C)
 &\geq
 \frac{
 3\,277\,860\,456
 +338\,952\,192(A_1+A_2)
 +30\,720\,000(A_3+A_4)}
 {2^{23}}\\
 &\geq
 \frac{3\,277\,860\,456}{2^{23}}
 =\frac{409\,732\,557}{1\,048\,576}.
\end{aligned}
\end{equation}

If \(\C\) is a perfect three-error-correcting code, then its minimum distance is
at least seven, so \(A_1=\cdots=A_4=0\), and
Proposition~\ref{prop:lloyd-support} restricts its nonconstant spectrum to
\(\{8,12,16\}\).  Since \(p=W_{23,2}\) there, every inequality in
\eqref{eq:binary-golay-certificate-bound} is an equality.

Conversely, equality in the final bound forces \(A_1=\cdots=A_4=0\), and
the strict gaps above restrict the nonconstant Fourier support to
\(\{8,12,16\}\).  Equation~\eqref{eq:binary-golay-lloyd} then shows that
every nontrivial Fourier coefficient of
\(\one_\C*\one_{B(0,3)}\) vanishes.  At the trivial character,
\[
 NV_3=2^{12}2^{11}=2^{23}.
\]
Fourier inversion gives
\[
 \one_\C*\one_{B(0,3)}=\one_{\F_2^{23}},
\]
so \(\C\) is a perfect three-error-correcting code.

\begin{remark}[The cubic obstruction]
\label{rem:binary-golay-cubic}
The quadratic interpolant of \(W_{23,2}\) at the Lloyd roots is
\[
 I_2(w)=705\,432+151\,008(w-12)^2.
\]
Every cubic through the same three points has the form
\[
 I_2(w)+\tau(w-8)(w-12)(w-16).
\]
Minorization at \(w=9\) requires \(\tau\leq-21\,424\), whereas
minorization at \(w=15\) requires \(\tau\geq21\,424\).  Thus no cubic can
play the role of \(p\).
\end{remark}

\section*{Data and code availability}
An exact deterministic verification companion for the finite certificate
calculations is publicly archived on Zenodo
\cite{Zabokritskiy2026Verifier}.  It reproduces the ternary telescoping
certificate and its initial values, the two-error formal quadrature, the
complete ternary and binary Golay gap tables, and the stated polynomial and
moment identities using exact integer, rational, and symbolic arithmetic.
No numerical optimization, randomness, or external data is used.  The
mathematical arguments in the manuscript are independent of the verifier.

\bibliographystyle{plain}
\bibliography{references}

\end{document}